\newtheorem{assumption}{Assumption}
\newcommand{\eproof}{\hfill $\Box$}
\newcommand{\diverge}{\to\infty}
\newcommand{\ones}{\mathbf 1}
\newcommand{\reals}{{\mathbb{R}}}
\newcommand{\pth}[1]{\left( #1 \right)}
\newcommand{\iprod}[2]{\left \langle #1, #2 \right\rangle}
\newcommand{\tx}{{\widetilde{x}}}
\newcommand{\calA}{{\mathcal{A}}}
\newcommand{\calC}{{\mathcal{C}}}
\newcommand{\calE}{{\mathcal{E}}}
\newcommand{\calF}{{\mathcal{F}}}
\newcommand{\calH}{{\mathcal{H}}}
\newcommand{\calI}{{\mathcal{I}}}
\newcommand{\calL}{{\mathcal{L}}}
\newcommand{\calN}{{\mathcal{N}}}
\newcommand{\calR}{{\mathcal{R}}}
\newcommand{\calS}{{\mathcal{S}}}
\newcommand{\calV}{{\mathcal{V}}}
\newcommand{\calX}{{\mathcal{X}}}
\newcommand{\argmin}{{\rm argmin}}
\newcommand{\argmax}{{\rm argmax}}
\begin{document}

\title{Fault-Tolerant Distributed Optimization (Part IV): \\
Constrained Optimization with Arbitrary Directed Networks
\thanks{This research is supported in part by National Science Foundation award NSF 1329681. 
Any opinions, findings, and conclusions or recommendations expressed here are those of the authors
and do not necessarily reflect the views of the funding agencies or the U.S. government.}}

\author{Lili Su \hspace*{1in} Nitin H. Vaidya}
\institute{Department of Electrical and Computer Engineering, and\\
Coordinated Science Laboratory\\
University of Illinois at Urbana-Champaign\\
Email:\{lilisu3, nhv\}@illinois.edu}
\maketitle

\begin{center}
Technical Report\\
~\\

\today
\end{center}
~

\centerline{\bf Abstract}

We study the problem of constrained distributed optimization in multi-agent networks when some of the computing agents may be faulty. In this problem, the system goal is to have all the non-faulty agents collectively minimize a global objective given by weighted average of local cost functions, each of which is initially known to a non-faulty agent only.
In particular, we are interested in the scenario when the computing agents are connected by an {\em arbitrary directed communication network}, some of the agents may suffer from {\em crash faults} or {\em Byzantine faults}, and the estimate of each agent is restricted to lie in a common constraint set.
This problem finds its applications in social computing and distributed large-scale machine learning.

It was shown in \cite{su2015byzantine} that it is {\em impossible} to optimize the {\em exact average} of the local functions at all the non-faulty agents. With this observation, the fault-tolerant multi-agent optimization problem
was first formulated in \cite{su2015byzantine} by introducing two problem parameters $\beta$ and $\gamma$, with $\beta$ as weight threshold and $\gamma$ as the minimum number of weights that exceed the threshold $\beta$ \cite{su2015byzantine}. The problem parameters $\beta$ and $\gamma$ together characterize the system performance.
We focus on the family of algorithms considered in \cite{su2015fault}, where only local communication and minimal memory carried across iterations are allowed.
In particular, we generalize our previous results on fully-connected networks and unconstrained optimization  \cite{su2015fault}  to arbitrary directed networks and constrained optimization.
%
%
%
%
%
%
%
As a byproduct, we provide a matrix representation for iterative approximate crash consensus. The matrix representation allows us to characterize the convergence rate for crash iterative consensus.

\begin{keywords}
Distributed optimization; multi-agent systems; fault-tolerant computing; incomplete networks; crash faults; Byzantine faults; adversarial attack
\end{keywords}

\section{Introduction}
\label{sec:intro}

There has been significant research on the problem of distributed optimization in multi-agent systems \cite{Duchi2012,nedic2015distributed,Nedic2009,Tsianos2012}. In a multi-agent system, each agent initially knows a convex cost function, and is capable of performing local computation as well as local communication, i.e., each agent can only exchange information with its neighbors.
 The system goal is to have all the agents collaboratively minimize the global objective given by the average of all the local cost functions. One application of this problem lies in social resource allocation, where each agent has his own cost function, and the system/society goal is to find a resource allocation solution such that the average of each of the agents' costs is minimized. Another application can be found in the domain of large scale distributed machine learning, where data are generated at different locations. The above multi-agent optimization problem is well-studied under the assumption that {\em every} computing agent is reliable throughout the execution \cite{Duchi2012,nedic2015distributed,Nedic2009,Tsianos2012}. However, as the complexity of multi-agent networks increases, it is becoming harder and harder to meet this assumption. In particular, due to the distributed fashion in data processing, some data may be lost during processing or be tampered by malicious local data managers. 
The need for robustness for distributed optimization problems has received some attentions recently \cite{Duchi2012}. 
In particular, Duchi et al.\ \cite{Duchi2012} studied the impact of random communication link failures on the convergence of distributed variant of dual averaging algorithm. Specifically, each realizable link failure pattern considered in \cite{Duchi2012} is assumed to admit a doubly-stochastic matrix which governs
the evolution dynamics of local estimates of the optimum.
However, we are not aware of
prior work that obtains the results presented in this report except our companion work \cite{su2015byzantine,su2015fault}, where both Byzantine faults and crash faults are considered, and the network is assumed to be fully-connected. 
 Agents suffering crash faults can unexpectedly stop participating in the prescribed protocol/algorithm, and agents suffering Byzantine faults \cite{Lynch:1996:DA:525656} can behave arbitrarily, and adversarially try to degrade the behavior of the system.  
In this report, we consider arbitrary directed communication networks. 
Similar algorithm structures and arbitrary directed networks are also considered in \cite{DBLP:journals/corr/SuV15a}, where the local functions are redundant in terms of the encoded information about input functions,
 and the goal is to have all the non-faulty agents collaboratively minimize the average of all the input functions.

Recently, Sundaram et al. \cite{Sundaram2015} looked at a similar problem
but with the faulty agents restricted to broadcasting their messages (sending identical messages) to their outgoing neighbors, and the global objective is simply a convex combination of local cost functions at the non-faulty agents. Their algorithm is shown to reach consensus on a value in the convex hull of the optima of the non-faulty functions. We show stronger guarantees.  
In addition, their results are based on the assumption that every update matrix has a common left eigenvector corresponding to eigenvalue 1.
In contrast, we consider the general Byzantine fault model, where the faulty agents can send different messages to different outgoing neighbors via point to point communication. Note that our fault model incorporates the fault model in \cite{Sundaram2015} as a special case. 
Among all the convex combination objectives, we measure the ``quality" of a global objective by two parameters ($\beta$ and $\gamma$, discussed later). Our focus is on design and analysis of the algorithms that are capable of solving high ``quality" objectives. In addition, our results do not rely on the common left eigenvector assumption.


\paragraph{\bf Contribution:} The main contributions of this report are two-fold. The first contribution is to generalize the results obtained in our previous work \cite{su2015fault} derived for fully-connected networks to arbitrary directed networks. The impact of constraints on the local estimates is also considered. The second contribution is to provide a matrix representation for the iterative approximate crash consensus, under which an explicit convergence rate is derived.  \\

Next, we formally restate the fault-tolerant multi-agent optimization problem, proposed in \cite{su2015byzantine}.
\subsection{ Problem Formulation}

The system under consideration is synchronous, and consists of $n$ agents connected by an {\em arbitrary directed} communication network $G(\calV,\calE)$, where $\calV=\{1,\dots,n\}$ is the set of $n$ agents, and $\calE$ is the set of directed edges between the agents in $\calV$. 

Up to $f$ of the $n$ agents may be faulty. Two fault models, crash faults and Byzantine faults, are considered respectively.
Agents suffering crash faults can unexpectedly stop participating in the prescribed protocol/algorithm, and agents suffering Byzantine faults \cite{Lynch:1996:DA:525656} can behave arbitrarily, and adversarially try to degrade the behavior of the system.
Let $\calF$ denote the set of faulty agents in a given execution and let $\calN=\calV-\calF$.
 The set $\calF$ of faulty agents may be chosen by an adversary arbitrarily.  Agent $i$ can reliably transmit messages to agent $j$ if and only if
the directed edge $(i,j)$ is in $\calE$.
Each agent can send messages to itself as well, however,
for convenience, we {\em exclude self-loops} from set $\calE$.
That is, $(i,i)\not\in\calE$ for $i\in\calV$.
With a slight abuse of terminology, we will use the terms {\em edge}
and {\em link} interchangeably, and use the terms {\em nodes}
and {\em agents} interchangeably in our presentation.

For each agent $i$, let $N_i^-$ be the set of agents from which $i$ has incoming
edges.
That is, $N_i^- = \{\, j ~|~ (j,i)\in \calE\, \}$.
Similarly, define $N_i^+$ as the set of agents to which agent $i$
has outgoing edges. That is, $N_i^+ = \{\, j ~|~ (i,j)\in \calE\, \}$.
Since we exclude self-loops from $\calE$,
$i\not\in N_i^-$ and $i\not\in N_i^+$.
However, we note again that each agent can indeed send messages to itself.
Agent $j$ is said to be an {\em incoming neighbor} of agent $i$,
if $j\in N_i^-$. Similarly, $j$ is said to be an {\em outgoing neighbor}
of agent $i$, if $j\in N_i^+$. In addition, define $d_i^{-}=|N_i^-|$ to be the incoming degree of agent $i$, and define $d_i^{+}=|N_i^+|$ to be the outgoing degree of agent $i$.

%
%
Let $\calX\subseteq \reals$ be a nonempty closed and convex set.
We say that a function $h: \calX\rightarrow \mathbb{R}$ is {\em admissible} if (i) $h(\cdot)$ is convex, continuously
differentiable, and has $L$-Lipschitz gradient, where $L$ is a fixed constant,
(ii) $h(\cdot)$ has bounded gradient, i.e., $|h^{\prime}(x)|\le L$ for each $x\in \calX$,
and (iii) the set $\arg\min_{x\in \calX} h(x)$ containing the optima of $h(\cdot)$
is non-empty and compact (i.e., bounded and closed). The bounded gradient assumption holds when $\calX$ is compact.
Each agent $i\in \calV$ is initially provided with an {\em admissible} local cost function $h_i: \calX\rightarrow\mathbb{R}$.

When $f=0$, i.e., every agent is guaranteed to be reliable, one commonly adopted global objective \cite{Duchi2012,nedic2015distributed,Nedic2009,Tsianos2012} is
\begin{align}
\label{objective f=0}
\frac{1}{n} \sum_{i=1}^n h_i(x),
\end{align}
i.e., the agents should collaboratively minimize the average cost of individual agents' costs. In distributed machine learning, this objective corresponds to the requirement that the data collected at different locations (local functions) be utilized equally in order to reduce bias. When $f>0$, the global objective (\ref{objective f=0}) cannot be minimized.
 This is because if an agent $i$ crashes, in particular crashes at the very beginning of an execution, then any information about the local function $h_i(x)$ is unavailable to the other agents. Also, since a Byzantine agent can lie arbitrarily about its local function, choosing the global objective to be (\ref{objective f=0}) may result in the system's output systematically biased by the Byzantine agents. Thus, under crash fault model, a proper global objective should be a {\em convex combination} of local functions -- in which a local function kept by a crashed agent may have weight 0.
  Under Byzantine fault model, a proper global objective should be a convex combination of {\em untampered functions} (the functions initially known by non-faulty agents) only. Two slightly different formulations, namely Problem 1 and Problem 2, initially proposed in \cite{su2015byzantine}, are formally described in (\ref{objective crash}) and (\ref{objective Byzantine}), respectively, with $\gamma$ and $\beta$ as problem parameters to characterize how good a proper global objective is.\\

\begin{align}
\label{objective crash}
{\bf Problem \, 1:}~ \text{Under crash}& \text{\, fault model, each non-faulty agent outputs}\nonumber \\
~~~~~~~~~~&\tx  \in \arg \min_{x\in \calX}\quad \sum_{i\in \calV} \alpha_i h_i(x)\\
\text{such that} & \nonumber \\
&\forall i\in\calV, ~ \alpha_i\geq 0,  \nonumber \\
&\sum_{i\in \calV}\alpha_i=1, \text{~~and~~} \sum_{i\in\calN} {\bf 1}(\alpha_i \ge \beta) ~ \geq ~ \gamma \nonumber
\end{align}
Problem 1 requires that the output $\tx$ be an optimum of a function formed as
a convex combination of local cost functions. More precisely, for some choice
of weights $\alpha_i$ for $i\in \calV$ such that $\alpha_i\geq 0$ and $\sum_{i\in \calV}\alpha_i=1$,
the output must be an optimum of the weighted cost function $\sum_{i\in \calV} \alpha_i\,h_i(x)$, with the coefficients $\alpha_i$'s representing the system's utilization level of untampered data (local functions). 
In addition, Problem 1 requires that a large enough number of untampered local functions (at least $\gamma$) be used nontrivially (i.e., with their weights lower bounded by $\beta$).
Note that
${\bf 1}\{\alpha_i \ge \beta\}$ is an indicator function that outputs 1 if $\alpha_i \ge\beta$,
and 0 otherwise.
%
Under Byzantine fault model, the problem is slightly different from the one under crash fault model. In particular, in Problem 2, the summation of the local functions is taken over non-faulty agents $\calN$ instead of taking over all the agents $\calV$ in (\ref{objective crash}). Both Problem 1 and Problem 2 require enough number of {\em non-faulty} functions have non-trivial weights.

\begin{align}
\label{objective Byzantine}
{\bf Problem \, 2:}~ \text{Under Byzantine}& \text{\, fault model, each non-faulty agent outputs}\nonumber \\
~~~&\tx  \in \arg \min_{x\in \calX}\quad \sum_{i\in \calN} \alpha_i h_i(x)\\
\text{such that} & \nonumber \\
&\forall i\in\calN, ~ \alpha_i\geq 0,  \nonumber \\
&\sum_{i\in \calN}\alpha_i=1, \text{~~and~~} \sum_{i\in\calN} {\bf 1}(\alpha_i \ge \beta) ~ \geq ~ \gamma \nonumber
\end{align}

Our problem formulations require that (in the time limit) all non-faulty agents output identical $\tx\in\mathbb{R}$, while
satisfying the constraints imposed by the problem (as listed in (\ref{objective crash})  and (\ref{objective Byzantine})).
Thus, the traditional crash consensus and Byzantine consensus \cite{impossible_proof_lynch} problems, which also impose a similar
{\em agreement} condition, are a special cases of our Problem 1 and Problem 2, respectively \cite{su2015byzantine}.

\subsection{Related Work}\label{sec: related work}

Fault-tolerant consensus \cite{PeaseShostakLamport} is a special case of the optimization problem considered in this report. There is a significant body of work on fault-tolerant consensus, including \cite{Chaudhuri92morechoices,Dolev:1986:RAA:5925.5931,fekete1990asymptotically,friedman2007asynchronous,mostefaoui2003conditions,vaidya2012iterative}.
The optimization algorithms presented in this report use fault-tolerant consensus as a component.

Convex optimization, including distributed convex optimization, also has a long history \cite{bertsekas1989parallel}. However, we are not aware of
prior work that obtains the results presented in this report except \cite{su2015byzantine,DBLP:journals/corr/SuV15a,su2015fault}.
Primal and dual decomposition methods that led themselves naturally to a distributed paradigm are well-known \cite{Boyd2011}. 
There has been significant research on a variant of distributed optimization problem \cite{Duchi2012,Nedic2009,Nedic2010,Tsianos2012}, in which the global objective $h(x)$ is a summation of $n$ convex functions, i.e, $h(x)=\sum_{j=1}^n h_j(x)$, with function $h_j(x)$ being known to the $j$-th agent. The need for robustness for distributed optimization problems has received some attentions recently \cite{Duchi2012,kailkhura2015consensus,marano2009distributed,su2015byzantine,DBLP:journals/corr/SuV15a,zhang2014distributed}. In particular, Duchi et al.\ \cite{Duchi2012} studied the impact of random communication link faults on the convergence of distributed variant of dual averaging algorithm. Specifically, each realizable link fault pattern considered in \cite{Duchi2012} is assumed to admit a doubly-stochastic matrix which governs
the evolution dynamics of local estimates of the optimum.

We considered Byzantine faults and crash faults in \cite{su2015byzantine,DBLP:journals/corr/SuV15a,su2015fault}. In particular,  \cite{su2015byzantine,DBLP:journals/corr/SuV15a} considered Byzantine faults under synchronous systems, and \cite{su2015fault} considered both Byzantine faults and crash faults under synchronous systems, with results partially generalizable to asynchronous systems. It is showed in \cite{su2015byzantine} under Byzantine faults that at most $|\calN|-f$ non-faulty functions can have non-zero weights. This observation led to the formulation of Problem 2 in  (\ref{objective Byzantine}). Six algorithms were proposed in \cite{su2015byzantine}. Algorithms with alternative structure, where only local communication is needed, is proposed in \cite{su2015fault} for crash faults and Byzantine faults, respectively.
We showed in \cite{DBLP:journals/corr/SuV15a} that when there are sufficient redundancy in the input functions (each input function is not exclusively kept by a single agent), it is possible to solve (\ref{objective f=0}), where the summation is taken over all input functions. In addition, a simple low-complexity iterative algorithm was proposed in \cite{DBLP:journals/corr/SuV15a}, and a tight topological condition for the existence of such iterative algorithms is identified.

Concurrently, Sundaram et al. \cite{Sundaram2015} looked at a similar problem with different focuses, where the faulty agents are restricted to broadcasting their messages (sending identical messages) to their outgoing neighbors, and the global objective is simply a convex combination of local cost functions at the {\em non-faulty} agents. Their algorithm performance is equivalent to simply running iterative Byzantine consensus on the local optima. In addition, their results are based on the assumption that every update matrix has a common left eigenvector corresponding to eigenvalue 1.
%

\subsection{Preliminaries}
Let $\calX\subseteq \reals$ be a nonempty set. Denote $Dist\pth{x, \calX}$ to be the standard Euclidean distance of $x$ from the set $\calX$.
\begin{align}
\label{dist}
Dist\pth{x, \calX}=\inf_{y\in \calX} |x-y|.
\end{align}
Henceforth, we assume that the set $\calX$ is nonempty, closed and convex.
We use $P_{\calX}[x]$ to denote the projection of the point $x$ on the set $\calX$, i.e.,
\begin{align*}
P_{\calX}[x]=\arg\min_{z\in \calX}\, \left | z-x\right |.
\end{align*}
We use the projection inequality and non-expansiveness properties,
i.e., for any $x$,
\begin{align}
&\pth{P_{\calX}[x]-x}\pth{y-P_{\calX}[x]}\ge 0~~~~\text{for all }y\in \calX.\label{proj1}\\
&\left | P_{\calX}[x]-P_{\calX}[y]\right |\le \left | x-y\right |~~~~\text{for all $x$ and $y$.}\label{nonexpansive}
\end{align}
We also use the properties given in the following lemma, previously proved by Nedic et al. in \cite{Nedic2010}.
\begin{lemma}\cite{Nedic2010}
\label{proj p}
Let $\calX$ be a nonempty closed convex set in $\reals$. For any $x\in \reals$, the following holds
\begin{align}
&(a)\quad \pth{P_{\calX}[x]-x}(x-y)\le -\left |P_{\calX}[x]-x \right |^2 ~~~~\text{for all $y\in \calX$},\\
&(b)\quad \left |P_{\calX}[x]-y \right |^2 \le \left | x-y\right |^2-\left | P_{\calX}[x]-x\right |^2~~~~\text{for all $y\in \calX$}.
\end{align}

\end{lemma}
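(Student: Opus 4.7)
The plan is to derive both parts directly from the basic projection inequality \eqref{proj1} by purely algebraic manipulation; no additional machinery is needed since the statements are the classical "cosine rule" consequences of the variational characterization of the projection. Write $p \coloneqq P_{\calX}[x]$ throughout, so that \eqref{proj1} reads $(p-x)(y-p)\ge 0$ for every $y\in \calX$.

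For part (a), I would start by splitting $x-y$ around the projection point, i.e., $x-y=(x-p)+(p-y)$, and multiply through by $(p-x)$:
\begin{align*}
(p-x)(x-y) \;=\; (p-x)(x-p)+(p-x)(p-y) \;=\; -\left|p-x\right|^2 - (p-x)(y-p).
\end{align*}
The second term is $\le 0$ by the projection inequality \eqref{proj1}, which gives $(p-x)(x-y)\le -\left|p-x\right|^2$, as desired.

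For part (b), I would expand $|p-y|^2$ by inserting $x$:
\begin{align*}
\left|p-y\right|^2 \;=\; \left|(p-x)+(x-y)\right|^2 \;=\; \left|p-x\right|^2 + 2(p-x)(x-y) + \left|x-y\right|^2.
\end{align*}
Plugging in the bound from (a), namely $2(p-x)(x-y)\le -2\left|p-x\right|^2$, collapses the right-hand side to $\left|x-y\right|^2 - \left|p-x\right|^2$, which is the claim.

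I do not anticipate a genuine obstacle here: the argument is a two-line computation once one recognizes that (a) is the "only if" half of the projection characterization rewritten with the substitution $y\mapsto x-y+p$, and that (b) is just the Pythagorean-type identity combined with (a). The only subtlety worth checking is sign bookkeeping, which is trivial in the scalar setting $\calX\subseteq\reals$ considered here; the same proof goes through in any Hilbert space with inner products in place of products.
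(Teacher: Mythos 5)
Your proof is correct. The paper itself gives no proof of this lemma---it is simply imported from Nedic et al.\ \cite{Nedic2010}---and your two-line derivation of (a) from the projection inequality (\ref{proj1}) followed by the expansion of $\left|P_{\calX}[x]-y\right|^2$ to get (b) is exactly the standard argument behind the cited result; the sign bookkeeping checks out.
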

%
%
%

\section{Byzantine Fault Tolerance}
As Byzantine fault model is more general than crash fault model, we study Byzantine failure first. The proof ideas in this section can be adapted to crash failures.
In this section, we analyze the performance of two algorithms, namely Algorithm 1 and Algorithm 2, which were proposed in \cite{DBLP:journals/corr/SuV15a} and \cite{su2015fault}, respectively. Note that Algorithm 1 is essentially a combination of the distributed optimization algorithm in \cite{Nedic2009} and the Byzantine consensus algorithm in \cite{vaidya2012IABC}.

We first briefly review the iterative approximate Byzantine consensus problem
, where only local communication, and minimal memory carried across iterations, are allowed.
\begin{definition}\cite{vaidya2012iterative}
\label{reduced Byzantine}
For a given graph $G(\calV, \calE)$, a reduced graph $\calH_b$ under Byzantine faults is a subgraph of $G(\calV, \calE)$ obtained by removing all the faulty agents from $\calV$ along with their edges; and (ii)
removing any additional up to $f$ incoming edges at each non-faulty agent.
\end{definition}
Let us denote the collection of all the reduced graphs for a given $G(\calV, \calE)$ by $R^b$. Thus, $\calN$ is the set of agents in each element in $R^b$. Let $\tau_b=|R^b|$. It is easy to see that
$\tau_b$ depends on $\calF$ as well as the underlying network $G(\calV, \calE)$, and it is finite. Let $\phi=\left | \calF\right |$. Thus $\phi\le f$.
\begin{definition}
A source component \footnote{The definition of a source is different from \cite{vaidya2012IABC}, wherein a source is defined as a strongly-connected component that cannot be reached by the outside nodes.} $S$ of a given graph $G(\calV, \calE)$ is the collection of agents each of which has a directed path to every other agent in $G(\calV, \calE)$.
\end{definition}
It can be easily checked that if a source component $S$ exists, it is a strongly-connected component in $G(\calV, \calE)$. In addition, a graph contains at most one source component.

\begin{theorem}\cite{vaidya2012IABC}
\label{Byzantine consensus}
Iterative approximate Byzantine consensus is solvable on $G(\calV, \calE)$ if and if every reduced graph (as per Definition \ref{reduced Byzantine}) of $G(\calV, \calE)$ has a source component.
\end{theorem}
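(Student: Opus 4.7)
\medskip
\noindent\textbf{Proof proposal.} I would prove the two directions separately, using the standard machinery for iterative approximate Byzantine consensus (IABC).

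For the necessity direction, I would argue the contrapositive: assume some reduced graph $\calH_b\in R^b$ has no source component, and exhibit an adversarial execution that prevents approximate consensus. If $\calH_b$ has no source, then its condensation (the DAG of strongly-connected components) has at least two ``terminal'' components, $S_1$ and $S_2$, such that no node outside $S_k$ can reach every node of $S_k$ within $\calH_b$. The adversary designates the at most $f$ agents that were removed in obtaining $\calH_b$ as Byzantine, and at each non-faulty receiver $i$ instructs the $f$ Byzantine incoming neighbors that were ``erased'' to send whatever values are needed to mimic the input that would have arrived in $\calH_b$; because the agents $i$ cannot locally distinguish these $f$ fabricated values from genuine ones, the executions with $\calH_b$ as the ``effective'' communication pattern are indistinguishable from executions on the true graph with a different choice of Byzantine behavior. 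Starting the nodes of $S_1$ at value $0$ and the nodes of $S_2$ at value $1$ (and all other nodes anywhere in $[0,1]$), the adversary can keep these two groups from ever influencing each other (since neither reaches the other in $\calH_b$), hence their intervals of states never contract below a positive gap, violating approximate consensus.

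For the sufficiency direction, I would take a concrete IABC algorithm, e.g.\ the trimmed-mean update: each non-faulty $i$ receives values from $N_i^-$, sorts them, discards the largest $f$ and smallest $f$, and sets its new value to a convex combination of its own value and the remaining received values with all weights bounded below by some constant $\beta_0>0$. The key step is to represent the one-step update of the vector of non-faulty states $v[t]\in\reals^{|\calN|}$ as $v[t+1]=M[t]\,v[t]$ for some row-stochastic $M[t]$ whose entries depend on the Byzantine values. The standard trick, as in Vaidya (2012), is to show that for each $t$ there exists a reduced graph $\calH_b[t]\in R^b$ such that every surviving value used in the trimmed mean at node $i$ can be written as a convex combination of values held by non-faulty incoming neighbors of $i$ in $\calH_b[t]$; consequently $M[t]$ can be lower-bounded entrywise by $\beta_0 \, A(\calH_b[t])$, where $A(\calH_b[t])$ is the (normalized) adjacency-plus-self-loop matrix of the reduced graph. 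Since $R^b$ is finite and each $\calH_b\in R^b$ has a source component by hypothesis, each such lower-bounding matrix is ``scrambling after a bounded number of products,'' and in particular the infinite product $M[t]M[t-1]\cdots M[0]$ satisfies the Wolfowitz/Hajnal conditions guaranteeing convergence to a rank-one matrix. This yields the standard conclusion: all non-faulty values converge to a common limit lying in the convex hull of the initial non-faulty values, which is precisely approximate IABC.

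The main obstacle is the \emph{matrix representation} step in sufficiency, i.e.\ showing that the trimmed-mean operation can be re-expressed as a genuine convex combination over non-faulty neighbors in \emph{some} reduced graph $\calH_b[t]$ that may depend on $t$ (and even on the values sent by Byzantine nodes). The standard argument pairs each ``extreme'' Byzantine-sent value that is not trimmed with an actual non-faulty value that sandwiches it, so that the Byzantine contribution can be rewritten as a convex combination of non-faulty contributions; this requires that there be at most $f$ Byzantine incoming neighbors per node, and that $f$ values are trimmed on each end. Once this combinatorial substitution is in place, the convergence of the non-homogeneous product $M[t]\cdots M[0]$ follows from a joint-connectivity-plus-positive-lower-bound argument that uses exactly the existence of a source component in every element of $R^b$. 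This matrix representation also lays the groundwork for the crash-consensus version that the paper develops later, so writing it carefully pays double.
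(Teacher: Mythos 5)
This theorem is stated in the paper as an imported result, cited to the IABC paper of Vaidya, Tseng, and Liang; the report contains no proof of it, so there is nothing internal to compare against. Your sufficiency direction is, however, essentially the standard proof and is exactly the machinery this report itself borrows downstream: the trimmed-mean update is rewritten as $v[t+1]={\bf M}[t]v[t]$ with ${\bf M}[t]$ row-stochastic and entrywise lower-bounded by a constant times the adjacency matrix of some reduced graph $\calH_b[t]\in R^b$ (compare (\ref{reducedgraph})), and weak ergodicity of the backward products then follows from the Hajnal/Wolfowitz coefficient-of-ergodicity argument, with the source-component hypothesis supplying the ``scrambling after $\nu$ steps'' property (compare Lemma \ref{lb} and Theorem \ref{convergencerate}). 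Your identification of the Byzantine-value sandwiching step as the crux of the matrix representation is accurate.

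The necessity direction as you have written it has a genuine gap. You treat the edges deleted in forming $\calH_b$ as if they all originate at the Byzantine agents, and have the adversary ``mimic'' $\calH_b$ by scripting those senders. But per Definition \ref{reduced Byzantine} a reduced graph removes the faulty agents \emph{and} up to $f$ additional incoming edges at each non-faulty agent, and those additional edges may come from \emph{non-faulty} neighbors whose transmissions the adversary cannot control. The impossibility argument must therefore be an indistinguishability argument centered on those non-faulty ``erased'' neighbors: a node in one extremal component $S_1$ receives at most $f$ values from outside $S_1\cup\calF$, and it cannot tell the execution in which those senders are truthful from the one in which they are Byzantine and fabricating the same values; validity in the second execution pins the node to the convex hull of $S_1$'s inputs, so $S_1$ and $S_2$ (the two components of the condensation of $\calH_b$ with no incoming edges, which exist precisely when no source component does --- they are source SCCs of the DAG, not ``terminal'' ones) can be held at $0$ and $1$ forever. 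Also note that this argument must be quantified over \emph{all} algorithms in the iterative, memoryless class, not just trimmed-mean-style ones, which is why the published proof routes through an equivalent partition condition on $(\calF,L,C,R)$ rather than directly through a chosen reduced graph. With that repair your outline matches the known proof.
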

If addition, it has been shown in \cite{vaidya2012IABC} that the source component in each reduced graph contains at least $f+1$ nodes.
Throughout this section, we assume that the underlying graph $G(\calV, \calE)$ satisfies the tight condition in Theorem \ref{Byzantine consensus}.
\begin{assumption}
\label{a1}
Every reduced graph of $G(\calV, \calE)$ under Byzantine faults contains a source component.
\end{assumption}

%
%
\begin{definition}
\label{validfByzantine}
Let $\calA(\beta, \gamma)$ be the collection of functions defined as follows:
\begin{align}
\nonumber
\calA(\beta, \gamma)~\triangleq ~\Bigg{\{}~p(x): p(x)&=\sum_{i\in \calN} \alpha_i h_i(x), ~\forall i\in\calN, ~ \alpha_i\geq 0,\\
&\sum_{i\in \calN}\alpha_i=1,\text{~~and~~}
\pth{\sum_{i\in\calN} {\bf 1}\left\{\alpha_i\ge \beta\right\}} ~ \geq ~ \gamma ~\Bigg{\}}
\label{Bvalid collection}
\end{align}

\end{definition}
Each function in $\calA(\beta, \gamma)$ is called a valid function for a given tuple $(\beta, \gamma)$.
Define
\begin{align}
\label{union opt set cons}
X(\beta, \gamma)~\triangleq ~ \cup_{p(x)\in \calA(\beta, \gamma)} \, \arg\min_{x\in \reals} \, p(x).
\end{align}
The next lemma characterizes the properties of set $X(\beta, \gamma)$. Note that for each valid function $p(\cdot)\in \calA(\beta, \gamma)$, the minimization in (\ref{union opt set cons}) is taken over the whole real line $\reals$, instead of the constraint set $\calX$.
\begin{lemma}
\label{convex B1}
If $\beta\le \frac{1}{|\calN|}$ and $\gamma\le |\calN|$, the set $X(\beta, \gamma)$ is convex and closed.
\end{lemma}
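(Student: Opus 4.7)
The main obstacle is that $\calA(\beta, \gamma)$, viewed as a set of weight vectors in the probability simplex on $\calN$, is \emph{not} convex: the cardinality constraint ``at least $\gamma$ coordinates exceed $\beta$'' fails to be preserved under convex combinations. To circumvent this, my plan is to stratify $X(\beta, \gamma)$ according to which coordinates witness the ``heavy'' condition. For each $S \subseteq \calN$ with $|S| \geq \gamma$, define
\[
W_S = \Big\{\alpha \in \reals^{|\calN|} : \alpha_i \geq \beta \text{ for } i \in S,\ \alpha_i \geq 0 \text{ for } i \notin S,\ \textstyle\sum_{i\in\calN}\alpha_i = 1\Big\},
\]
and $X_S = \bigcup_{\alpha \in W_S} \arg\min_{x \in \reals} \sum_{i \in \calN} \alpha_i h_i(x)$. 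Each $W_S$ is convex and compact, and $X(\beta, \gamma) = \bigcup_{|S|\geq\gamma} X_S$, a finite union over subsets of $\calN$.

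The technical core is showing each $X_S$ is convex (i.e., an interval in $\reals$). Given $x_1 < x_2$ in $X_S$ realized by $\alpha^{(1)}, \alpha^{(2)} \in W_S$, and an arbitrary $y \in (x_1, x_2)$, I would follow the linear path $\alpha^{(\mu)} = \mu\alpha^{(1)} + (1-\mu)\alpha^{(2)}$, which stays in $W_S$ by convexity, and set $p_\mu = \sum_i \alpha_i^{(\mu)} h_i$. Convexity and differentiability give $p_1'(x_1) = p_2'(x_2) = 0$, so monotonicity of the derivatives yields $p_1'(y) \geq 0$ and $p_2'(y) \leq 0$. The map $\mu \mapsto p_\mu'(y) = \mu p_1'(y) + (1-\mu) p_2'(y)$ is linear on $[0,1]$, hence by the intermediate value theorem it vanishes at some $\mu^* \in [0,1]$, so that $y \in \arg\min p_{\mu^*} \subseteq X_S$.

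To glue the $X_S$'s into a single convex set, I would exploit the uniform weighting $\bar\alpha_i = 1/|\calN|$. The hypothesis $\beta \leq 1/|\calN|$ is precisely what places $\bar\alpha$ in $W_S$ for every $S$ with $|S| \geq \gamma$; consequently, the set $\arg\min \frac{1}{|\calN|}\sum_i h_i(x)$ lies inside every $X_S$. (The condition $\gamma \leq |\calN|$ is used here to ensure at least one such $S$ exists, i.e., the union is non-vacuous.) A finite union of intervals in $\reals$ sharing a common point is itself an interval, so combined with the convexity of each $X_S$ this proves convexity of $X(\beta, \gamma)$.

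Finally, closedness of $X(\beta, \gamma)$ reduces to closedness of each $X_S$, since the union is finite. For this step, I would take a convergent sequence $x_k \to x$ with $x_k \in X_S$ realized by $\alpha^{(k)} \in W_S$; compactness of $W_S$ yields a subsequence $\alpha^{(k)} \to \alpha^* \in W_S$. Passing to the limit in the optimality inequality $\sum_i \alpha_i^{(k)} h_i(x_k) \leq \sum_i \alpha_i^{(k)} h_i(y)$, valid for every $y \in \reals$, and invoking continuity of each $h_i$, shows that $x$ minimizes $\sum_i \alpha_i^* h_i$, so $x \in X_S$ as desired.
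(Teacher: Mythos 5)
The paper does not actually prove Lemma~\ref{convex B1}: it only remarks that the proof is ``similar to the proof of Lemma 10 and Lemma 11 in [su2015fault],'' so there is no in-text argument to compare against line by line. Judged on its own terms, your proof is correct and self-contained, and it addresses head-on the one real difficulty, namely that the feasible weight set is not convex because of the cardinality constraint $\sum_{i}\indc{\alpha_i\ge\beta}\ge\gamma$. Your stratification into the convex compact slices $W_S$ indexed by witness sets $S$ with $|S|\ge\gamma$, the intermediate-value argument on $\mu\mapsto p_\mu'(y)$ to show each $X_S$ is an interval, the use of $\beta\le 1/|\calN|$ to place the uniform weight vector in every $W_S$ (so the finitely many intervals share a point and their union is again an interval), and the compactness-of-$W_S$ limit argument for closedness are all sound. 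This is a legitimate and arguably cleaner route than trying to convexly combine two arbitrary valid weight vectors directly, which would degrade the threshold $\beta$.

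One point you should make explicit: your gluing step needs $\arg\min_{x\in\reals}\frac{1}{|\calN|}\sum_{i\in\calN}h_i(x)$ to be nonempty, and more generally the lemma is only meaningful when the unconstrained minimizers of valid functions exist. The paper's admissibility condition, as literally stated, only guarantees that $\arg\min_{x\in\calX}h_i(x)$ is nonempty and compact, while $X(\beta,\gamma)$ is defined via minimization over all of $\reals$. If one reads admissibility as requiring a nonempty compact set of \emph{unconstrained} minimizers (the convention of the companion paper this lemma is imported from), then $\lim_{x\to-\infty}h_i'(x)<0<\lim_{x\to+\infty}h_i'(x)$ for each $i$, these sign conditions are preserved under convex combinations of the $h_i$, every valid $p$ has a nonempty compact $\arg\min_{x\in\reals}p(x)$, and your argument goes through without further change. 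Adding one sentence to that effect would close the only gap.
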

The proof of Lemma \ref{convex B1} is similar to the proof of Lemma 10 and Lemma 11 in \cite{su2015fault}.

Note that $X(\beta, \gamma)$ is the collection of the unconstrained optimal solutions of all the valid functions in $\calA(\beta, \gamma)$. However, it is possible that there exists a point in $X(\beta, \gamma)$ that is infeasible (i.e., outside $\calX$). Let $Y(\beta, \gamma)$ be the collection of constrained optimal solutions for functions in $\calA(\beta, \gamma)$, formally defined as
\begin{align}
\label{union opt set}
Y(\beta, \gamma)~\triangleq ~ \cup_{p(x)\in \calA(\beta, \gamma)} \, \arg\min_{x\in \calX} \, p(x).
\end{align}

\begin{lemma}
\label{convex B cons}
If $\beta\le \frac{1}{|\calN|}$ and $\gamma\le |\calN|$, the set $Y(\beta, \gamma)$ is convex and closed.
\end{lemma}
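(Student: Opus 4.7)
The plan is to reduce the statement to Lemma \ref{convex B1} by establishing the identity
\begin{align*}
Y(\beta,\gamma) \;=\; P_\calX\!\left[X(\beta,\gamma)\right].
\end{align*}
Granting this identity, the conclusion is immediate. Since $\calX\subseteq\reals$ is closed and convex, it is a (possibly unbounded) closed interval, and $P_\calX$ acts on $\reals$ as the continuous, nondecreasing clipping map $z\mapsto\max(\inf\calX,\min(\sup\calX,z))$. By Lemma \ref{convex B1}, $X(\beta,\gamma)$ is a closed interval in $\reals$, and the image of a closed interval under such a clipping map is a closed sub-interval of $\calX$. Hence $Y(\beta,\gamma)$ is closed and convex.

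To obtain the identity, I would first prove the pointwise version
\begin{align*}
\arg\min_{x\in\calX} p(x) \;=\; P_\calX\!\left[\arg\min_{x\in\reals} p(x)\right]
\end{align*}
for each fixed $p\in\calA(\beta,\gamma)$ and then take unions over $p$. Since $p=\sum_{i\in\calN}\alpha_i h_i$ is convex and continuously differentiable on $\reals$ (extending each $h_i$ affinely outside $\calX$ with its boundary derivative preserves admissibility), both minimizer sets are non-empty closed intervals. For the $\supseteq$ inclusion, if $m\in\arg\min_\reals p$ and $m\in\calX$ then $P_\calX[m]=m$ is already a constrained minimum; otherwise, say $m<\inf\calX=:a$, and convexity together with the minimality of $m$ forces $p$ to be nondecreasing on $[m,\infty)$, so $a=P_\calX[m]$ minimizes $p$ over $\calX$ (symmetric if $m>\sup\calX$). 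For the $\subseteq$ inclusion, if $y\in\arg\min_\calX p$, first-order optimality $p'(y)(x-y)\ge 0$ for all $x\in\calX$ gives $p'(y)=0$ when $y$ is interior to $\calX$, making $y$ itself an unconstrained minimum; if instead $y=\inf\calX$ with $p'(y)>0$, convexity makes $p$ strictly increasing on $[y,\infty)$, so $\arg\min_\reals p$ lies entirely in $(-\infty,y)$ and every element projects onto $y$ under $P_\calX$, and symmetrically at the right endpoint.

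The main obstacle is the $\subseteq$ direction at a boundary point of $\calX$, where one must argue that a constrained optimum with nonzero derivative is actually realized as the projection of an unconstrained optimum; this hinges on knowing $\arg\min_\reals p$ is non-empty and located on the correct side of $\calX$, which follows from compactness of the individual $\arg\min_\calX h_i$ sets together with the affine extension. Everything else is a soft consequence of the one-dimensional structure: once $Y(\beta,\gamma)=P_\calX[X(\beta,\gamma)]$ is established, clipping a closed interval to a closed interval gives another closed interval, which is the desired convex closed set.
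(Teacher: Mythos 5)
Your proof is correct, but it takes a genuinely different route from the paper's. The paper argues by a three-way case analysis on how $X(\beta,\gamma)$ sits relative to $\calX$: when $X(\beta,\gamma)\subseteq\calX$ it gets $Y=X$; when the two sets are disjoint it shows every constrained minimizer collapses to a single endpoint of $\calX$; and in the mixed case it partitions $\calA(\beta,\gamma)$ into three subfamilies and shows $Y(\beta,\gamma)=X(\beta,\gamma)\cap\calX$, concluding from closedness and convexity of the intersection. You instead prove the single identity $Y(\beta,\gamma)=P_{\calX}\left[X(\beta,\gamma)\right]$ via the pointwise statement $\arg\min_{x\in\calX}p(x)=P_{\calX}\left[\arg\min_{x\in\reals}p(x)\right]$, and then invoke monotonicity and continuity of the one-dimensional clipping map. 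The two descriptions agree --- $P_{\calX}[X]$ equals $X\cap\calX$ when that intersection is nonempty and equals the nearer endpoint of $\calX$ otherwise --- but your formula unifies all three of the paper's cases, and your pointwise lemma is essentially the content of the paper's equations \prettyref{eq:sub 3}, \prettyref{eq:b1}, \prettyref{eq:b2} packaged more cleanly. What the paper's version buys is that it never needs to reason about images of sets under $P_{\calX}$, only about intersections, which is marginally more elementary; what yours buys is brevity and a reusable identity.

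One caveat. Both arguments silently require that $\arg\min_{x\in\reals}p(x)$ be nonempty for every valid $p$ (the paper's partition $\calI_1\cup\calI_2\cup\calI_3$ would otherwise fail to cover $\calA(\beta,\gamma)$, just as your identity would fail at a boundary minimizer with $p^{\prime}\neq 0$). Your stated justification --- affine extension of each $h_i$ beyond $\calX$ using the boundary derivative --- does not actually deliver this: if $p^{\prime}(\min\calX)>0$, the left affine extension has positive slope and hence decreases without bound as $x\to-\infty$, so the extended $p$ has no unconstrained minimizer. The nonemptiness must instead be taken as part of the standing admissibility assumptions (as it is, implicitly, in the paper and in the cited Lemmas 10--11 of the companion work, where the unconstrained $\arg\min$ of each valid function is assumed nonempty and compact). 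With that assumption stated, your proof is complete.
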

\begin{proof}
We consider three cases: (1) $X(\beta, \gamma)\subseteq \calX$, (2) $X(\beta, \gamma)\cap \calX ~=~\O$ and (3) $X(\beta, \gamma)\cap \calX ~\not=~\O$ and $X(\beta, \gamma)\not\subseteq \calX$.

\paragraph{Case 1 ($X(\beta, \gamma)\subseteq \calX$):}

Since $X(\beta, \gamma)\subseteq \calX$, every point in $X(\beta, \gamma)$ is feasible. It is easy to see that $Y(\beta, \gamma)=X(\beta, \gamma)$. By Lemma \ref{convex B1}, we know that $Y(\beta, \gamma)$ is convex and closed.
\paragraph{Case 2 ($X(\beta, \gamma)\cap \calX ~=~\O$):}
Recall that both $X(\beta, \gamma)$ and $\calX$ are convex and closed.
Since $X(\beta, \gamma)\cap \calX ~=~\O$, either $\max \, X(\beta, \gamma)< \min \, \calX$, or $\min \, X(\beta, \gamma)>  \max \, \calX$ is true.
By symmetry, without loss of generality, assume that
$$\max \, X(\beta, \gamma)~< ~ \min \, \calX.$$
For ease of notation, let $x_0=\min \calX$ and $x_1=\max \calX$.
Let $p(\cdot)$ be an arbitrary function in $\calA(\beta, \gamma)$.
Since $\arg\min_{x\in \reals}\, p(x)\subseteq \, X(\beta, \gamma)$, and $p^{\prime}(\cdot)$ is non-decreasing, it holds that $p^{\prime}(x)>0$ for each $x\in \calX$. Thus,
\begin{align}
\label{boundary proj}
\arg\min_{x\in \calX}\, p(x)=\{x_0\}.
\end{align}
Since (\ref{boundary proj}) holds for any $p(\cdot)\in \calA(\beta, \gamma)$, we have
\begin{align*}
X(\beta, \gamma) ~&=~\cup_{p(\cdot)\in \calA(\beta, \gamma)}\, \arg\min_{x\in \calX}\, p(x)\\
~&=~\cup_{p(\cdot)\in \calA(\beta, \gamma)}\, \{x_0\}\\
~&=~\{x_0\},
\end{align*}
which is trivially convex and closed.

\paragraph{Case 3 ($X(\beta, \gamma)\cap \calX ~\not=~\O$ and $X(\beta, \gamma)\not\subseteq \calX$):}

%
Let $p(\cdot)$ be an arbitrary valid function in $\calA(\beta, \gamma)$. Either $\arg\min_{x\in \reals} \, p(x)\cap \calX \not=\O$ or $\arg\min_{x\in \reals} \, p(x)\cap \calX =\O$ is true.

Suppose $\arg\min_{x\in \reals} \, p(x)\cap \calX \not=\O$.
Let $y\in \arg\min_{x\in \reals} \, p(x)\cap \calX $, then $p^{\prime}(y)=0$ and $y\in \calX.$
Thus, $y\in \arg\min_{x\in \calX}\, p(x)$, which implies that
\begin{align}
\label{sub 1}
\arg\min_{x\in \reals} \, p(x)\cap \calX \subseteq \, \arg\min_{x\in \calX} \, p(x).
\end{align}
In addition, since $\arg\min_{x\in \reals} \, p(x)\cap \calX\not=\O$, then $p^{\prime}(y)=0$ for each $y\in \arg\min_{x\in \calX}\, p(x)$.
Otherwise, the optimality of $y$ will be violated.
Thus, $y\in \arg\min_{x\in \reals} \, p(x)\cap \calX$ and
\begin{align}
\label{sub 2}
\arg\min_{x\in \calX} \, p(x) \subseteq \, \arg\min_{x\in \reals} \, p(x)\cap \calX.
\end{align}
(\ref{sub 1}) and (\ref{sub 2}) together show that when $\arg\min_{x\in \reals} \, p(x)\cap \calX\not=\O$,
\begin{align}
\label{sub 3}
\arg\min_{x\in \calX} \, p(x) = \, \arg\min_{x\in \reals} \, p(x)\cap \calX.
\end{align}

Now consider the case when $\arg\min_{x\in \reals} \, p(x)\cap \calX =\O$.
Since $\arg\min_{x\in \reals} p(x) \cap \calX=\O$, either $\max \pth{\arg\min_{x\in \reals} \, p(x)} < x_0$ or $\min \pth{\arg\min_{x\in \reals} \, p(x)} > x_1$.
If
\begin{align}
\label{b1}
\max \pth{\arg\min_{x\in \reals} \, p(x)} < x_0,
\end{align}
by the analysis in case 2, we know that
$ \arg\min_{x\in \calX} \, p(x)=\{x_0\}.$
Similarly, if
\begin{align}
\label{b2}
\min \pth{\arg\min_{x\in \reals} \, p(x)} > x_1,
\end{align}
we have
$ \arg\min_{x\in \calX} \, p(x)=\{x_1\}.$\\

Let $\calI_1$, $\calI_2$, and $\calI_3$ be a partition of set $\calA(\beta, \gamma)$ such that
\begin{align*}
\calI_1&=\{p(\cdot): ~ p(\cdot)\in \calA(\beta, \gamma) ~~\text{and}~~(\ref{b1})~~\text{holds} \}\\
\calI_2&=\{p(\cdot): ~ p(\cdot)\in \calA(\beta, \gamma) ~~\text{and}~~(\ref{b2})~~\text{holds} \}\\
\calI_3&=\{p(\cdot): ~ p(\cdot)\in \calA(\beta, \gamma) ~~\text{and}~~\arg\min_{x\in \reals}\, p(x)\, \cap \calX\not=\O~ \}.
\end{align*}
It is easy to see that $\calI_1$, $\calI_2$, and $\calI_3$ together form a partition of set $\calA(\beta, \gamma)$.\\

Next, we show that
\begin{align}
\label{b11}
\pth{\cup_{p(x)\in \calI_1} \arg \min_{x\in \calX} \, p(x)} \subseteq \, X(\beta, \gamma)\cap \calX.
\end{align}

When $\calI_1=\O$, it holds that $\cup_{p(x)\in \calI_1} \arg \min_{x\in \calX} \, p(x)=\O$. Thus, (\ref{b11}) follows trivially.

When $\calI_1\not=\O$, then $\cup_{p(x)\in \calI_1} \arg \min_{x\in \calX} \, p(x)=\{x_0\}$, and $\min \pth{X(\beta, \gamma)\cap \calX}=x_0$. Otherwise, $\cup_{p(x)\in \calI_1} \arg \min_{x\in \calX} \, p(x)=\{x_0\}$ could not be true, because $X(\beta, \gamma)\cap \calX ~\not=~\O$ in case 3.
Thus,
$$\pth{\cup_{p(x)\in \calI_1} \arg \min_{x\in \calX} \, p(x)}=\{x_0\}\subseteq  X(\beta, \gamma)\cap \calX,$$
proving (\ref{b11}).

Similarly, we can show that
\begin{align}
\label{b12}
\pth{\cup_{p(x)\in \calI_2} \arg \min_{x\in \calX}\, p(x) } \subseteq \, X(\beta, \gamma)\cap \calX.
\end{align}

We get,
\begin{align}
\nonumber
&Y(\beta, \gamma)=\cup_{p(\cdot)\in \calA(\beta, \gamma)} \arg\min_{x\in \calX} \, p(x)\\
\nonumber
&=\left [  \cup_{p(\cdot)\in \calI_1} \arg\min_{x\in \calX} \, p(x) \right] \cup \left [  \cup_{p(\cdot)\in \calI_2} \arg\min_{x\in \calX} \, p(x) \right]\cup \left [  \cup_{p(\cdot)\in \calI_3} \arg\min_{x\in \calX} \, p(x) \right]\\
\nonumber
&=\left [  \cup_{p(\cdot)\in \calI_1} \arg\min_{x\in \calX} \, p(x) \right] \cup \left [  \cup_{p(\cdot)\in \calI_2} \arg\min_{x\in \calX} \, p(x) \right]\cup \left [  \cup_{p(\cdot)\in \calI_3} \pth{\arg\min_{x\in \reals} \, p(x)\cap \calX} \right]~~\text{by (\ref{sub 3})}\\
&=\left [  \cup_{p(\cdot)\in \calI_1} \arg\min_{x\in \calX} \, p(x) \right] \cup \left [  \cup_{p(\cdot)\in \calI_2} \arg\min_{x\in \calX} \, p(x) \right]\cup \left [  \cup_{p(\cdot)\in \calI_1\cup \calI_2\cup\calI_3} \pth{\arg\min_{x\in \reals} \, p(x)\cap \calX} \right] \label{b21}\\
\nonumber
&=\left [  \cup_{p(\cdot)\in \calI_1} \arg\min_{x\in \calX} \, p(x) \right] \cup \left [  \cup_{p(\cdot)\in \calI_2} \arg\min_{x\in \calX} \, p(x) \right]\cup \left [ \pth{\cup_{p(\cdot)\in \calA(\beta,\gamma)} \arg\min_{x\in\reals} p(x) }\cap \calX\right ]\\
\nonumber
&=\left [  \cup_{p(\cdot)\in \calI_1} \arg\min_{x\in \calX} \, p(x) \right] \cup \left [  \cup_{p(\cdot)\in \calI_2} \arg\min_{x\in \calX} \, p(x) \right]\cup \left [ X(\beta, \gamma)\cap \calX\right ]\\
\nonumber
&=X(\beta, \gamma)\cap \calX ~~\text{by (\ref{b11}) and (\ref{b12})}
\end{align}
Equality (\ref{b21}) holds because $\arg\min_{x\in \reals} \, p(x)\cap \calX=\O,$ for each $p(\cdot)\in \calI_1\cup \calI_2$.

Since $Y(\beta, \gamma)=X(\beta, \gamma)\cap \calX $, and both $X(\beta, \gamma)$ and $\calX$ are convex and closed, it holds that
$Y(\beta, \gamma)$ is also convex and closed. \\

Case 1, case 2 and case 3 together prove Lemma \ref{convex B cons}.

\eproof
\end{proof}

\subsection{Exchange Local Estimates}

Let $\{\lambda[t]\}_{t=0}^{\infty}$ be a sequence of stepsizes such that $\lambda[t]\le \lambda[t+1]$ for all $t\ge 0$, $\sum_{t=0}^{\infty} \lambda[t]=\infty$, and $\sum_{t=0}^{\infty} \lambda^2[t]<\infty$. Let $x_i[0]$ be the initial state of agent $i\in \calV$.

\paragraph{}
\vspace*{8pt}\hrule

~
{\bf Algorithm 1 } for agent $i$ for iteration $t\ge 1$
\vspace*{4pt}\hrule

~

\begin{enumerate}

\item {\em Transmit step:} Transmit current state $x_i[t-1]$ on all outgoing edges.
\item {\em Receive step:} Receive values on all incoming edges. These values form
multiset\footnote{In a multiset, multiple instances of of an element is allowed.  For instance, $\{1, 1, 2\}$ is a multiset. } $r_i[t]$ of size $d_i^-=|N_i^{-}|$.


\item {\em Update step:}
Sort the values in $r_i[t]$ in an increasing order, and eliminate
the smallest $f$ values, and the largest $f$ values (breaking ties
arbitrarily)\footnote{Note that if $G(\calV, \calE)$ satisfies Assumption \ref{a1}, then $d_i\ge 2f+1$ for each $i\in \calV$}.
 Let $N_i^*[t]$ denote the identifiers of agents from
whom the remaining $|N_i^{-}| - 2f$ values were received, and let
$w_j$ denote the value received from agent $j\in N_i^*[t]$.
For convenience, define $w_i[t-1]=x_i[t-1]$. \footnote{Observe that
if $j\in \{i\}\cup N_i^*[t]$ is non-faulty, then $w_j=x_j[t-1]$.}

Update its state as follows.
\begin{eqnarray}
x_i[t] ~ = ~P_{\calX}\left [\frac{1}{d_i^-+1-2f}\pth{\sum_{j\in \{i\}\cup N_i^*[t]}  \, w_j[t-1]}-\lambda[t-1]~ h_i^{\prime}(x_i[t-1])\right ],
\label{e_Z}
\end{eqnarray}
where $h_i^{\prime}(x_i[t-1])$ is the gradient of agent $i$'s local function $h_i(\cdot)$ at $x_i[t-1]$.
\end{enumerate}

~
\hrule

~

~

Note that $x_i[t]\in \calX$, for each $i\in \calN$ and each $t\ge 1$. Define $v_i[t-1]$ and ${\epsilon}_i[t-1]$ as follows.
\begin{align}
v_i[t-1]&=\frac{1}{d_i^-+1-2f}\pth{\sum_{j\in \{i\}\cup N_i^*[t]}  \, w_j[t-1]},\label{proj11}\\
e_i[t-1]&=P_{\calX}\left [v_i[t-1]-\lambda[t-1]~ h_i^{\prime}(x_i[t-1])\right ]-\pth{v_i[t-1]-\lambda[t-1]~ h_i^{\prime}(x_i[t-1])} \label{proj12}.
\end{align}
Then the update of $x_i[t]$ in (\ref{e_Z}) can be rewritten as
\begin{align}
x_i[t]=v_i[t-1]-\lambda[t-1]~ h_i^{\prime}(x_i[t-1])+e_i[t-1].
\end{align}

The following proposition, proved in \cite{Nedic2010}, states that the projection error diminishes over time.
We present the proof here for completeness.
\begin{proposition}\cite{Nedic2010}
\label{proj e}
For each $i\in \calN$ and each $t\ge 0$, the projection error $e_i[t-1]$ satisfies
$$ \left|e_i[t]\right|\le \lambda[t]\,L.$$
\end{proposition}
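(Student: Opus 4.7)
The plan is to reduce the bound on the projection error to the fact that the pre-projection point $v_i[t]$ lies inside the feasible set $\calX$. Once we have $v_i[t]\in\calX$, the distance from $v_i[t]-\lambda[t]h_i'(x_i[t])$ to $\calX$ is at most $\lambda[t]\,|h_i'(x_i[t])|\le \lambda[t]\,L$, and $|e_i[t]|$ is exactly that distance by the definition in \eqref{proj12}.

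First I would establish that $v_i[t]\in\calX$ for every $i\in\calN$ and every $t\ge 0$. This is immediate by induction on $t$ if every summand of $v_i[t]$ in \eqref{proj11} is in $\calX$; the non-trivial point is that some of the $w_j$'s may have come from Byzantine agents and are therefore not a priori in $\calX$. The standard trimming argument handles this: after discarding the top-$f$ and bottom-$f$ values in $r_i[t+1]$, for any Byzantine agent $k$ whose value $w_k$ survived, there exist a non-faulty agent whose (non-trimmed, non-retained) value lies at or above $w_k$ and a non-faulty agent whose value lies at or below $w_k$ (otherwise one of the trimmed extremes would necessarily have been non-faulty and would contradict $w_k$ being surrounded on both sides). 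Hence each faulty $w_k$ in the retained set can be written as a convex combination of non-faulty states $x_{j'}[t]$, all of which lie in $\calX$ by the inductive hypothesis (the base case $t=0$ uses the assumption $x_i[0]\in\calX$; if that is not explicitly stated one can replace $x_i[0]$ by $P_{\calX}[x_i[0]]$ without loss of generality). Also, $w_i[t]=x_i[t]\in\calX$ since $x_i[t]$ is the output of a projection. Therefore $v_i[t]$ is a convex combination of points in $\calX$, so by convexity of $\calX$ we have $v_i[t]\in\calX$.

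Given $v_i[t]\in\calX$, I would argue as follows. Let $z=v_i[t]-\lambda[t]\,h_i'(x_i[t])$. By \eqref{proj12}, $e_i[t]=P_{\calX}[z]-z$, so
\begin{align*}
|e_i[t]| \;=\; \bigl|P_{\calX}[z]-z\bigr| \;=\; Dist(z,\calX) \;\le\; |z-v_i[t]| \;=\; \lambda[t]\,|h_i'(x_i[t])| \;\le\; \lambda[t]\,L,
\end{align*}
where the first inequality uses $v_i[t]\in\calX$ so that $v_i[t]$ is a feasible candidate for the infimum in \eqref{dist}, and the last inequality uses the bounded-gradient part of the admissibility assumption.

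The main obstacle is not the projection estimate itself, which is essentially the argument already given in \cite{Nedic2010}, but rather the justification that $v_i[t]\in\calX$ under the Byzantine fault model. Since the faulty agents in this setting can send arbitrary real numbers (not necessarily in $\calX$) on different outgoing edges, one must invoke the trimming structure of the update rule to rewrite $v_i[t]$ as a convex combination of non-faulty states; this is the only point where Assumption~\ref{a1} (guaranteeing $d_i^-\ge 2f+1$, so that the retained multiset is non-empty and the trimming/covering argument is valid) is actually used.
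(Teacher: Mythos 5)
Your proof is correct, and it takes a mildly different route from the paper's. The paper applies Lemma~\ref{proj p}(b) with $x=v_i[t]-\lambda[t]h_i^{\prime}(x_i[t])$ and $y=v_i[t]$ to get $|e_i[t]|^2\le \lambda^2[t]L^2-|x_i[t+1]-v_i[t]|^2$ and then drops the nonnegative term, whereas you use only the defining property of the projection as the minimizer of distance in \eqref{dist}, bounding $|e_i[t]|=Dist\pth{z,\calX}\le |z-v_i[t]|\le\lambda[t]L$. Your version is more elementary (it does not need the strengthened projection inequality), at the cost of discarding the extra $-|x_i[t+1]-v_i[t]|^2$ slack, which is not used anywhere downstream anyway. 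Note that both arguments hinge on exactly the same hypothesis, namely $v_i[t]\in\calX$: the paper needs it because Lemma~\ref{proj p}(b) only holds for $y\in\calX$, and you need it to make $v_i[t]$ a feasible candidate in the infimum. The paper invokes the lemma without verifying this, so your explicit justification --- that after the $2f$-trimming every retained value, including one sent by a Byzantine agent, is sandwiched between two non-faulty states and hence lies in the interval $\calX$, so that $v_i[t]$ in \eqref{proj11} is a convex combination of points of $\calX$ --- is a genuine improvement in rigor rather than wasted effort. Your caveat about the base case ($x_i[0]\in\calX$, or replacing $x_i[0]$ by $P_{\calX}[x_i[0]]$) is also apt, since the paper never states that the initial states are feasible.
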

\begin{proof}
\begin{align*}
\left|x_i[t+1]-v_i[t]\right |^2&= \left|P_{\calX}\left [v_i[t]-\lambda[t]~ h_i^{\prime}(x_i[t])\right ]-v_i[t]\right |^2\\
&\le \left|v_i[t]-\lambda[t]~ h_i^{\prime}(x_i[t])-v_i[t]\right |^2-\left|e_i[t]\right |^2~~~\text{by Lemma \ref{proj p}}\\
&=\lambda^2[t]\left|h_i^{\prime}(x_i[t])\right |^2-\left|e_i[t]\right |^2\\
&\le \lambda^2[t]L^2-\left|e_i[t]\right |^2~~~\text{since $|h_i^{\prime}(x)|\le L$ for any $x\in \calX$}.
\end{align*}
Thus,
\begin{align*}
\left|e_i[t]\right |^2\le \lambda^2[t]L^2-\left|x_i[t+1]-v_i[t]\right |^2 \le \lambda^2[t]L^2.
\end{align*}

\eproof
\end{proof}

Let $d_{\max}=\max_{j\in \calV} d_j^-$. Now we proceed to analyze the performance of Algorithm 1 in terms of $\beta$ and $\gamma$.

\begin{theorem}
\label{BT}
For a given graph $G(\calV, \calE)$ and $\beta \le  \frac{1}{(2(d_{\max}+1-2f))^{\tau_b(n-\phi)}}$, if each reduced graph $\calH_b$ contains a source component with size at least $\gamma$, where $\gamma\ge f+1$, then Algorithm 1 optimizes a function in $\calA(\beta, \gamma)$.
%
%
\end{theorem}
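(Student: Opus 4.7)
My plan is to reduce the analysis of Algorithm 1 to a backward product of row-stochastic matrices acting only on the non-faulty states, and then use a source-component argument to extract a convex combination of the non-faulty local functions that Algorithm 1 effectively optimizes. First I would use the standard Byzantine trimming trick (as in the analysis of iterative approximate Byzantine consensus in \cite{vaidya2012IABC}): because agent $i$ discards the $f$ largest and $f$ smallest incoming values, the trimmed mean $v_i[t-1]$ can be rewritten as a convex combination
\begin{align*}
v_i[t-1] ~=~ \sum_{j\in\calN} M_{ij}[t]\, x_j[t-1],
\end{align*}
where $M[t]=(M_{ij}[t])_{i,j\in\calN}$ is row-stochastic, every positive $M_{ij}[t]$ is at least $\frac{1}{d_{\max}+1-2f}$, and the support of $M[t]$ (restricted to $\calN$) is contained in the edge set of some reduced graph $\calH_b[t]\in R^b$. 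Substituting into \eqref{e_Z} gives the vector update
\begin{align*}
x_i[t] ~=~ \sum_{j\in\calN} M_{ij}[t]\, x_j[t-1] ~-~ \lambda[t-1]\, h_i^\prime(x_i[t-1]) ~+~ e_i[t-1], \qquad i\in\calN,
\end{align*}
with the projection error $e_i[t-1]$ controlled by Proposition~\ref{proj e}.

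Next I would analyze the backward product $\Phi[t,s] \triangleq M[t]M[t-1]\cdots M[s+1]$. Since $|R^b|=\tau_b$ is finite, by pigeonhole at least one reduced graph $\calH^*\in R^b$ appears in every window of length $\tau_b$. In $\calH^*$, Assumption~\ref{a1} guarantees a source component $S^*$ with $|S^*|\ge\gamma$. A standard scrambling-style argument (propagating weight along source-to-everyone paths of length at most $n-\phi$ inside a single reduced graph, repeated $\tau_b$ times so that each of the $\tau_b$ reduced graphs gets its turn) shows that after at most $\tau_b(n-\phi)$ steps, every row of the product has a positive entry of size at least $\frac{1}{(d_{\max}+1-2f)^{\tau_b(n-\phi)}}\ge \beta$ on every source-component node of $\calH^*$. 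This both (a) makes $\Phi[t,s]$ a weak ergodic sequence so that $\|\Phi[t,s] - \ones \pi[s]^\top\|\to 0$ for some sequence of stochastic vectors $\pi[s]$, and (b) ensures that any limit point $\pi^\infty$ of $\pi[s]$ assigns mass at least $\beta$ to at least $\gamma$ coordinates in $\calN$.

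Having established the matrix machinery, I would then follow the familiar two-step stochastic-approximation pattern used in distributed subgradient methods (as in \cite{Nedic2010}). Let $\bar y[t]=\sum_{j\in\calN}\pi_j[t]\, x_j[t]$ be the weighted running average. Using weak ergodicity and the Lipschitz gradient bound $|h_i^\prime|\le L$, one shows that $|x_i[t]-\bar y[t]|\to 0$ for every $i\in\calN$ (consensus among non-faulty agents), exploiting $\sum_t \lambda^2[t]<\infty$ together with Proposition~\ref{proj e} to dominate both the gradient and projection-error terms. Then I would pass to a subsequence along which $\pi[t]\to \pi^\infty$ and write
\begin{align*}
\bar y[t+1] ~=~ \bar y[t] ~-~ \lambda[t]\sum_{j\in\calN}\pi_j[t+1]\, h_j^\prime(x_j[t]) ~+~ \sum_{j\in\calN}\pi_j[t+1]\, e_j[t],
\end{align*}
and apply the standard Lyapunov/supermartingale argument with test function $|\bar y[t]-y^*|^2$ for any $y^*\in\arg\min_{x\in\calX}\sum_{j\in\calN}\pi^\infty_j h_j(x)$, using part~(b) of Lemma~\ref{proj p} to absorb the projection. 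Since $\pi^\infty$ satisfies $\pi^\infty_j\ge\beta$ on at least $\gamma$ coordinates, the limit function $\sum_{j\in\calN}\pi^\infty_j h_j(\cdot)$ lies in $\calA(\beta,\gamma)$, and $\bar y[t]$ (and hence each $x_i[t]$) converges to a point in $Y(\beta,\gamma)$, which is convex and closed by Lemma~\ref{convex B cons}.

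\textbf{Main obstacle.} The delicate part is step~(b) of the matrix analysis: turning the purely combinatorial fact that each $\calH_b\in R^b$ has a source component of size $\ge\gamma$ into the quantitative weight bound $\pi^\infty_j\ge \beta$ on $\gamma$ coordinates, uniformly in the starting time $s$. One has to argue that across any horizon of length $\tau_b(n-\phi)$ the product $\Phi[s+\tau_b(n-\phi),s]$ not only scrambles but concentrates at least $\beta$ of its mass in every row onto the source component of the pigeonholed reduced graph $\calH^*$, and then show this lower bound survives the limit $s\to\infty$. A secondary subtlety is that the $\pi[t]$ themselves are time-varying left-stationary vectors, not eigenvectors of any fixed matrix, so the usual common-left-eigenvector shortcut (as in \cite{Sundaram2015}) is unavailable; one must work directly with weak ergodic coefficients of the backward product.
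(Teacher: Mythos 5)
Your overall architecture matches the paper's: rewrite the trimmed update as ${\bf x}[t+1]={\bf M}[t]{\bf x}[t]-\lambda[t]{\bf d}[t]+{\bf e}[t]$ with ${\bf M}[t]$ row-stochastic and supported on a reduced graph, establish weak ergodicity of the backward products ${\bf \Phi}(t,r)\to\ones\pi^{\prime}(r)$ with at least $\gamma$ entries of each $\pi(r)$ bounded below (Lemmas \ref{lb} and \ref{lblimiting}), prove consensus $|x_i[t]-y[t]|\to 0$ for the auxiliary weighted average (Lemma \ref{consensus}), and then run a descent argument. Two remarks on the matrix step: the correct entrywise lower bound is $\xi=\frac{1}{2(d_{\max}+1-2f)}$, not $\frac{1}{d_{\max}+1-2f}$ (a trimmed value contributed by a Byzantine neighbor must be split between two non-faulty extremes, costing the factor $2$); this is harmless here only because the theorem's hypothesis on $\beta$ already carries that factor.

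The genuine gap is in your final optimality step. You propose to pass to a subsequence along which $\pi[t]\to\pi^{\infty}$ and run a Lyapunov argument with the fixed test function $|\bar y[t]-y^*|^2$ for $y^*\in\arg\min_{x\in\calX}\sum_{j}\pi^{\infty}_j h_j(x)$. But the stochastic vectors $\pi(r)$ are genuinely time-varying (as you yourself note, there is no common left eigenvector), and nothing forces them to converge; the descent direction at iteration $t$ is $\sum_j\pi_j(t+1)h_j^{\prime}(\cdot)$, i.e.\ the gradient of the \emph{time-varying} valid function $p_{t+1}$, whose minimizer can be far from $y^*$ at the (infinitely many) iterations outside your subsequence. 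So $|\bar y[t]-y^*|^2$ is not a supermartingale up to summable errors, and the argument stalls. The paper's resolution is different in a way that matters: it measures progress by $Dist\pth{y[t],Y(\beta,\gamma)}$, where $Y(\beta,\gamma)$ is the \emph{union} of the constrained optima of all valid functions, and the whole point of Lemma \ref{convex B cons} is that this union is convex and closed (a fact special to the scalar setting $\calX\subseteq\reals$). Since $\arg\min_{x\in\calX}p_{t+1}(x)\subseteq Y(\beta,\gamma)$ for \emph{every} $t$, a gradient step for $p_{t+1}$ decreases the distance to the convex set $Y(\beta,\gamma)$ up to the summable errors $\lambda[t]L$ and $\lambda[t]L(M[t]-m[t])$, which yields $Dist\pth{y[t],Y(\beta,\gamma)}\to 0$ without ever needing $\pi(t)$ to converge. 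You cite Lemma \ref{convex B cons} only at the very end, as a property of the limit set, rather than deploying its convexity inside the descent inequality where it is actually needed; as written, the step from ``each $p_{t+1}$ is valid'' to ``the iterates converge to the optimum of a single valid function'' is not justified.
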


The proof of Theorem \ref{BT} relies on several lemmas and theorems proved in our previous work \cite{DBLP:journals/corr/SuV15a,su2015fault}. Next, we simply state them when needed without giving a proof.\\

 Without loss of generality, let us assume that the non-faulty agents are indexed as 1 to $n-\phi$.
Recall that the system is synchronous.
If a non-faulty agent does not receive an expected message from an incoming neighbor (in the {\em Receive step} below), then that message is assumed to have some default value. 

Recall that
 $i\not\in N_i^*[t]$
because $(i,i)\not\in\calE$.
The ``weight'' of each term on the right-hand side of
(\ref{e_Z}) is $\frac{1}{d_i^{-}+1-2f}$, and these weights add to 1. Observe that $0<\frac{1}{d_i^{-}+1-2f}\leq 1$.
Let ${\bf x}[t]\in \reals^{n- \phi}$ be a real vector of dimension $n-\phi$, with $x_i[t]$ being the local estimate of agent $i, \forall\,  i\in \calN$ at the end of iteration $t$, let ${\bf d}[t]\in \reals^{n- \phi}$ be a real vector with $d_i[t]$ being the gradient of function $h_i(\cdot)$ at $x_i[t]$, and let ${\bf e}[t]\in \reals^{n-\phi}$ be a real vector of projection errors defined in (\ref{proj12}).

Since the source component exists in every reduced graph of $G(\calV,\calE)$, and is of size at least $\gamma \ge f+1$, 
\cite{Vaidya2012MatrixConsensus} implies that 
%
%
\begin{align}
\label{update}
{\bf x}[t+1]={\bf M}[t]{\bf x}[t]- \lambda[t] {\bf d}[t]+{\bf e}[t].
\end{align}
The construction of ${\bf M}[t]$ and relevant properties are given in \cite{Vaidya2012MatrixConsensus}. 
Let $\calH\in R^b$ be a reduced graph of the given graph $G(\calV, \calE)$ with ${\bf H}$ as adjacency matrix. It is shown in \cite{Vaidya2012MatrixConsensus} that every iteration $t$, and for every ${\bf M}[t]$, there exists a reduced graph $\calH [t]\in R^b$ with adjacency matrix ${\bf H}[t]$ such that
\begin{align}
\label{reducedgraph}
{\bf M}[t]\ge \xi\, {\bf H}[t],
\end{align}
where $\xi=\frac{1}{2\pth{d_{\max}+1-2f}}$. It is easy to see that $\frac{1}{2n} \le \xi <1$.  Equation (\ref{update}) can be further expanded out as
\begin{align}
\nonumber
{\bf x}[t+1]&={\bf M}[t]{\bf x}[t]-\lambda[t]{\bf d}[t]+{\bf e}[t]\\
\nonumber
&={\bf M}[t]\pth{{\bf M}[t-1]{\bf x}[t-1]-\lambda[t-1]{\bf d}[t-1]+{\bf e}[t-1]}-\lambda[t]{\bf d}[t]+{\bf e}[t]\\
\nonumber
&\cdots\\
\nonumber
&=\pth{{\bf M}[t]{\bf M}[t-1]\cdots {\bf M}[0]}{\bf x}[0]-\lambda[t-1]\sum_{r=1}^{t+1}\pth{{\bf M}[t]{\bf M}[t-1]\cdots {\bf M}[r]}{\bf d}[r-1]\\
\nonumber
&\quad+\sum_{r=1}^{t+1}\pth{{\bf M}[t]{\bf M}[t-1]\cdots {\bf M}[r]}{\bf e}[r-1]\\
&={\bf\Phi}(t,0){\bf x}[0]-\lambda[t-1]\sum_{r=1}^{t+1}{\bf \Phi}(t, r){\bf d}[r-1]+\sum_{r=1}^{t+1}{\bf\Phi}(t,r){\bf e}[r-1],
\label{updates}
\end{align}
where ${\bf \Phi}(t,r)={\bf M}[t]{\bf M}[t-1]\ldots {\bf M}[r]$ and by convention ${\bf \Phi}(t,t)={\bf M}[t]$ and ${\bf \Phi} (t, t+1)={\bf I}_{n-\phi}$, the identity matrix. Note that ${\bf \Phi}(t,r)$ is a backward product (i.e., the index decreases from left to right in the product).

\subsubsection{Convergence of the Transition Matrices ${\bf \Phi}(t,r)$ }
\label{ConvergenceProduct}

It can be seen from (\ref{updates}) that the evolution of estimates of non-faulty agents ${\bf x}[t]$ is determined by the backward product ${\bf \Phi}(t,r)$. Thus, we first characterize the evolutional properties and limiting behaviors of the backward product ${\bf \Phi}(t,r)$.

Recall that $\tau_b=|R^b|$ is the total number of reduced graphs (under Byzantine faults) of the given $G(\calV,\calE)$, $\phi=|\calF|\le f$ is the actual number of faulty agents in a given execution, and $\xi\in (\frac{1}{2n}, 1)$ in (\ref{reducedgraph}). Let $\nu=\tau_b(n-\phi)$ and $\theta=1-\xi^{\nu}$. 
The following lemma describes the structural property of ${\bf \Phi}(t,r)$ for sufficient large $t$. For a given $r$, Lemma \ref{lb} states that all non-faulty agents will be influenced by at least $\gamma$ common non-faulty agents, and this set of influencing agents may depend on $r$. 
\begin{lemma}\cite{DBLP:journals/corr/SuV15a}
\label{lb}
There are at least $\gamma$ columns in ${\bf \Phi}(r+\nu-1,r)$ that are lower bounded by $\xi^{\nu}\ones$ component-wise for all $r$, where $\ones\in \reals^{n-\phi}$ is an all one column vector of dimension $n-\phi$.
\end{lemma}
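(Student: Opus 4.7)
The plan is to lift the entrywise inequality ${\bf M}[t]\ge\xi{\bf H}[t]$ to the entire backward product, then show that the resulting nonnegative product has at least $\gamma$ all-positive columns, via pigeonhole combined with the source-component hypothesis. Since each ${\bf M}[t]$ is nonnegative and dominates $\xi{\bf H}[t]$ entrywise, a short induction gives
$$
{\bf \Phi}(r+\nu-1,r)\;\ge\;\xi^{\nu}\,{\bf H}[r+\nu-1]\,{\bf H}[r+\nu-2]\cdots{\bf H}[r],
$$
so it suffices to exhibit $\gamma$ columns of the $\{0,1\}$-matrix product on the right that are entrywise $\ge \ones$. I would also use that each ${\bf H}[t]$ carries $1$'s on its diagonal (coming from the positive self-weight of every non-faulty agent in ${\bf M}[t]$), which lets a vertex ``wait'' across any individual factor.

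Next, pigeonhole. Since $|R^b|=\tau_b$ and there are $\nu=\tau_b(n-\phi)$ factors in the product, there exists a reduced graph $\calH^{*}\in R^b$ with adjacency matrix ${\bf H}^{*}$ such that $T\triangleq\{t\in\{r,\dots,r+\nu-1\}:{\bf H}[t]={\bf H}^{*}\}$ satisfies $|T|\ge n-\phi$. Let $S^{*}$ be the source component of $\calH^{*}$, so $|S^{*}|\ge\gamma$. Fix any $s\in S^{*}$ and any $i\in\calN$. By the source-component property, $\calH^{*}$ contains a directed path $s=p_0\to p_1\to\cdots\to p_\ell=i$ with $\ell\le n-\phi-1 < |T|$. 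I would then construct a time-respecting walk from $s$ at time $r$ to $i$ at time $r+\nu$ by choosing any $\ell$ elements $t_1<\cdots<t_\ell$ of $T$ and performing the move $p_{k-1}\to p_k$ at time $t_k$ (legal since ${\bf H}[t_k]={\bf H}^{*}$ contains that edge), while at every other time slot the walk stays in place via the diagonal self-loop of ${\bf H}[t]$. This walk contributes $1$ to the $(i,s)$ entry of $\prod_{t=r+\nu-1}^{r}{\bf H}[t]$.

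Since $s\in S^{*}$ and $i\in\calN$ were arbitrary, each of the $|S^{*}|\ge\gamma$ columns of ${\bf \Phi}(r+\nu-1,r)$ indexed by $S^{*}$ is entrywise $\ge\xi^{\nu}\ones$, as desired. The identity of $\calH^{*}$ and $S^{*}$ may depend on $r$, but the lemma asserts only the existence of some $\gamma$-tuple of such columns, so this is harmless. The step most prone to error is the direction-of-time bookkeeping: one must expand the backward product into a sum over time-respecting walks and verify that the edge used at the $l$-th step is governed by ${\bf H}[r+l-1]$, so that the enumeration $t_1<\cdots<t_\ell$ of $T$ matches the order $p_0\to\cdots\to p_\ell$ of the path. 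Once that is pinned down, what remains is the routine pigeonhole-plus-self-loop trick.
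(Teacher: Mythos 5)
Your argument is correct, and it is essentially the standard proof of this lemma: the paper itself states it without proof (citing \cite{DBLP:journals/corr/SuV15a}), and the argument there is exactly your combination of the entrywise bound ${\bf \Phi}(r+\nu-1,r)\ge \xi^{\nu}\,{\bf H}[r+\nu-1]\cdots{\bf H}[r]$, pigeonhole over the $\tau_b$ reduced graphs to find one repeated at least $n-\phi$ times, and propagation from the source component using the diagonal self-weights to ``wait.'' The only point worth making explicit is the convention you already flag: the reduced graph as defined has no self-loops, so one must take ${\bf H}[t]$ to have unit diagonal (justified because ${\bf M}_{ii}[t]=\frac{1}{d_i^-+1-2f}\ge\xi$), which is the same convention the paper uses implicitly in the crash-fault section when it lower-bounds columns of $\pth{\widetilde{\bf H}[k]}^{n}$ by $1$.
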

Using coefficients of ergodicity theorem, it is showed in \cite{Vaidya2012MatrixConsensus} that ${\bf \Phi}(t,r)$ is weak-ergodic. Moreover, because weak-ergodicity is equivalent to strong-ergodicity for backward product of stochastic matrices \cite{1977Seneta}, as $t\diverge$ the limit of ${\bf \Phi}(t,r)$ exists
\begin{align}
\label{mixing}
\lim_{t\ge r,~ t\diverge}{\bf \Phi}(t, r)=\ones {\bf \pi^{\prime}}(r),
\end{align}
where ${\bf \pi^{\prime}}(r)\in \reals^{n-\phi}$ is the transpose of vector ${\bf \pi}(r)$, which is a stochastic vector (may depend on $r$).  

\begin{theorem}\cite{Anthonisse1977360}
\label{convergencerate}
Let $\nu=\tau(n-\phi)$ and $\theta=1-\xi^{\nu}$. For any sequence ${\bf \Phi}(t, r)$,
\begin{align}
\left | {\bf \Phi}_{ij}(t, r)-\pi_j(r)\right |\le \theta^{\lceil\frac{t-r+1}{\nu}\rceil},
\end{align}
for all $t\ge r$.
\end{theorem}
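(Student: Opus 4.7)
The plan is to invoke the classical Hajnal--Dobrushin coefficient of ergodicity for row-stochastic matrices. For a row-stochastic matrix $P$, set $\delta(P)=\max_j (\max_i P_{ij}-\min_i P_{ij})$; the two facts I will use are (i) sub-multiplicativity, $\delta(PQ)\le \delta(P)\,\delta(Q)$, and (ii) if some column of $P$ is pointwise at least $\alpha$, then $\delta(P)\le 1-\alpha$. Each $M[t]$ is row-stochastic (this is the content of the construction in \cite{Vaidya2012MatrixConsensus}), so each ${\bf \Phi}(t,r)$ is row-stochastic as well, and both facts apply.

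First I would reduce the claim to a bound on $\delta({\bf \Phi}(t,r))$. Using the semigroup identity ${\bf \Phi}(s,r)={\bf \Phi}(s,t+1)\,{\bf \Phi}(t,r)$ valid for every $s\ge t+1$, row $i$ reads
\[
{\bf \Phi}_{ij}(s,r)=\sum_{k\in\calN}{\bf \Phi}_{ik}(s,t+1)\,{\bf \Phi}_{kj}(t,r).
\]
Letting $s\diverge$ and invoking (\ref{mixing}), the left-hand side converges to $\pi_j(r)$ for every $i$, while the coefficients $\{{\bf \Phi}_{ik}(s,t+1)\}_k$ form a non-negative vector summing to one; since a limit of convex combinations is a convex combination, $\pi_j(r)\in \bigl[\min_k {\bf \Phi}_{kj}(t,r),\,\max_k {\bf \Phi}_{kj}(t,r)\bigr]$. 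Consequently
\[
|{\bf \Phi}_{ij}(t,r)-\pi_j(r)|\,\le\,\max_k {\bf \Phi}_{kj}(t,r)-\min_k {\bf \Phi}_{kj}(t,r)\,\le\,\delta({\bf \Phi}(t,r)),
\]
so it suffices to control $\delta({\bf \Phi}(t,r))$.

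Second, I would establish the geometric decay of $\delta({\bf \Phi}(t,r))$. Lemma \ref{lb} guarantees that for every starting index $s$, the block ${\bf \Phi}(s+\nu-1,s)$ has at least one column pointwise bounded below by $\xi^{\nu}$, so property (ii) yields $\delta({\bf \Phi}(s+\nu-1,s))\le 1-\xi^{\nu}=\theta$. Decomposing ${\bf \Phi}(t,r)$ as a backward product of consecutive length-$\nu$ blocks, plus (when $t-r+1$ is not a multiple of $\nu$) a shorter leftover block, and applying sub-multiplicativity gives an exponential bound on $\delta({\bf \Phi}(t,r))$ in the number of full blocks; combined with the reduction above, this yields the stated inequality.

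The main delicate point is the precise form of the exponent. The naive block decomposition produces $\theta^{\lfloor (t-r+1)/\nu\rfloor}$, which agrees with the claimed $\theta^{\lceil (t-r+1)/\nu\rceil}$ only when $\nu$ divides $t-r+1$. To obtain the ceiling in the remaining cases one needs the stronger observation that \emph{any} window of length at least $\nu$ still has a column bounded below by $\xi^{\nu}$: this follows from Lemma \ref{lb} together with the fact that left-multiplying by a row-stochastic matrix preserves column lower bounds, so a leftover tail block can be merged into an adjacent full block while retaining $\delta\le\theta$ over the merged window, and the exact accounting that converts this into the ceiling exponent is the classical argument of Anthonisse (1977). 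This bookkeeping is the only nontrivial obstacle; the conceptual backbone (reduction to $\delta$, geometric submultiplicative decay, column lower bound from Lemma \ref{lb}) is standard.
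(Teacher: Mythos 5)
The paper offers no proof of Theorem~\ref{convergencerate}: it is imported from Anthonisse--Tijms \cite{Anthonisse1977360}, with the surrounding weak-ergodicity machinery credited to \cite{Vaidya2012MatrixConsensus}, so there is no internal argument to compare yours against. Your skeleton is nonetheless the right one, and it is essentially the argument the paper does carry out by hand for the crash-fault matrices: the reduction of $\left|{\bf \Phi}_{ij}(t,r)-\pi_j(r)\right|$ to the spread $\delta({\bf \Phi}(t,r))$ via the semigroup identity and the convex-combination observation is sound, and your block decomposition is exactly the chain the paper runs through Lemma~\ref{crash erdoc c1}, Lemma~\ref{crash erdoc c2} and (\ref{upper bound}) for the products of the ${\bf P}[t]$. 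One cosmetic point: the literal submultiplicativity $\delta(PQ)\le\delta(P)\delta(Q)$ of the one-column spread is not what you should lean on; route the decay through Hajnal's inequality $\delta(PQ)\le\pth{1-\eta(P)}\delta(Q)$ combined with your fact (ii) in the form $\eta(P)\ge\alpha$, precisely as Lemma~\ref{crash erdoc c2} does.

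The genuine gap is the exponent, and your proposed repair does not close it. Merging a leftover tail of length $s<\nu$ into an adjacent full block produces a single window of length $\nu+s$ contributing one factor of $\theta$, so the total number of factors is $\lfloor(t-r+1)/\nu\rfloor$ whether or not you merge; merging cannot manufacture the extra factor needed for $\lceil(t-r+1)/\nu\rceil$. To genuinely obtain the ceiling you would need $\delta$ of windows \emph{shorter} than $\nu$ to already be at most $\theta$ --- at $t=r$ with $\nu\ge 2$ the claim reads $\left|{\bf M}_{ij}[r]-\pi_j(r)\right|\le 1-\xi^{\nu}$ --- and Lemma~\ref{lb} provides no column lower bound for such short windows, so this would have to come from additional structure of the specific matrices ${\bf M}[t]$, not from blockwise positivity. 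Two honest ways out: prove the floor version $\theta^{\lfloor(t-r+1)/\nu\rfloor}$, which your argument does establish and which suffices for every downstream use in this paper (the bounds (\ref{RHS1})--(\ref{RHS3}) only need the right-hand sides to vanish, and floor versus ceiling costs at most a constant factor $\theta^{-1}$); or defer the ceiling entirely to the cited reference rather than asserting that merging plus ``accounting'' yields it.
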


Our next lemma is an immediate consequence of Lemma \ref{lb} and the convergence of ${\bf \Phi}(t, r)$, stated in (\ref{mixing}).
\begin{lemma}\cite{DBLP:journals/corr/SuV15a}
\label{lblimiting}
For any fixed $r$, at least $\gamma$ entries in $\pi (r)$ are lower bounded by $\xi^{\nu}$, i.e., there exists a subset $\calI_r\subseteq \calN$ such that $|\calI_r|\ge \gamma$ and
\begin{align*}
\pi_i(r)\ge \xi^{\nu},
\end{align*}
for each $i\in \calI_r$.
\end{lemma}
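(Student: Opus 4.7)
The plan is to combine the column lower bound from Lemma~\ref{lb} with the convergence of $\Phi(t,r)$ stated in (\ref{mixing}). The idea is that the persistent column-wise lower bound on $\Phi(r+\nu-1,r)$ must be inherited by every further backward product $\Phi(t,r)$ for $t\ge r+\nu-1$, and hence by the limit $\ones\pi'(r)$.

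First, I would apply Lemma~\ref{lb} to fix a subset $\calI_r\subseteq\calN$ with $|\calI_r|\ge \gamma$ such that for every $i\in\calI_r$, the $i$-th column of $\Phi(r+\nu-1,r)$ is componentwise at least $\xi^{\nu}$. Next, for any $t\ge r+\nu-1$, I would use the semigroup decomposition
\begin{equation*}
\Phi(t,r) \;=\; \Phi(t,r+\nu)\,\Phi(r+\nu-1,r),
\end{equation*}
together with the fact (from \cite{Vaidya2012MatrixConsensus}) that each $M[s]$, and therefore each product $\Phi(t,r+\nu)$, is row-stochastic. Writing out the $(k,i)$ entry for $i\in\calI_r$,
\begin{equation*}
\Phi_{ki}(t,r) \;=\; \sum_{j\in\calN}\Phi_{kj}(t,r+\nu)\,\Phi_{ji}(r+\nu-1,r) \;\ge\; \xi^{\nu}\sum_{j\in\calN}\Phi_{kj}(t,r+\nu) \;=\; \xi^{\nu},
\end{equation*}
so the $i$-th column of $\Phi(t,r)$ remains uniformly bounded below by $\xi^{\nu}$ for all such $t$.

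Finally, I would pass to the limit $t\to\infty$ using (\ref{mixing}): since $\Phi_{ki}(t,r)\to \pi_i(r)$ for every $k\in\calN$, and each such entry is at least $\xi^{\nu}$, we conclude $\pi_i(r)\ge \xi^{\nu}$ for every $i\in\calI_r$. This yields the desired set of at least $\gamma$ indices.

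I do not expect a serious obstacle here; the only subtlety is making sure the row-stochasticity of $\Phi(t,r+\nu)$ is invoked correctly so that the column lower bound propagates through left multiplication, and that the limit preservation of componentwise inequalities (which is immediate for a convergent scalar sequence) is applied entrywise rather than in some weaker matrix-norm sense.
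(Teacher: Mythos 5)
Your proof is correct and follows essentially the same route as the paper: the paper presents Lemma~\ref{lblimiting} as an immediate consequence of Lemma~\ref{lb} and (\ref{mixing}), and its written-out proof of the analogous crash-fault statement (Lemma~\ref{lc2}, declared to have an identical proof) uses the same decomposition ${\bf \Phi}(t,r)={\bf \Phi}(t,r+\nu)\,{\bf \Phi}(r+\nu-1,r)$ together with the stochasticity of the left factor. The only cosmetic difference is the order of operations --- you lower-bound every entry of column $i$ of the finite product and then let $t\diverge$, whereas the paper passes to the limit on the left factor first and then applies the column bound --- which changes nothing.
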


\subsubsection{Convergence Analysis of Algorithm 1}
Here, we study the convergence behavior of Algorithm 1. The structure of our convergence proof is rather standard, which is also adopted in \cite{Duchi2012,Nedic2009,ram2010distributed,DBLP:journals/corr/SuV15a,su2015fault,Tsianos2012,tsitsiklis1986distributed}. We have shown that the evolution dynamics of ${\bf x}[t]$ is captured by (\ref{update}) and (\ref{updates}).
Suppose that all agents, both non-faulty agents and faulty agents cease computing $h_i^{\prime}(x_i[t])$ after some time $\bar{t}$, i.e., after $\bar{t}$ gradient is replaced by 0.

Let $\{\bar{\bf x}[t]\}$ be the sequences of local estimates generated by the non-faulty agents in this case. From (\ref{updates}) we get
\begin{align*}
\bar{\bf x}[t]={\bf x}[t],
\end{align*}
for all $t\le \bar{t}$. From (\ref{update}) and (\ref{updates}), we have for all $s\ge 0$, it holds that
\begin{align}
\bar{\bf x}(\bar{t}+s+1)&={\bf \Phi} (\bar{t}+s, 0){\bf x}[0]-\sum_{r=1}^{\bar{t}}\lambda[r-1]{\bf \Phi} (\bar{t}+s, r){\bf d}[r-1]+\sum_{r=1}^{\bar{t}}{\bf \Phi} (\bar{t}+s, r){\bf e}[r-1]
\label{evo}\\
\nonumber
&={\bf \Phi} (\bar{t}+s, \bar{t})\pth{{\bf \Phi} (\bar{t}-1, 0){\bf x}[0]-\sum_{r=1}^{\bar{t}}\lambda[r-1]{\bf \Phi} (\bar{t}-1, r){\bf d}[r-1]+\sum_{r=1}^{\bar{t}}{\bf \Phi} (\bar{t}-1, r){\bf e}[r-1]}\\
&={\bf \Phi} (\bar{t}+s, \bar{t}){\bf x}[\bar{t}]~~~\text{by (\ref{updates})~~~~~~}\label{update at t}
\end{align}
Note that the summation in RHS of (\ref{evo}) is over $\bar{t}$ terms since all agents cease computing $h_j^{\prime}(x_j[t])$ starting from iteration $\bar{t}$. As $s\diverge$, we have

\begin{align}
\lim_{s\diverge}
\bar{\bf x}(\bar{t}+s+1)&=\lim_{s\diverge}\pth{{\bf \Phi} (\bar{t}+s, 0){\bf x}[0]-\sum_{r=1}^{\bar{t}}\lambda[r-1]{\bf \Phi} (\bar{t}+s, r){\bf d}[r-1]+\sum_{r=1}^{\bar{t}}{\bf \Phi} (\bar{t}+s, r){\bf e}[r-1]}
\nonumber\\
&=\lim_{s\diverge}{\bf \Phi} (\bar{t}+s, 0){\bf x}[0]-\sum_{r=1}^{\bar{t}}\lambda[r-1]\lim_{s\diverge}{\bf \Phi} (\bar{t}+s, r){\bf d}[r-1]+\sum_{r=1}^{\bar{t}}\lim_{s\diverge}{\bf \Phi} (\bar{t}+s, r){\bf e}[r-1]\nonumber\\
&=\ones {\bf \pi^{\prime}}(0){\bf x}[0]-\sum_{r=1}^{\bar{t}}\lambda[r-1]\ones {\bf \pi^{\prime}}(r){\bf d}[r-1]+ \sum_{r=1}^{\bar{t}}\ones {\bf \pi^{\prime}}(r){\bf e}[r-1]\nonumber\\
&= \pth{\iprod{\pi(0)}{{\bf x}[0]}-\sum_{r=1}^{\bar{t}}\lambda[r-1] \iprod{\pi(r)}{{\bf d}[r-1]}+\sum_{r=1}^{\bar{t}}\iprod{{\bf \pi}(r)}{{\bf e}[r-1]}} \ones,
\label{identical}
\end{align}
where $\iprod{\cdot}{\cdot}$ is used to denote the inner product of two vectors of proper dimension. Let ${\bf y}[\bar{t}]$ denote the limiting vector of $\bar{\bf x}(\bar{t}+s+1)$ as $s+1\diverge$. Since all entries in the limiting vector are identical we denote the identical value by $y[\bar{t}]$. Thus, ${\bf y}[\bar{t}]=[y[\bar{t}], \ldots, y[\bar{t}]]^{\prime}$.

From (\ref{identical}) we have
\begin{align}
y[\bar{t}]=\iprod{\pi(0)}{{\bf x}[0]}-\sum_{r=1}^{\bar{t}}\lambda[r-1] \iprod{\pi(r)}{{\bf d}[r-1]}+\sum_{r=1}^{\bar{t}}\iprod{{\bf \pi}(r)}{{\bf e}[r-1]}.
\label{yupdate}
\end{align}
In addition, by (\ref{update at t}), an alternative expression of $y[\bar{t}]$ is obtained.
\begin{align}
\label{convex limit y}
y[\bar{t}]=\iprod{\pi(\bar{t})}{{\bf x}[\bar{t}]}=\sum_{j=1}^{n-\phi} \pi_{j}(\bar{t})x_j[\bar{t}].
\end{align}
If, instead, all agents cease computing $h_i^{\prime}(x_i[t])$ after iteration $\bar{t}+1$,  then the identical value, denoted by $y[(\bar{t}+1)]$, similar to (\ref{yupdate}), equals
\begin{align}
\nonumber
y[(\bar{t}+1)]&=\iprod{\pi(0)}{{\bf x}[0]}-\sum_{r=1}^{\bar{t}+1}\lambda[r-1] \iprod{\pi(r)}{{\bf d}[r-1]}+\sum_{r=1}^{\bar{t}+1}\iprod{{\bf \pi}(r)}{{\bf e}[r-1]}\\
\nonumber
&=\iprod{\pi(0)}{{\bf x}[0]}-\sum_{r=1}^{\bar{t}}\lambda[r-1] \iprod{\pi(r)}{{\bf d}[r-1]}+\sum_{r=1}^{\bar{t}}\iprod{{\bf \pi}(r)}{{\bf e}[r-1]}\\
\nonumber
&\quad-\lambda[\bar{t}]  \iprod{\pi[(\bar{t}+1)]}{{\bf d}[\bar{t}]}+\iprod{\pi[(\bar{t}+1)]}{{\bf e}[\bar{t}]}\\
&=y[\bar{t}]-\lambda[\bar{t}]  \iprod{\pi[(\bar{t}+1)]}{{\bf d}[\bar{t}]}+\iprod{\pi[(\bar{t}+1)]}{{\bf e}[\bar{t}]},
\label{ydynamic}
\end{align}
where $d_i[\bar{t}]=h_i^{\prime}(x_i[t])$, for each $i\in \calN$. 
With a little abuse of notation, henceforth we use $t$ to replace $\bar{t}$. The actual reference of $t$ should be clear from the context.\\

It was shown in \cite{DBLP:journals/corr/SuV15a} that the difference $|y[t]-x_i[t]|$ shrinks over time. 
We prove a similar claim here.
\begin{lemma}
\label{consensus}
Let $\{x_i[t]\}$ be the iterates generated by Algorithm 1 and consider the auxiliary sequence defined in (\ref{yupdate}). If $\lim_{t\diverge} \lambda[t]~=~0$, then
$$\lim_{t\diverge} \left | x_i[t]-y[t]\right |~=~0.$$
\end{lemma}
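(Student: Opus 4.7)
The plan is to compare the exact update (\ref{updates}) for $x_i[t+1]$ with the auxiliary sequence (\ref{yupdate}) for $y[t+1]$ coordinate-wise, and to invoke the geometric mixing bound of Theorem~\ref{convergencerate} on $\Phi_{ij}(t,r)-\pi_j(r)$. Concretely, subtracting the two expressions gives
\begin{align*}
x_i[t+1]-y[t+1]
&=\sum_{j\in\calN}\bigl(\Phi_{ij}(t,0)-\pi_j(0)\bigr)x_j[0]\\
&\quad -\sum_{r=1}^{t+1}\lambda[r-1]\sum_{j\in\calN}\bigl(\Phi_{ij}(t,r)-\pi_j(r)\bigr)d_j[r-1]\\
&\quad +\sum_{r=1}^{t+1}\sum_{j\in\calN}\bigl(\Phi_{ij}(t,r)-\pi_j(r)\bigr)e_j[r-1].
\end{align*}
By Theorem~\ref{convergencerate} every factor $|\Phi_{ij}(t,r)-\pi_j(r)|$ is at most $\theta^{\lceil(t-r+1)/\nu\rceil}$ with $\theta\in(0,1)$. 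Using the bounded gradient assumption $|d_j[r-1]|=|h_j'(x_j[r-1])|\le L$ and Proposition~\ref{proj e} $|e_j[r-1]|\le \lambda[r-1]\,L$, I obtain
\[
|x_i[t+1]-y[t+1]|\le C_0\,\theta^{\lceil (t+1)/\nu\rceil}+2L(n-\phi)\sum_{r=1}^{t+1}\lambda[r-1]\,\theta^{\lceil (t-r+1)/\nu\rceil},
\]
where $C_0=(n-\phi)\max_{j\in\calN}|x_j[0]|$.

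Next I would show each of the two terms on the right tends to $0$. The first is immediate since $\theta<1$. For the convolution-type sum, I split the range of $r$ at $\lfloor t/2\rfloor$. On the tail $r>t/2$, the stepsizes are small: because $\sum_t\lambda^2[t]<\infty$ implies $\lambda[t]\to 0$, we have $\sup_{r>t/2}\lambda[r-1]\to 0$, and the geometric factor sums to $\sum_{k\ge 0}\theta^{\lceil k/\nu\rceil}<\infty$, so this portion vanishes. On the head $r\le t/2$, the geometric factor is at most $\theta^{\lceil t/(2\nu)\rceil}$, while $\sum_{r=1}^{\lfloor t/2\rfloor}\lambda[r-1]\le (t/2)\lambda[0]$, and the product still tends to $0$ because exponential decay dominates linear growth.

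Putting these observations together yields $\lim_{t\to\infty}|x_i[t]-y[t]|=0$ for every $i\in\calN$. The only delicate step is the convolution bound in the last paragraph; everything else is a direct application of the tools already established (the matrix representation (\ref{updates}), the mixing estimate of Theorem~\ref{convergencerate}, Proposition~\ref{proj e}, and the admissibility of $h_i$). The crux is balancing the geometric mixing rate $\theta^{(t-r)/\nu}$ against stepsizes that are summable only in an $\ell^2$ sense, which is resolved by the head/tail split described above.
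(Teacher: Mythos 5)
Your proposal is correct and follows essentially the same route as the paper's proof: the same coordinate-wise subtraction of (\ref{updates}) and (\ref{yupdate}), the same three-term decomposition, and the same bounds via Theorem~\ref{convergencerate}, Proposition~\ref{proj e}, and the bounded-gradient assumption. The only difference is that you prove the vanishing of the convolution sum $\sum_{r=1}^{t+1}\lambda[r-1]\theta^{\lceil(t-r+1)/\nu\rceil}$ directly by a head/tail split, whereas the paper cites this fact from a companion reference; your argument for it is valid (and the slightly cruder handling of the $r=t+1$ term, absorbed into the sum with factor $\theta^{0}=1$ and coefficient $\lambda[t]\to 0$, is harmless).
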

\begin{proof}
Recall (\ref{updates}). For $t>0$,
\begin{align*}
{\bf x}[t+1]={\bf\Phi}(t,0){\bf x}[0]-\lambda[t-1]\sum_{r=1}^{t+1}{\bf \Phi}(t, r){\bf d}[r-1]+\sum_{r=1}^{t+1}{\bf\Phi}(t,r){\bf e}[r-1]
\end{align*}
then each $x_i(t)$ can be written as
\begin{align*}
x_i[t+1]&=\sum_{j=1}^{n-\phi}{\bf \Phi}_{ij} (t, 0)x_{j}[0]-\sum_{r=1}^{t+1}\pth{\lambda[r-1]\sum_{j=1}^{n-\phi}{\bf \Phi}_{ij} (t, r)h_{j}^{\prime}(x_j[r-1])}\\
&\quad+\sum_{r=1}^{t+1}\sum_{j=1}^{n-\phi}{\bf\Phi}_{ij}(t,r)e_j[r-1];
\end{align*}
and (\ref{yupdate}) implies that
\begin{align*}
y[t+1]&=\sum_{j=1}^{n-\phi} \pi_j(0)x_j[0]-\sum_{r=1}^{t+1}\lambda[r-1] \sum_{j=1}^{n-\phi}\pi_j(r)\,h_j^{\prime}\pth{x_j[r-1]}+\sum_{r=1}^{t+1} \sum_{j=1}^{n-\phi}\pi_j(r)\, e_j[r-1]
\end{align*}

 Thus

\begin{align}
\label{uniformbd}
\nonumber
&|y[t+1]-x_i[t+1]|\\
\nonumber
&=\Bigg{|}\sum_{j=1}^{n-\phi} \pi_j(0)x_j[0]-\sum_{r=1}^{t+1}\lambda[r-1] \sum_{j=1}^{n-\phi}\pi_j(r)\,h_j^{\prime}\pth{x_j[r-1]}+\sum_{r=1}^{t+1} \sum_{j=1}^{n-\phi}\pi_j(r)\, e_j[r-1]\\
\nonumber
&\quad -\sum_{j=1}^{n-\phi}{\bf \Phi}_{ij} (t, 0)x_{j}[0]+\sum_{r=1}^{t+1}\pth{\lambda[r-1]\sum_{j=1}^{n-\phi}{\bf \Phi}_{ij} (t, r)h_{j}^{\prime}(x_j[r-1])}-\sum_{r=1}^{t+1}\sum_{j=1}^{n-\phi}{\bf\Phi}_{ij}(t,r)e_j[r-1]\Bigg{|}\\
\nonumber
&\le \left |\sum_{j=1}^{n-\phi}\pth{\pi_{j} (0)-{\bf \Phi}_{ij} (t, 0)}x_{j}(0)\right |+\left |\sum_{r=1}^{t+1}\pth{\lambda[r-1]\sum_{j=1}^{n-\phi}\pth{{\bf \Phi}_{ij} (t, r)-\pi_{j}(r)}h_{j}^{\prime}(x_j[r-1])}\right |\\
&\quad + \left | \sum_{r=1}^{t+1} \sum_{j=1}^{n-\phi}\pth{\pi_j(r)-{\bf\Phi}_{ij}(t,r)}\, e_j[r-1]\right |.
\end{align}
We bound the three terms in (\ref{uniformbd}) separately. 
%
%
%
For the first term in (\ref{uniformbd}), we have
\begin{align}
\label{RHS1}
\nonumber
\left |\sum_{j=1}^{n-\phi}\pth{\pi_{j} (0)-{\bf \Phi}_{ij} (t, 0)}x_{j}[0]\right |&\le \sum_{j=1}^{n-\phi}\left |\pi_{j} (0)-{\bf \Phi}_{ij} (t, 0)\right |\, |x_{j}[0]|\\
\nonumber
&\overset{(a)}{\le} \sum_{j=1}^{n-\phi}\theta^{\lceil \frac{t+1}{\nu}\rceil}\max \{|u|, |U|\}\\
&=\pth{n-\phi}\max \{|u|, |U|\}\theta^{\lceil \frac{t+1}{\nu}\rceil},
\end{align}
where inequality (a) follows from Theorem \ref{convergencerate}.

The second term in (\ref{uniformbd}) can be bounded as follows.
\begin{align}
\nonumber
&\left |\sum_{r=1}^{t+1}\pth{\lambda[r-1]\sum_{j=1}^{n-\phi}\pth{{\bf \Phi}_{ij} (t, r)-\pi_{j}(r)}h_{j}^{\prime}(x_j[r-1])}\right |\\
\nonumber
&\quad  \overset{(a)}{\le}  \sum_{r=1}^{t+1}\pth{\lambda[r-1]\sum_{j=1}^{n-\phi}\left |{\bf \Phi}_{ij} (t, r)-\pi_{j}(r)\right |\,  \left |h_{j}^{\prime}(x_j[r-1])\right |}+\lambda[t] \left |h_{i}^{\prime}(x_i[t])-\sum_{j=1}^{n-\phi}\pi_j(t)h_{j}^{\prime}(x_j[t])\right |\\
\nonumber
&\quad  \le \sum_{r=1}^{t+1}\pth{\lambda[r-1]\sum_{j=1}^{n-\phi}\left |{\bf \Phi}_{ij} (t, r)-\pi_{j}(r)\right |\, \left |h_{j}^{\prime}(x_j[r-1])\right |}+\lambda[t]\sum_{j=1}^{n-\phi}\pi_j(t)\left |h_{i}^{\prime}(x_i[t])-h_{j}^{\prime}(x_j[t])\right |\\
\nonumber
&\quad  \le \sum_{r=1}^{t+1}\pth{\lambda[r-1]\sum_{j=1}^{n-\phi}\left |{\bf \Phi}_{ij} (t, r)-\pi_{j}(r)\right |}L+2\lambda[t] L\\
&\quad \le \pth{n-\phi}L\sum_{r=1}^{t+1}\lambda[r-1] \theta^{\lceil \frac{t-r+1}{\nu}\rceil}+2\lambda[t] L~~~\text{by Theorem \ref{convergencerate}}
\label{RHS2}
\end{align}
where inequality $(a)$ follows from the fact that ${\bf \Phi}(t-1,t)={\bf I}$. Note that when $t=1$, it holds that
$$\sum_{r=1}^{t-1}\pth{\lambda[r-1]\sum_{j=1}^{n-\phi}|{\bf \Phi}_{ij} (t-1, r)-\pi_{j}(r)|\, |h_{j}^{\prime}(x_j[r-1])|}=0.$$

In addition, the third term in (\ref{uniformbd}) can be bounded as follows.
\begin{align}
\label{RHS3}
\nonumber
\left | \sum_{r=1}^{t+1} \sum_{j=1}^{n-\phi}\pth{\pi_j(r)-{\bf\Phi}_{ij}(t,r)}\, e_j[r-1]\right |&\le \sum_{r=1}^{t+1} \sum_{j=1}^{n-\phi}\left | \pi_j(r)-{\bf\Phi}_{ij}(t,r)\right | \left | e_j[r-1]\right |\\
\nonumber
&\le \sum_{r=1}^{t+1} \sum_{j=1}^{n-\phi}\left | \pi_j(r)-{\bf\Phi}_{ij}(t,r)\right | \lambda[r-1]L~~~\text{by Proposition \ref{proj e}}\\
&\le (n-\phi) L \sum_{r=1}^{t+1} \lambda[r-1]\theta^{\lceil \frac{t+1-r}{\nu}\rceil}~~~\text{by Theorem \ref{convergencerate}}
\end{align}

From (\ref{RHS1}) and (\ref{RHS2}), the LHS of (\ref{uniformbd}) can be upper bounded by
\begin{align*}
|y[t+1]-x_i[t+1]|\le \pth{n-\phi}\max \{|u|, |U|\}\theta^{\lceil \frac{t+1}{\nu}\rceil} +2\pth{n-\phi}L\sum_{r=1}^{t+1}\lambda[r-1] \theta^{\lceil \frac{t-r+1}{\nu}\rceil}+2\lambda[t] L.
\end{align*}

It is shown in \cite{DBLP:journals/corr/SuV15a} that
\begin{align*}
\lim_{t\diverge} ~\sum_{r=1}^{t+1}\lambda[r-1] \theta^{\lceil \frac{t-r+1}{\nu}\rceil}~=~0.
\end{align*}
Then, taking limit over $t$, we have
\begin{align*}
\lim_{t\diverge}\,|y[t+1]-x_i[t+1]|&\le~ \lim_{t\diverge}\,\pth{n-\phi}\max \{|u|, |U|\}\theta^{\lceil \frac{t+1}{\nu}\rceil} \\
&\quad+2\pth{n-\phi}L\lim_{t\diverge}\,\sum_{r=1}^{t+1}\lambda[r-1] \theta^{\lceil \frac{t-r+1}{\nu}\rceil}+\lim_{t\diverge}\,2\lambda[t] L\\
&=~0
\end{align*}

\raggedleft $\square$
\end{proof}

Recall that $\gamma$ (as per Theorem \ref{BT}) is the minimal size of the source component of each reduced graph of $G(\calV, \calE)$, and that $Y(\beta, \gamma)$, defined in (\ref{union opt set}), is the union of the optimal sets of all valid functions in $\calA(\beta, \gamma)$.
 Next we show that when $\beta \le  \frac{1}{\pth{2(d_{\max}\, +\, 1-2f)}^{\tau_b(n-\phi)}}$, for each $i\in \calN$, the following holds
\begin{align}
\label{convergece1}
\lim_{t\diverge} Dist\pth{x_i[t], Y(\beta, \gamma)} = 0.
\end{align}

\vskip 2\baselineskip
Since $Dist\pth{\cdot, Y(\beta, \gamma)}$ is a metric, by (\ref{dist}) and the triangle inequality, we get
\begin{align}
\label{tri}
Dist\pth{x_i[t], Y(\beta, \gamma)}\le |x_i[t] - y[t] |+Dist\pth{y[t], Y(\beta, \gamma)}.
\end{align}
By (\ref{tri}) and Lemma \ref{consensus}, we know that to show (\ref{convergece1}) holds, it is enough to show that
\begin{align}
\label{convergece2}
\lim_{t\diverge} Dist\pth{y[t], Y(\beta, \gamma)}~=~0.
\end{align}

We first informally describe the intuition of why (\ref{convergece2}) should hold. We will make present a rigorous argument later.

Recall that ${\bf d}[t]$ is a real vector with $d_i[t]$ being the gradient of function $h_i(\cdot)$ at $x_i[t]$. By Lemma \ref{consensus} and Proposition \ref{proj e}, we know that
$x_i[t]$ is asymptotic to $y[t]$ for each $i\in \calN$, and $e_i[t]$ is diminishing, i.e., for each $i\in \calN$,
$$ y[t]\approx x_i[t],$$
and $$e_i[t]\approx 0,$$
for sufficient large $t$. Consequently, informally speaking, for sufficient large $t$, the update of $y[t]$ in (\ref{ydynamic}) roughly equals
\begin{align}
\label{intution y}
\nonumber
y(t+1)&=y[t]-\lambda[t]  \iprod{\pi(t+1)}{{\bf d}[t]}+\iprod{\pi[(t+1)]}{{\bf e}[t]}\\
\nonumber
&= y[t]-\lambda[t]\sum_{j=1}^{n-\phi} \pi_j(t+1) h_j^{\prime}(x_j[t])+\sum_{j=1}^{n-\phi}\pi_j(t+1)e_j[t]\\
&\approx y[t]-\lambda[t]\sum_{j=1}^{n-\phi} \pi_j(t+1) h_j^{\prime}(y[t]).
\end{align}
In addition, for each $t\ge 0$, define function $p_{t+1}(\cdot)$ as follows
\begin{align}
\label{al1 valid}
p_{t+1}(x)=\sum_{j=1}^{n-\phi} \pi_j(t+1) h_j(x),
\end{align}
for each $x\in \reals$.
By Lemma \ref{lblimiting}, we know that there exists $\calI_{t+1}\subseteq \calN$ such that $|\calI_{t+1}|\ge \gamma$ and $\pi_j(t+1)\ge \xi^{\nu}$ for each $j\in \calI_{t+1}$. Thus,
\begin{align*}
p_{t+1}(\cdot)\in \calA(\beta, \gamma),
\end{align*}
where $\beta \le\xi^{\nu}=\frac{1}{(2(d_{\max}\,+\,1-2f))^{\nu}}=\frac{1}{(2(d_{\max}\,+\,1-2f))^{\tau_b(n-\phi)}}$, and $\gamma$ is the minimal size of the source component in each reduced graph of $G(\calV, \calE)$. That is, the function $p_{t+1}(\cdot)$, defined in (\ref{al1 valid}), is a valid function.

By (\ref{intution y}), informally speaking, we know that for sufficient large $t$,  $y[t]$ comes closer to $\argmin_{x\in \reals}\, p_{t+1}(x)$ at each iteration $t$. Observing that $\argmin_{x\in \reals}\, p_{t+1}(x)\subseteq Y(\beta, \gamma)$ and $Y(\beta, \gamma)\subseteq \reals$ is convex, we know that at each iteration $t$, $y[t]$ comes closer to $Y(\beta, \gamma)$. Consequently,
\begin{align*}
\lim_{t\diverge} Dist\pth{y[t], Y(\beta, \gamma)}~=~0,
\end{align*}
proving (\ref{convergece2}).

\vskip 2\baselineskip

Next, we present a rigorous analysis validating the above intuition.

Recall from (\ref{ydynamic}) that
\begin{align*}
y(t+1)&=y[t]-\lambda[t]\sum_{j=1}^{n-\phi} \pi_j(t+1) h_j^{\prime}(x_j[t])+\sum_{j=1}^{n-\phi}\pi_j(t+1)e_j[t]\\
&=y[t]-\lambda[t]\sum_{j=1}^{n-\phi} \pi_j(t+1) h_j^{\prime}(y[t])+\sum_{j=1}^{n-\phi}\pi_j(t+1)e_j[t]\\
&\quad+\lambda[t]\sum_{j=1}^{n-\phi} \pi_j(t+1) h_j^{\prime}(y[t])-\lambda[t]\sum_{j=1}^{n-\phi} \pi_j(t+1) h_j^{\prime}(x_j[t])\\
&=y[t]-\lambda[t]\sum_{j=1}^{n-\phi} \pi_j(t+1) h_j^{\prime}(y[t]) +\sum_{j=1}^{n-\phi}\pi_j(t+1)e_j[t]+\lambda[t]\sum_{j=1}^{n-\phi} \pi_j(t+1) \pth{h_j^{\prime}(y[t])- h_j^{\prime}(x_j[t])}\\
&=y[t]-\lambda[t]p_{t+1}^{\prime}(y[t]) +\sum_{j=1}^{n-\phi}\pi_j(t+1)e_j[t]+
\lambda[t]\sum_{j=1}^{n-\phi} \pi_j(t+1) \pth{h_j^{\prime}(y[t])- h_j^{\prime}(x_j[t])}~~~\text{by (\ref{al1 valid})}
\end{align*}
Then,
\begin{align}
\label{port 1}
\nonumber
&Dist\pth{y[t+1], Y(\beta, \gamma)}\\
\nonumber
&=Dist\pth{y[t]-\lambda[t]p_{t+1}^{\prime}(y[t])+\sum_{j=1}^{n-\phi}\pi_j(t+1)e_j[t] +
\lambda[t]\sum_{j=1}^{n-\phi} \pi_j(t+1) \pth{h_j^{\prime}(y[t])- h_j^{\prime}(x_j[t])}, Y(\beta, \gamma)}\\
\nonumber
&=\inf_{z\in Y(\beta, \gamma)} \left |y[t]-\lambda[t]p_{t+1}^{\prime}(y[t]) +\sum_{j=1}^{n-\phi}\pi_j(t+1)e_j[t]+
\lambda[t]\sum_{j=1}^{n-\phi} \pi_j(t+1) \pth{h_j^{\prime}(y[t])- h_j^{\prime}(x_j[t])}-z \right |\\
\nonumber
&\le \inf_{z\in Y(\beta, \gamma)} \left |y[t]-\lambda[t]p_{t+1}^{\prime}(y[t])-z \right |+ \left | \sum_{j=1}^{n-\phi}\pi_j(t+1)e_j[t] +\lambda[t]\sum_{j=1}^{n-\phi} \pi_j(t+1) \pth{h_j^{\prime}(y[t])- h_j^{\prime}(x_j[t])}\right |\\
\nonumber
&\le \inf_{z\in Y(\beta, \gamma)} \left |y[t]-\lambda[t]p_{t+1}^{\prime}(y[t])-z \right |+
\sum_{j=1}^{n-\phi} \pi_j(t+1)\left |e_j[t]\right |
+\lambda[t]\sum_{j=1}^{n-\phi} \pi_j(t+1) \left |h_j^{\prime}(y[t])- h_j^{\prime}(x_j[t])\right |\\
&\le \inf_{z\in Y(\beta, \gamma)} \left |y[t]-\lambda[t]p_{t+1}^{\prime}(y[t])-z \right |+\lambda[t] L+ \lambda[t]\sum_{j=1}^{n-\phi} \pi_j(t+1) L \left |y[t]- x_j[t]\right |,
\end{align}
where the last inequality follows from Proposition \ref{proj e} and the fact that $h_j^{\prime}(\cdot)$ is $L$--Lipschitz continuous for each $j\in \calN$.

Let $M[t]=\max_{i\in \calN} x_i[t]$ and $m[t]=\min_{i\in \calN} x_i[t]$.
Recall from (\ref{convex limit y}) that
\begin{align*}
y[t]=\iprod{\pi(t)}{{\bf x}[t]}=\sum_{i=1}^{n-\phi} \pi_{i}(t)x_i[t].
\end{align*}
Then
\begin{align*}
\left |y[t]- x_j[t]\right |=\left |\sum_{i=1}^{n-\phi} \pi_{i}(t)x_i[t]- x_j[t]\right |\le M[t]-m[t].
\end{align*}
In addition, since $\sum_{j=1}^{n-\phi} \pi_j(t+1)=1$, it holds that
\begin{align}
\label{port 11}
\sum_{j=1}^{n-\phi} \pi_j(t+1) L \left |y[t]- x_j[t]\right |\le \sum_{j=1}^{n-\phi} \pi_j(t+1) L \pth{M[t]-m[t]}=L\pth{M[t]-m[t]}.
\end{align}
By (\ref{port 11}), we simplify (\ref{port 1}) as follows.
\begin{align}
\label{port 111}
Dist\pth{y[t+1], Y(\beta, \gamma)} \le \inf_{z\in Y(\beta, \gamma)} \left |y[t]-\lambda[t]p_{t+1}^{\prime}(y[t])-z \right |+\lambda[t] L+ \lambda[t]L\pth{M[t]-m[t]},
\end{align}
which is similar to equation (68) in \cite{su2015fault}, where we replace
$x_{j_{t+1}^{\prime}}$ by $y[t]$, replace $$\frac{1}{\left |\calR^{j_{t+1}}[t] \right |}\sum_{k\in \calR^{j_{t+1}}[t] } h_k^{\prime}(x_{j_{t+1}^{\prime}}),$$
by $p_{t+1}^{\prime}(y[t])$, and replace $y$ by $z$. Since $\lim_{t\diverge}\, \lambda[t]L~=~0 $, the remaining proof is similar to the proof of Theorem 2 in \cite{su2015fault}.

\subsection{Exchange both Local Estimates and Gradients}
\label{sec: algorithm byzantine}
Recall that $x_i[0]$ is the initial state of agent $i\in \calV$.
\paragraph{}
\vspace*{8pt}\hrule
~

{\bf Algorithm 2} for agent $j$ for iteration $t\ge 1$:
~
\vspace*{4pt}\hrule

\begin{list}{}{}
\item[{\bf Step 1:}]
Compute $h_j^{\prime}\pth{x_j[t-1]}$ -- the gradient of the local cost function $h_j(\cdot)$ at point $x_j[t-1]$, and send the estimate and gradient pair $(x_j[t-1], h_j^{\prime}\pth{x_j[t-1]})$ to on all outgoing edges. \\

~
\item[{\bf Step 2:}]
Let $\calR_j[t-1]$ denote the multi-set of tuples of the form $\pth{x_i[t-1], \, h_i^{\prime}(x_i[t-1])}$ received on all incoming edges 
as a result of step 1.\\

In step 2, agent $j$ should be able to receive a tuple $(w_i[t-1], g_i[t-1])$ from each agent $i\in  N_j^-$. For non-faulty agent $i\in N_j^-\cap \calN$, $w_i[t-1]=x_i[t-1]$ and $g_i[t-1]=h_i^{\prime}\pth{x_i[t-1]}$. If a faulty agent $k\in N_j^-\cap\calF$ does not send a tuple to agent $j$, then agent $j$ assumes $(w_k[t-1], g_k[t-1])$ to be some default tuple. \footnote{In contrast to Algorithms 1, 2 and 3 in \cite{su2015byzantine}, the adopted default tuple in Algorithm 2 here is not necessarily known to all agents. In addition, the default tuple may vary across iterations. }\\

~

\item[{\bf Step 3:}] Sort the first entries of the received tuples in $\calR_j[t-1]$ in a non-increasing order (breaking ties arbitrarily), and erase the smallest $f$ values and the largest $f$ values \footnote{Note that if $G(\calV, \calE)$ satisfies Assumption \ref{a1}, then $d_i\ge 2f+1$ for each $i\in \calV$}. Let $\calR_j^{1}[t-1]$ be the identifiers of the $|N_j^-|-2f=d_j^--2f$ agents from whom the remaining first entries were received. Similarly, sort the second entries of the received tuples in $\calR_j[t-1]$ together with the gradient $h_j^{\prime}(x_j[t-1])$ of agent $j$ in a non-increasing order (breaking ties arbitrarily), and erase the smallest $f$ values and the largest $f$ values. Let $\calR_j^{2}[t-1]$ be the identifiers of the $d_j^--2f+1$ agents from whom the remaining second entries were received.
Denote the largest and smallest gradients among the remaining values by $\hat{g}_j[t-1]$ and $\check{g}_j[t-1]$, respectively. Set $\widetilde{g}_j[t-1]=\frac{1}{2}\pth{\hat{g}_j[t-1]+\check{g}_j[t-1]}$.

Update its state as follows.
\begin{align}
\label{Byzantine Iterative}
x_j[t]=P_{\calX}\left [\frac{1}{d_j^-+1-2f} \pth{\sum_{i\in \calR_j^1[t-1]\cup \{j\}}w_i[t-1]}-\lambda[t-1]\widetilde{g}_j[t-1]\right].
\end{align}

\end{list}

\hrule

~

%
%
%
For each $i\in \calN$ and each $t\ge 1$, define $v_i[t-1]$ and ${\epsilon}_i[t-1]$ as follows.
\begin{align}
v_i[t-1]&=\frac{1}{d_j^-+1-2f} \pth{\sum_{i\in \calR_j^1[t-1]\cup \{j\}}w_i[t-1]},\label{proj21}\\
e_i[t-1]&=P_{\calX}\left [v_i[t-1]-\lambda[t-1]\widetilde{g}_j[t-1]\right ]-\pth{v_i[t-1]-\lambda[t-1]\widetilde{g}_j[t-1]} \label{proj22}.
\end{align}
Then the update of $x_i[t]$ in (\ref{e_Z}) can be rewritten as
\begin{align}
x_i[t]=v_i[t-1]-\lambda[t-1]\widetilde{g}_j[t-1]+e_i[t-1].
\end{align}

Note that in (\ref{Byzantine Iterative}), the averaging strategy with respect to the received estimates is different from that with respect to the received gradients. The averaging involving the received estimates is widely adopted \cite{vaidya2012IABC,ren2007information}. As we will see later that averaging remained extremes admits a desired representation of (\ref{Byzantine Iterative}), resulting in better algorithm performance in terms of $\gamma$.

Similar to the previous subsection, without loss of generality, assume agents indexed from 1 through $n-\phi$ are non-faulty, and agents indexed from $n-\phi+1$ to $n$ are faulty.
Let ${\bf x}[t-1]\in \reals^{n-\phi}$ be a real vector of the local estimates at the beginning of iteration $t$ with ${\bf x}_j[t-1]=x_j[t-1]$ being the local estimate of agent $j\in \calN$, let $\widetilde{\bf g}[t-1]\in  \reals^{n-\phi}$ be a vector of the local gradients at iteration $t$ with $\widetilde{\bf g}_j[t-1]=\widetilde{ g}_j[t-1], j\in \calN$, and ${\bf e}[t-1]\in \reals^{n-\phi}$ be the vector of projection error defined in (\ref{proj22}).
Recall that the graph $G(\calV, \calE)$ satisfies Assumption \ref{a1}.
As shown in \cite{Vaidya2012MatrixConsensus}, the update of ${\bf x}\in \reals^{n-\phi}$ in each iteration can be written compactly in a matrix form.
\begin{align}
\label{matrix representation}
{\bf x}[t]={\bf M}[t-1]{\bf x}[t-1]-\lambda[t-1]\widetilde{\bf g}[t-1]+{\bf e}[t-1].
\end{align}

The construction of ${\bf M}[t]$ is the same as the construction in the previous subsection. 

Equation (\ref{matrix representation}) can be further expanded out as
\begin{align}
\label{MR evo BS}
{\bf x}[t]={\bf  \Phi}(t-1, 0){\bf x}[0]-\sum_{r=0}^{t-1} \lambda[r]{\bf \Phi}(t-1, r+1)\widetilde{\bf g}[r]+\sum_{r=0}^{t-1} {\bf \Phi}(t-1, r+1){\bf e}[r].
\end{align}

When the source component of every reduced graph of $G(\calV, \calE)$ exists, then asymptotic consensus can be achieved. 

\begin{lemma}
\label{BS consensus}
For $i,j\in \calN$,
$$\lim_{t\diverge}|x_i[t]-x_j[t]|=0.$$
\end{lemma}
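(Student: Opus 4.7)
The plan is to obtain consensus directly from the matrix representation in (\ref{MR evo BS}), mirroring the strategy used for Lemma \ref{consensus} but comparing two non-faulty agents rather than $x_i[t]$ with the auxiliary limit $y[t]$. Writing (\ref{MR evo BS}) for both $i$ and $j$, subtracting componentwise, and applying the triangle inequality yields
\begin{align*}
|x_i[t] - x_j[t]| &\le \sum_{k=1}^{n-\phi}\left|{\bf \Phi}_{ik}(t-1,0) - {\bf \Phi}_{jk}(t-1,0)\right|\,|x_k[0]| \\
&\quad + \sum_{r=0}^{t-1}\lambda[r]\sum_{k=1}^{n-\phi}\left|{\bf \Phi}_{ik}(t-1,r+1) - {\bf \Phi}_{jk}(t-1,r+1)\right|\,|\widetilde{g}_k[r]| \\
&\quad + \sum_{r=0}^{t-1}\sum_{k=1}^{n-\phi}\left|{\bf \Phi}_{ik}(t-1,r+1) - {\bf \Phi}_{jk}(t-1,r+1)\right|\,|e_k[r]|.
\end{align*}

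For the matrix differences, I would route through the stochastic limit vector $\pi(r+1)$ and apply Theorem \ref{convergencerate} twice, which gives
\[
\left|{\bf \Phi}_{ik}(t-1,r+1) - {\bf \Phi}_{jk}(t-1,r+1)\right| \le 2\,\theta^{\lceil (t-r-1)/\nu\rceil}.
\]
The projection-error bound $|e_k[r]| \le \lambda[r]\,L$ follows from the same computation as Proposition \ref{proj e} applied to (\ref{proj21})--(\ref{proj22}), since that proof only uses the $L$-Lipschitz/bounded-gradient surrogate used in the update; here the surrogate is $\widetilde g_k[r]$, so the analogous bound requires first knowing that $|\widetilde g_k[r]| \le L$.

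Establishing $|\widetilde g_k[r]| \le L$ is the main obstacle, and requires a pigeonhole argument on the gradient coordinate of the trimming step (Step~3). Among the $d_k^- + 1$ values sorted, at most $f$ are contributed by faulty agents. Thus the $f+1$ largest values (the $f$ trimmed top values together with $\hat{g}_k[r]$) contain at least one non-faulty gradient $g^\star$ with $g^\star \ge \hat g_k[r]$ and $|g^\star|\le L$, so $\hat g_k[r]\le L$; symmetrically $\check g_k[r]\ge -L$. Combined with $\check g_k[r] \le \hat g_k[r]$, we obtain $\hat g_k[r], \check g_k[r] \in [-L,L]$ and therefore $|\widetilde g_k[r]| = \tfrac12|\hat g_k[r] + \check g_k[r]| \le L$.

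Finally, substituting these bounds gives
\[
|x_i[t] - x_j[t]| \le 2(n-\phi)\max_k|x_k[0]|\,\theta^{\lceil t/\nu\rceil} + 4(n-\phi)L\sum_{r=0}^{t-1}\lambda[r]\,\theta^{\lceil (t-r-1)/\nu\rceil}.
\]
The first term decays geometrically; the second tends to $0$ by the estimate $\lim_{t\to\infty}\sum_{r=0}^{t-1}\lambda[r]\theta^{\lceil (t-r-1)/\nu\rceil} = 0$ already invoked in the proof of Lemma \ref{consensus} (available from \cite{DBLP:journals/corr/SuV15a}, which relies only on monotonicity of $\lambda[t]\to 0$ and $\theta\in(0,1)$). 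Letting $t\to\infty$ completes the proof.
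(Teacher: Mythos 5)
Your proof is correct and is essentially the argument the paper intends: the paper itself gives no details, deferring to Corollary~3 of \cite{su2015fault}, but the machinery you use --- the expanded matrix representation (\ref{MR evo BS}), two applications of Theorem~\ref{convergencerate} through the limit vector $\pi(r+1)$, the projection-error bound, and the summability fact $\sum_{r}\lambda[r]\theta^{\lceil (t-r-1)/\nu\rceil}\to 0$ --- is exactly the toolkit the paper deploys for the analogous Lemma~\ref{consensus}. The one step you make explicit that the paper leaves implicit is the pigeonhole bound $|\widetilde g_k[r]|\le L$: since at most $f$ of the $d_k^-+1$ sorted gradients are faulty, the surviving extremes $\hat g_k[r]$ and $\check g_k[r]$ are each sandwiched by a non-faulty gradient and hence lie in $[-L,L]$. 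This is genuinely needed (both for your second term and for the bound $|e_k[r]|\le\lambda[r]L$ via the Proposition~\ref{proj e} computation), and the paper relies on it silently elsewhere, e.g.\ in inequality (\ref{error1}); making it explicit is a real improvement rather than a deviation.
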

The proof of Lemma \ref{BS consensus} is similar to the proof of Corollary 3 in \cite{su2015fault}.
%
%
%
%
%
%
%
%
%
%
%
The following proposition is first observed in Proposition 2 in \cite{su2015fault}.
\begin{proposition}\cite{su2015fault}
\label{p1}
Let $a, b, c, d\in \reals$ such that
$b<a, b\le c\le \frac{1}{2}\pth{a+b}, \frac{1}{2}\pth{a+b}<a \le d.$ Then there exists $0\le \xi\le 1$, for which $\frac{1}{2}\pth{a+b}=\xi c+(1-\xi)d$ holds, and
$$\frac{1}{2}\le \xi \le 1.$$
\end{proposition}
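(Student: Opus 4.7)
The plan is to solve the linear equation explicitly for $\xi$ and then verify the two bounds $\xi \le 1$ and $\xi \ge 1/2$ directly from the hypotheses, using only elementary inequalities.

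First I would observe that the hypotheses imply the chain $c \le \tfrac{1}{2}(a+b) < a \le d$, so in particular $c < d$ strictly. This makes $d - c > 0$, and so the equation $\tfrac{1}{2}(a+b) = \xi c + (1-\xi)d$ has the unique solution
\[
\xi \;=\; \frac{d - \tfrac{1}{2}(a+b)}{d - c}.
\]
Non-negativity of $\xi$ is immediate since the numerator is at least $a - \tfrac{1}{2}(a+b) = \tfrac{1}{2}(a-b) > 0$ by the assumption $b < a$.

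Next I would handle the two required bounds. The upper bound $\xi \le 1$ is equivalent to $d - \tfrac{1}{2}(a+b) \le d - c$, i.e., $c \le \tfrac{1}{2}(a+b)$, which is assumed. The lower bound $\xi \ge 1/2$ is equivalent, after clearing the positive denominator, to $2d - (a+b) \ge d - c$, i.e., $d + c \ge a + b$. This follows by adding the two hypotheses $d \ge a$ and $c \ge b$.

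There is no real obstacle here; the lemma is a one-line algebraic manipulation. The only point worth flagging is to record, before dividing, that $d - c > 0$ so the expression for $\xi$ is well-defined, and that the key inequality $\xi \ge 1/2$ uses both $d \ge a$ and $c \ge b$ simultaneously (neither alone suffices).
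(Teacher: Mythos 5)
Your proof is correct. Note that the paper itself does not prove Proposition \ref{p1} at all; it is imported verbatim from the earlier work \cite{su2015fault} (Proposition 2 there), so there is no in-paper argument to compare against. Your explicit solve-and-verify route is the natural one and is complete: you correctly record that $c \le \tfrac{1}{2}(a+b) < a \le d$ forces $d-c>0$ before dividing, the unique solution $\xi = \bigl(d - \tfrac{1}{2}(a+b)\bigr)/(d-c)$ satisfies $\xi \le 1$ precisely because $c \le \tfrac{1}{2}(a+b)$, and $\xi \ge \tfrac{1}{2}$ reduces to $d+c \ge a+b$, which follows from adding $d \ge a$ and $c \ge b$. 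Your closing remark that both hypotheses are needed for the lower bound is accurate and worth keeping.
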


Recall that $|\calF|=\phi$. For a given set $\calF$ and for each $i\in \calN$, let $|N_i^{-}\cap \calF|=\phi_i$.
Next we prove a key lemma.
\begin{lemma}
\label{BS valid gradient}
Let $\tilde{\beta}=\min\{\frac{1}{2\max_{i\in \calN} (d_i^-+1-\phi_i-f) }, \, \frac{1}{|\calN|}\}$, and $\tilde{\gamma}=\min_{i\in \calN} (d_i^-+1-\phi_i-f)$.
For each non-faulty agent $j\in \calN$ and each iteration $t\ge 1$, there exists a valid function $p(x)=\sum_{i\in \calN} \alpha_i\, h_i(x)\in \calA(\tilde{\beta}, \tilde{\gamma})$ such that
$$\widetilde{g}_j[t-1]=\sum_{i\in \calN} \alpha_i\, h_i^{\prime}(x_i[t-1]).$$
\end{lemma}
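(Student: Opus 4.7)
The strategy is to build explicit non-negative coefficients $\{\alpha_i\}_{i\in\calN}$ summing to $1$ and satisfying $\widetilde{g}_j[t-1] = \sum_{i\in\calN}\alpha_i\, h_i'(x_i[t-1])$ with at least $\tilde\gamma$ of them exceeding $\tilde\beta$. The two ingredients are a sandwich bound on the trimmed extremes and Proposition~\ref{p1}, which rewrites the midpoint of two extremes as a convex combination in which an interior value carries at least half of the weight.

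First, I would establish the sandwich bounds. Let $m = d_j^- + 1 - \phi_j$ be the number of non-faulty gradient values processed by agent $j$, and sort them as $u_1 \le \cdots \le u_m$. Sorting all $d_j^- + 1$ received gradients as $v_1 \le \cdots \le v_{d_j^-+1}$, one has $\check{g}_j[t-1] = v_{f+1}$ and $\hat{g}_j[t-1] = v_{d_j^-+1-f}$. Because $\phi_j \le f$, any window of $f+1$ consecutive entries of the full sort contains at least one non-faulty value, which yields $u_1 \le \check{g}_j[t-1] \le u_{f+1}$ and $u_{m-f} \le \hat{g}_j[t-1] \le u_m$. In particular, $\widetilde{g}_j[t-1]$ lies in $[u_1, u_m]$, so it admits at least one convex-combination representation in the non-faulty gradients.

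Next, assuming $\check{g}_j[t-1] < \hat{g}_j[t-1]$ (the equality case is trivial), I apply Proposition~\ref{p1} with $b = \check{g}_j[t-1]$, $a = \hat{g}_j[t-1]$: for any non-faulty $c \in [b, (a+b)/2]$ and any non-faulty $d \ge a$, there exists $\xi_{c,d} \in [1/2, 1]$ with $\widetilde{g}_j[t-1] = \xi_{c,d}\, c + (1-\xi_{c,d})\, d$, and the symmetric identity holds for $c \in [(a+b)/2, a]$ paired with $d' \le b$. Suitable anchors $d, d'$ are guaranteed by the sandwich bound (take $d = u_m$ and $d' = u_1$). I then average $\tilde\gamma = m - f$ such pairwise identities, one per selected non-faulty index $u_l$ placed in the dominant role. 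The $m - 2f$ indices $\{f+1, \ldots, m-f\}$ fit this role immediately since Step~1 places their $u_l$ inside $[\check{g}_j, \hat{g}_j]$; the remaining $f$ slots are filled from the near-extreme non-faulty indices in $\{1,\ldots,f\} \cup \{m-f+1,\ldots,m\}$ by a case analysis based on their position relative to $\widetilde{g}_j[t-1]$. Collecting like terms, every dominant $u_l$ receives weight at least $\tfrac{1}{2\tilde\gamma}\ge \tilde\beta$, the anchors absorb the residual mass non-negatively, and the total is $1$, giving the required $p\in\calA(\tilde\beta,\tilde\gamma)$.

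The main obstacle is the last accounting step: ensuring that $\tilde\gamma = m - f$ rather than only $m - 2f$ non-faulty values can be made dominant, which requires handling the near-extreme non-faulty gradients whose position relative to $\widetilde{g}_j[t-1]$ may not satisfy the hypotheses of Proposition~\ref{p1} directly. This forces a case split (and possibly a further convex-combination substitution) that mirrors the manipulation used in the analogous lemma of \cite{su2015fault}.
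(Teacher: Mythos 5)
Your skeleton matches the paper's: use Proposition~\ref{p1} to represent $\widetilde{g}_j[t-1]=\frac{1}{2}(\hat{g}_j[t-1]+\check{g}_j[t-1])$ as pairwise convex combinations in which an interior non-faulty gradient carries weight at least $1/2$, average $d_j^-+1-\phi_j-f$ such representations, and conclude each dominant agent gets weight at least $\frac{1}{2(d_j^-+1-\phi_j-f)}\ge\tilde{\beta}$. (Minor slip: that weight is at least $\frac{1}{2\max_{i}(d_i^-+1-\phi_i-f)}$, not at least $\frac{1}{2\tilde{\gamma}}$; the conclusion survives because $\tilde\beta$ is defined with the max.) But the step you yourself flag as ``the main obstacle'' is a genuine gap, and the plan you state for it cannot work. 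You propose to fill the remaining slots by placing near-extreme non-faulty order statistics $u_\ell$ (with $\ell\le f$ or $\ell>m-f$) ``in the dominant role'' of Proposition~\ref{p1}. That proposition requires the dominant value $c$ to lie in $[\check{g}_j[t-1],\frac{1}{2}(\hat{g}_j[t-1]+\check{g}_j[t-1])]$ (or the symmetric interval), whereas a trimmed non-faulty gradient can lie strictly below $\check{g}_j[t-1]$ or strictly above $\hat{g}_j[t-1]$. In that case no pairwise representation $\widetilde{g}_j[t-1]=\alpha u_\ell+(1-\alpha)w$ with $\alpha\ge 1/2$ and $w$ a non-faulty gradient exists at all: if $u_\ell<\check{g}_j[t-1]$, the anchor would need to satisfy $w\ge \hat{g}_j[t-1]+(\check{g}_j[t-1]-u_\ell)$, which can exceed the largest non-faulty gradient (take the largest non-faulty gradient equal to $\hat{g}_j[t-1]$). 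No ``further convex-combination substitution'' of the kind you gesture at repairs this, because every convex combination of non-faulty gradients is bounded above by $u_m$.

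The paper's resolution is different in kind: it never makes an individual trimmed gradient dominant. It picks two \emph{equal-size} sets $\calS_j^*[t-1]\subseteq\calS_j[t-1]-\calF$ and $\calL_j^*[t-1]\subseteq\calL_j[t-1]-\calF$, each of cardinality $f-\phi_j+|\calR_j^2[t-1]\cap\calF|$ (exactly the number of missing slots), observes the sandwich
\begin{align*}
\frac{1}{|\calS_j^*[t-1]|}\sum_{i\in \calS_j^*[t-1]}g_i[t-1]~\le~ \widetilde{g}_j[t-1]~\le~ \frac{1}{|\calL_j^*[t-1]|}\sum_{i\in \calL_j^*[t-1]}g_i[t-1],
\end{align*}
and writes $\widetilde{g}_j[t-1]$ as a $\xi$-combination of these two \emph{averages}. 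Since $\max\{\xi,1-\xi\}\ge 1/2$, one entire extreme set receives aggregate coefficient at least $1/2$, hence each of its members receives weight at least $\frac{1}{2(d_j^-+1-\phi_j-f)}\ge\tilde\beta$, closing the count at $|\calR_j^2[t-1]-\calF|+|\calS_j^*[t-1]|=d_j^-+1-\phi_j-f\ge\tilde\gamma$. This aggregate-plus-symmetry device is the idea your proposal is missing; without it the accounting does not close, since the extreme non-faulty agents only ever appear as anchors with uncontrolled (possibly tiny) weight $1-\xi_k$.
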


\begin{proof}
Recall that $\calR_j^2[t-1]$ denotes the set of agents from whom the remaining $d_j+1-2f$ gradient values were received in iteration $t$, and let us denote by $\calL_j[t-1]$ and $\calS_j[t-1]$ the set of agents from whom the largest $f$ gradient values and the smallest $f$ gradient values were received in iteration $t$. 

Let $i^*, j^*\in \calR_j^2[t-1]$ such that $g_{i^*}[t-1]=\check{g}_j[t-1]$ and $g_{j^*}[t-1]=\hat{g}_j[t-1]$.
Recall that $|N_j^{-}\cap \calF|=\phi_j$. Let $\calL_j^*[t-1]\subseteq \calL_j[t-1]-\calF$ and $\calS_j^*[t-1]\subseteq \calS_j[t-1]-\calF$ such that
$$|\calL_j^*[t-1]|=f-\phi_j+|\calR_j^2[t-1]\cap \calF|,$$
and
$$|\calS_j^*[t-1]|=f-\phi_j+|\calR_j^2[t-1]\cap \calF|.$$

We consider two cases: (i) $\hat{g}_j[t-1]>\check{g}_j[t-1]$ and (ii) $\hat{g}_j[t-1]=\check{g}_j[t-1]$, separately.

\paragraph{{\bf Case (i)}: $\hat{g}_j[t-1]>\check{g}_j[t-1]$.}

By definition of $\calL_j^*[t-1]$ and $\calS_j^*[t-1]$, we have
\begin{align}
\label{case 1 valid gradient}
\frac{1}{f-\phi_j+|\calR_j^2[t-1]\cap \calF|}\sum_{i\in \calS_j^*[t-1]}g_i[t-1]\le \widetilde{g}_j[t-1]\le \frac{1}{f-\phi_j+|\calR_j^2[t-1]\cap \calF|}\sum_{i\in \calL_j^*[t-1]}g_i[t-1].
\end{align}
Thus, there exists $0\le \xi\le 1$ such that
\begin{align}
\label{extrem nonfaulty}
\nonumber
\widetilde{g}_j[t-1]&=\xi\pth{\frac{1}{f-\phi_j+|\calR_j^2[t-1]\cap \calF|}\sum_{i\in \calS_j^*[t-1]}g_i[t-1]}\\
\nonumber
&\quad+(1-\xi)\pth{\frac{1}{f-\phi_j+|\calR_j^2[t-1]\cap \calF|}\sum_{i\in \calL_j^*[t-1]}g_i[t-1]}\\
&=\frac{\xi}{f-\phi_j+|\calR_j^2[t-1]\cap \calF|}\sum_{i\in \calS_j^*[t-1]}g_i[t-1]+\frac{1-\xi}{f-\phi_j+|\calR_j^2[t-1]\cap \calF|}\sum_{i\in \calL_j^*[t-1]}g_i[t-1].
\end{align}
Let $k\in \calR_j^2[t-1]-\calF$. By symmetry, assume $\xi\ge \frac{1}{2}$ and $\check{g}[t-1]\le g_k[t-1]\le \widetilde{g}_j[t-1]$. Since $|\calL_j[t-1]\cup \{j^*\}|=f+1$, there exists a non-faulty agent $j^{\prime}_k\in \calL_j[t-1]\cup \{j^*\}$. Thus,
$$g_{j^{\prime}_k}[t-1]\ge \hat{g}_j[t-1]>\widetilde{g}_j[t-1]\ge g_k[t-1]\ge \check{g}[t-1],$$
 and there exists $0\le \xi_k\le 1$ such that
\begin{align}
\label{BS middle nonfaulty}
\frac{1}{2}\pth{\hat{g}_j[t-1]+\check{g}_j[t-1]}=\widetilde{g}_j[t-1]=\xi_k g_k[t-1]+(1-\xi_k) g_{j^{\prime}_k}[t-1].
\end{align}
Let $a=\hat{g}_j[t-1], b=\check{g}_j[t-1], c=g_k[t-1],$ and $d=g_{j^{\prime}_k}[t-1]$. By Proposition \ref{p1}, we know that
$\frac{1}{2}\le \xi_k\le 1$.
%
%
Since
\begin{align*}
|(N_j^-\cup \{j\})\cap \calN|-f&=d_j^-+1-\phi_j-f=d_j^-+1-2f +f-\phi_j\\
&=\left|\calR_j^2[t-1]\right|+f-\phi_j=\left|\calR_j^2[t-1]-\calF\right|+\left|\calR_j^2[t-1]\cap\calF\right|+f-\phi_j,
\end{align*}
we get
\begin{align*}
\nonumber
\widetilde{g}_j[t-1]&=\frac{|(N_j^-\cup \{j\})\cap \calN|-f}{|(N_j^-\cup \{j\})\cap \calN|-f}\widetilde{g}_j[t-1]\\
&=\frac{|\calR_j^2[t-1]-\calF|}{|(N_j^-\cup \{j\})\cap \calN|-f}\widetilde{g}_j[t-1]+\frac{f-\phi_j+|\calR_j^2[t-1]\cap \calF|}{|(N_j^-\cup \{j\})\cap \calN|-f}\widetilde{g}_j[t-1]\\
\nonumber
&=\frac{1}{|(N_j^-\cup \{j\})\cap \calN|-f}\pth{\sum_{k\in \calR_j^2[t-1]-\calF}\widetilde{g}_j[t-1]}+\frac{f-\phi_j+|\calR_j^2[t-1]\cap \calF|}{|(N_j^-\cup \{j\})\cap \calN|-f}\widetilde{g}_j[t-1]\\
\nonumber
&=\frac{1}{|(N_j^-\cup \{j\})\cap \calN|-f}\sum_{k\in \calR_j^2[t-1]-\calF}\pth{\xi_k g_k[t-1]+(1-\xi_k) g_{j^{\prime}_k}[t-1]}]~~~\text{by}~(\ref{BS middle nonfaulty})\\
\nonumber
&\quad +\frac{\xi}{|(N_j^-\cup \{j\})\cap \calN|-f}\sum_{i\in \calS_j^*[t-1]}g_i[t-1]+\frac{1-\xi}{|(N_j^-\cup \{j\})\cap \calN|-f}\sum_{i\in \calL_j^*[t-1]}g_i[t-1]~~~\text{by}~(\ref{extrem nonfaulty})\\
\nonumber
&=\frac{1}{|(N_j^-\cup \{j\})\cap \calN|-f}\sum_{k\in \calR_j^2[t-1]-\calF}\pth{\xi_k\, h_k^{\prime}(x_k[t-1])+(1-\xi_k)\, h_{j^{\prime}_k}^{\prime}(x_{j^{\prime}_k}[t-1])}\\
&\quad +\frac{\xi}{|(N_j^-\cup \{j\})\cap \calN|-f}\sum_{i\in \calS_j^*[t-1]}h_i^{\prime}(x_i[t-1])+\frac{1-\xi}{|(N_j^-\cup \{j\})\cap \calN|-f}\sum_{i\in \calL_j^*[t-1]}h_i^{\prime}(x_i[t-1]).
\end{align*}

Define $q(x)$ as follows.
\begin{align}
\label{BS rewritten 1}
\nonumber
q(x)&=\frac{1}{|(N_j^-\cup \{j\})\cap \calN|-f}\sum_{k\in \calR_j^2[t-1]-\calF}\pth{\xi_k\, h_k(x)+(1-\xi_k)\, h_{j^{\prime}_k}(x)}\\
&\quad +\frac{\xi}{|(N_j^-\cup \{j\})\cap \calN|-f}\sum_{i\in \calS_j^*[t-1]}h_i(x)+\frac{1-\xi}{|(N_j^-\cup \{j\})\cap \calN|-f}\sum_{i\in \calL_j^*[t-1]}h_i(x).
\end{align}
In (\ref{BS rewritten 1}), for each $k\in \calR_j^2[t-1]-\calF$, it holds that
\begin{align*}
\frac{\xi_k}{|(N_j^-\cup \{j\})\cap \calN|-f}&\ge \frac{1}{2\pth{|(N_j^-\cup \{j\})\cap \calN|-f}}\\
&\ge \min\{\frac{1}{2\max_{i\in \calN} (d_i^-+1-\phi_i-f) }, \, \frac{1}{|\calN|}\}\\
&=\tilde{\beta}.
\end{align*}
 For each $i\in \calS_j^*[t-1]$, it holds that
\begin{align*}
\frac{\xi}{|(N_j^-\cup \{j\})\cap \calN|-f}&\ge \frac{1}{2\pth{|(N_j^-\cup \{j\})\cap \calN|-f}}\\
&=\tilde{\beta}.
\end{align*}

 In addition, we have
\begin{align*}
|\pth{\calR_j^2[t-1]-\calF}\cup \calS_j^*[t-1]|&=|\calR_j^2[t-1]-\calF|+|\calS_j^*[t-1]|\\
&=|\calR_j^2[t-1]|-|\calR_j^2[t-1]\cap \calF|+|\calS_j^*[t-1]|\\
&=d_j^-+1-2f-|\calR_j^2[t-1]\cap \calF|+f-\phi_j+|\calR_j^2[t-1]\cap \calF|\\
&=d_j^-+1-\phi_j-f=|(N_j^-\cup \{j\})\cap \calN|-f.
\end{align*}

Thus, in (\ref{BS rewritten 1}), at least $|(N_j^-\cup \{j\})\cap \calN|-f=d_j^-+1-\phi_j-f\ge \tilde{\gamma}$ non-faulty agents corresponding to agents $k\in \pth{\calR_j^2[t-1]-\calF}\cup \calS_j^*[t-1]$ are assigned with weights lower bounded by $\tilde{\beta}$. 

\paragraph{{\bf Case (ii)}: $\hat{g}_j[t-1]=\check{g}_j[t-1]$.}
Let $k\in \calR_j^2[t-1]-\calF$. Since $\hat{g}_j[t-1]\ge g_k[t-1]\ge \check{g}_j[t-1]$ and $\hat{g}_j[t-1]=\check{g}_j[t-1]$, it holds that $\hat{g}_j[t-1]=g_k[t-1]=\check{g}_j[t-1]$.
Consequently, we have
$$\widetilde{g}_j[t-1]=\frac{1}{2}\pth{\hat{g}_j[t-1]+\check{g}_j[t-1]}=g_k[t-1].$$
So we can rewrite $\widetilde{g}_j[t-1]$ as follows.
\begin{align*}
\nonumber
\widetilde{g}_j[t-1]&=\frac{|(N_j^-\cup \{j\})\cap \calN|-f}{|(N_j^-\cup \{j\})\cap \calN|-f}\,\widetilde{g}_j[t-1]\\
\nonumber
&=\frac{1}{|(N_j^-\cup \{j\})\cap \calN|-f}\pth{\sum_{k\in \calR_j^2[t-1]-\calF}\widetilde{g}_j[t-1]}+\frac{f-\phi_j+|\calR_j^2[t-1]\cap \calF|}{|(N_j^-\cup \{j\})\cap \calN|-f}\widetilde{g}_j[t-1]\\
\nonumber
&=\frac{1}{|(N_j^-\cup \{j\})\cap \calN|-f}\sum_{k\in \calR_j^2[t-1]-\calF}g_k[t-1]+\frac{\xi}{|(N_j^-\cup \{j\})\cap \calN|-f}\sum_{i\in \calS_j^*[t-1]}g_i[t-1]\\
\nonumber
&\quad+\frac{1-\xi}{|(N_j^-\cup \{j\})\cap \calN|-f}\sum_{i\in \calL_j^*[t-1]}g_i[t-1]\\
\nonumber
&=\frac{1}{|(N_j^-\cup \{j\})\cap \calN|-f}\sum_{k\in \calR_j^2[t-1]-\calF}h_k^{\prime}(x_k[t-1])\\
&\quad+\frac{\xi}{|(N_j^-\cup \{j\})\cap \calN|-f}\sum_{i\in \calS_j^*[t-1]}h_i^{\prime}(x_i[t-1])+\frac{1-\xi}{|(N_j^-\cup \{j\})\cap \calN|-f}\sum_{i\in \calL_j^*[t-1]}h_i^{\prime}(x_i[t-1]).
\end{align*}

Define $q(x)$ as follows.
\begin{align}
\label{BS rewritten 2}
\nonumber
q(x)&=\frac{1}{|(N_j^-\cup \{j\})\cap \calN|-f}\sum_{k\in \calR_j^2[t-1]-\calF}h_k(x)+\frac{\xi}{|(N_j^-\cup \{j\})\cap \calN|-f}\sum_{i\in \calS_j^*[t-1]}h_i(x)\\
&\quad+\frac{1-\xi}{|(N_j^-\cup \{j\})\cap \calN|-f}\sum_{i\in \calL_j^*[t-1]}h_i(x).
\end{align}
In (\ref{BS rewritten 2}), for each $k\in \calR_j^2[t-1]-\calF$, it holds that
\begin{align*}
\frac{1}{|(N_j^-\cup \{j\})\cap \calN|-f}&\ge \frac{1}{2\pth{|(N_j^-\cup \{j\})\cap \calN|-f}}=\frac{1}{2(d_j^-+1-\phi_j-f)}\\
&\ge \min\{\frac{1}{2\max_{i\in \calN} (d_i^-+1-\phi_i-f) }, \, \frac{1}{|\calN|}\}=\tilde{\beta}.
\end{align*}
 For each $i\in \calS_j^*[t-1]$, it holds that
\begin{align*}
\frac{\xi}{|(N_j^-\cup \{j\})\cap \calN|-f}&\ge \frac{1}{2\pth{|(N_j^-\cup \{j\})\cap \calN|-f}}=\frac{1}{2(d_j^-+1-\phi_j-f)}\\
&\ge \tilde{\beta}.
\end{align*}
In addition, we have
\begin{align*}
|\pth{\calR_j^2[t-1]-\calF}\cup \calS_j^*[t-1]|=|(N_j^-\cup \{j\})\cap \calN|-f.
\end{align*}
Thus, in (\ref{BS rewritten 2}), at least $|(N_j^-\cup \{j\})\cap \calN|-f=d_j^-+1-\phi_j-f\ge \tilde{\gamma}$ non-faulty agents corresponding to $\pth{\calR_j^2[t-1]-\calF}\cup \calS_j^*[t-1]$ are assigned with weights lower bounded by $\tilde{\beta}$.\\

Case (i) and Case (ii) together prove the lemma.

\eproof
\end{proof}

%

Recall that
$$\frac{1}{d_j^-+1-2f}\sum_{i\in \calR_j^1[t-1]\cup \{j\}} w_i[t-1]=\sum_{i\in \calR_j^1[t-1]\cup \{j\}}{\bf M}_{ji}[t-1] x_i[t-1],$$
where ${\bf M}_{ji}[t-1]$ is the entry of matrix ${\bf M}[t-1]$ at the $j$--th row and $i$--th column.

Let $\{z[t]\}_{t=0}^{\infty}$ be a sequence of estimates such that
\begin{align}
\label{alg3 crash sequence z}
z[t]=x_{j_{t}}[t], ~~\text{where}~j_{t}\in \argmax_{j\in \calN} Dist\pth{x_j[t], Y}.
\end{align}
From the definition, there is a sequence of agents $\{j_t\}_{t=0}^{\infty}$ associated with the sequence $\{z[t]\}_{t=0}^{\infty}$.

\begin{theorem}
\label{talgo BS}
Let $\tilde{\beta}=\min\{\frac{1}{2\max_{i\in \calN} (d_i^-+1-\phi_i-f) }, \, \frac{1}{|\calN|}\}$, and $\tilde{\gamma}=\min_{i\in \calN} (d_i^-+1-\phi_i-f)$.
The sequence $\{Dist\pth{z[t], Y(\tilde{\beta}, \tilde{\gamma})}\}_{t=0}^{\infty}$ converges and $$\lim_{t\diverge} Dist\pth{z[t], Y(\tilde{\beta}, \tilde{\gamma})}=0.$$
\end{theorem}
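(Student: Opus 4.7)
The plan is to combine three ingredients already established: the matrix recursion (\ref{matrix representation})--(\ref{MR evo BS}), which exhibits the non-faulty estimates as a row-stochastic, diminishing-step projected subgradient iterate; Lemma \ref{BS valid gradient}, which recasts the trimmed-averaged gradient $\widetilde g_j[t-1]$ as the exact gradient, evaluated at the non-faulty estimates, of some valid function $p_{j,t}\in \calA(\tilde\beta,\tilde\gamma)$; and Lemma \ref{BS consensus} together with Lemma \ref{convex B cons}, which provide asymptotic consensus among non-faulty estimates and convexity and closedness of the target set $Y(\tilde\beta,\tilde\gamma)$. Since consensus implies $|x_i[t]-x_k[t]|\to 0$ for all non-faulty pairs, it suffices to establish $\lim_{t\diverge} Dist\pth{x_j[t],Y(\tilde\beta,\tilde\gamma)}=0$ for a single non-faulty $j$; the triangle inequality then transfers this bound to $z[t]=x_{j_t}[t]$.

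The next step is to derive a per-iteration recurrence for the squared distance. Fix an arbitrary $y^{*}\in Y(\tilde\beta,\tilde\gamma)\subseteq \calX$. Using the decomposition (\ref{proj21})--(\ref{proj22}) in the update (\ref{Byzantine Iterative}) and non-expansiveness of $P_{\calX}$,
\begin{align*}
|x_j[t]-y^{*}|^{2}\le |v_j[t-1]-y^{*}|^{2}-2\lambda[t-1]\,\widetilde g_j[t-1]\,\pth{v_j[t-1]-y^{*}}+\lambda^{2}[t-1]L^{2}.
\end{align*}
Convexity of the squared distance together with row-stochasticity of ${\bf M}[t-1]$ gives $|v_j[t-1]-y^{*}|^{2}\le \sum_{i\in \calN}{\bf M}_{ji}[t-1]\,|x_i[t-1]-y^{*}|^{2}$. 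Lemma \ref{BS valid gradient} supplies $p_{j,t}\in \calA(\tilde\beta,\tilde\gamma)$ with $\widetilde g_j[t-1]=\sum_{i\in \calN}\alpha_{i}^{(j,t)}h_i'(x_i[t-1])$. Using $L$-Lipschitzness of each $h_i'$ together with $|x_i[t-1]-v_j[t-1]|\le M[t-1]-m[t-1]$ (where $M[t],m[t]$ denote the running max and min of non-faulty estimates, whose gap vanishes by Lemma \ref{BS consensus}), I would replace $\widetilde g_j[t-1]$ by $p_{j,t}'(v_j[t-1])$ up to an additive error of order $L\pth{M[t-1]-m[t-1]}$.

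Now select $y^{*}\in \argmin_{x\in \calX}p_{j,t}(x)\subseteq Y(\tilde\beta,\tilde\gamma)$. Since $v_j[t-1]\in \calX$ as a convex combination of points in $\calX$, convexity of $p_{j,t}$ yields $p_{j,t}'(v_j[t-1])\pth{v_j[t-1]-y^{*}}\ge p_{j,t}(v_j[t-1])-p_{j,t}(y^{*})\ge 0$. Substituting back, absorbing the projection error via Proposition \ref{proj e}, and taking the max over $j$ produces a Robbins--Siegmund type recurrence
\begin{align*}
Dist\pth{z[t],Y(\tilde\beta,\tilde\gamma)}^{2}\le Dist\pth{z[t-1],Y(\tilde\beta,\tilde\gamma)}^{2}-2\lambda[t-1]\Delta_t+O(\lambda^{2}[t-1])+o(\lambda[t-1]),
\end{align*}
with $\Delta_t:=p_{j_t,t}(v_{j_t}[t-1])-\min_{x\in \calX}p_{j_t,t}(x)\ge 0$. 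The square-summability $\sum_t\lambda^{2}[t]<\infty$ forces the squared distance to converge, and $\sum_t\lambda[t]=\infty$ then forces $\liminf_t \Delta_t=0$. Finally, compactness of the optimal sets guaranteed by admissibility of the $h_i$, together with convexity and closedness of $Y(\tilde\beta,\tilde\gamma)$ from Lemma \ref{convex B cons}, upgrades this to $Dist\pth{z[t],Y(\tilde\beta,\tilde\gamma)}\to 0$, by the same concluding argument used after (\ref{port 111}) for Theorem \ref{BT} and in Theorem~2 of \cite{su2015fault}.

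The principal obstacle is the time- and agent-dependence of the valid function $p_{j,t}$ and its minimizer: one cannot fix a single Lyapunov target point. The resolution is precisely that $Y(\tilde\beta,\tilde\gamma)$ is a \emph{common} convex and closed superset of all such minimizers, so $Dist\pth{\cdot,Y(\tilde\beta,\tilde\gamma)}$ rather than distance to a fixed optimum is the correct Lyapunov quantity; this is the structural reason the theorem is stated in terms of the set $Y(\tilde\beta,\tilde\gamma)$. A secondary nuisance is that ${\bf M}[t]$ is only row-stochastic and trimmed, not doubly-stochastic, so the consensus bound $M[t]-m[t]\to 0$ must be inherited from Theorem \ref{convergencerate} and the construction of \cite{Vaidya2012MatrixConsensus}, and the Jensen-type step above must be justified purely by row-stochasticity.
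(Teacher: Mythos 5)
Your overall architecture tracks the paper's: reduce to the worst-case agent via consensus (Lemma \ref{BS consensus}), invoke Lemma \ref{BS valid gradient} to view the trimmed gradient as the gradient of a time-varying valid function $p_{j,t}\in\calA(\tilde{\beta},\tilde{\gamma})$, and lean on convexity and closedness of $Y(\tilde{\beta},\tilde{\gamma})$ from Lemma \ref{convex B cons}. The paper, however, runs the recursion on the \emph{first power} $Dist\pth{\cdot, Y}$: it uses convexity of the distance function and row-stochasticity of ${\bf M}[t]$ to reach $Dist\pth{z[t+1],Y}\le \inf_{y\in Y}\left|x_{j_{t+1}^{\prime}}[t]-\lambda[t]\,p_{t+1}^{\prime}(x_{j_{t+1}^{\prime}}[t])-y\right|+L\lambda[t]+\lambda[t]L(M[t]-m[t])$ and then defers to Theorem 2 of \cite{su2015fault}. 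You instead propose a squared-distance, Robbins--Siegmund recursion, and that is where the argument breaks.

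The gap is the passage from your per-iteration inequality to the claimed recurrence in $Dist\pth{\cdot,Y(\tilde{\beta},\tilde{\gamma})}^2$. Your inequality is anchored at a point $y^{*}=y^{*}_{j,t}\in\arg\min_{x\in\calX}p_{j,t}(x)$ that necessarily changes with $t$ (it must be a minimizer of $p_{j,t}$ for the inner-product term to be nonnegative). On the left, $|x_j[t]-y^{*}_{j,t}|^2\ge Dist\pth{x_j[t],Y}^2$ is fine; but on the right you need $\sum_i {\bf M}_{ji}[t-1]\,|x_i[t-1]-y^{*}_{j,t}|^2\le Dist\pth{z[t-1],Y}^2+o(\lambda[t-1])$, and this fails whenever $Y(\tilde{\beta},\tilde{\gamma})$ has positive diameter: for $x_i[t-1]\in Y$ one has $Dist\pth{x_i[t-1],Y}=0$ while $|x_i[t-1]-y^{*}_{j,t}|$ can be as large as the diameter of $Y$, a constant rather than $o(\lambda[t-1])$. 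This is exactly the ``moving target'' obstacle you yourself name, but declaring $Dist\pth{\cdot,Y}$ the correct Lyapunov quantity does not repair the squared-distance algebra: the cross term $-2\lambda[t-1]\widetilde{g}_j[t-1]\pth{v_j[t-1]-y^{*}}$ has the right sign only for the particular $y^{*}$ tied to $p_{j,t}$, while the distance to the set is realized at a different point of $Y$. The paper avoids this by never expanding a square against a fixed anchor; it keeps $\inf_{y\in Y}|\cdot-y|$ inside the estimate at every step (so the anchor is re-optimized each iteration) and disposes of the resulting one-dimensional term $\inf_{y\in Y}|x-\lambda p^{\prime}(x)-y|$ via the argument of Theorem 2 in \cite{su2015fault}, which exploits that $\arg\min_{x\in\calX}p_{t+1}(x)\subseteq Y$ and that $Y$ is an interval in $\reals$. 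To salvage your route you would have to switch to the set-distance formulation before squaring, at which point you have essentially reproduced the paper's proof.
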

\begin{proof}

\begin{align}
\nonumber
Dist\pth{z[t+1], Y}&=Dist\pth{x_{j_{t+1}}[t], Y}~~~\text{by (\ref{alg3 crash sequence z})}\\
\nonumber
&=Dist\pth{\frac{1}{d_{j_{t+1}}^-+1-2f}\sum_{i\in \calR^{1}_{j_{t+1}}[t]\cup \{j_{t+1}\}} w_i[t]-\lambda[t]\widetilde{g}_{j_{t+1}}[t]+e_{j_{t+1}}[t],~ Y}~~~\text{by}~(\ref{Byzantine Iterative})\\
\nonumber
&=Dist\pth{\sum_{i\in (N_{j_{t+1}}^{-}\cup\{j_{t+1}\})\cap \calN} {\bf M}_{ji}[t] x_i[t]-\lambda[t]\widetilde{g}_{j_{t+1}}[t]+e_{j_{t+1}}[t],~ Y}\\
&=Dist\pth{\sum_{i\in (N_{j_{t+1}}^{-}\cup\{j_{t+1}\})\cap \calN} {\bf M}_{ji}[t] \pth{x_i[t]-\lambda[t]\widetilde{g}_{j_{t+1}}[t]}+e_{j_{t+1}}[t],~ Y}\label{convex}\\
\nonumber
&\le \sum_{i\in (N_{j_{t+1}}^{-}\cup\{j_{t+1}\})\cap \calN} {\bf M}_{ji}[t] \, Dist\pth{x_i[t]-\lambda[t]\widetilde{g}_{j_{t+1}}[t]+e_{j_{t+1}}[t], ~Y}~~~\text{by convexity of}~Dist\pth{\cdot, Y}\\
\nonumber
&\le \max_{i\in (N_{j_{t+1}}^{-}\cup\{j_{t+1}\})\cap \calN} Dist\pth{x_i[t]-\lambda[t]\widetilde{g}_{j_{t+1}}[t]+e_{j_{t+1}}[t], ~Y}.
\end{align}
Equality (\ref{convex}) holds due to the fact that $\sum_{i\in i\in (N_{j_{t+1}}^{-}\cup\{j_{t+1}\}} {\bf M}_{ji}=1$. By Lemma \ref{BS valid gradient}, there exists a valid function $p_{t}(\cdot)=\sum_{q\in \calN} \alpha_q h_q(\cdot)\in \calC$ such that
\begin{align}
\label{valid gradient inter}
\widetilde{g}_{j_{t+1}}[t]=\sum_{q\in \calN} \alpha_q h_q^{\prime}(x_q[t]).
\end{align}
In addition, let $$j_{t+1}^{\prime}\in \argmax_{i\in (N_{j_{t+1}}^{-}\cup\{j_{t+1}\})\cap \calN} Dist\pth{x_i[t]-\lambda[t]\widetilde{g}_{j_{t+1}}[t]+e_{j_{t+1}}[t], ~Y}.$$
We get
\begin{align}
\nonumber
Dist\pth{z[t+1], Y}
&\le \max_{i\in (N_{j_{t+1}}^{-}\cup\{j_{t+1}\})\cap \calN} Dist\pth{x_i[t]-\lambda[t]\widetilde{g}_{j_{t+1}}[t]+e_{j_{t+1}}[t], ~Y}~~~\text{by (\ref{BS a1})}\\
\nonumber
&= Dist\pth{x_{j_{t+1}^{\prime}}[t]-\lambda[t]\widetilde{g}_{j_{t+1}}[t]+e_{j_{t+1}}[t], ~Y}\\
\nonumber
&=Dist\pth{x_{j_{t+1}^{\prime}}[t]-\lambda[t]\sum_{q\in \calN} \alpha_q h_q^{\prime}(x_q[t])+e_{j_{t+1}}[t], ~Y}~~~\text{by Lemma \ref{BS valid gradient}}\\
\nonumber
&=\inf_{y\in Y}\left |x_{j_{t+1}^{\prime}}[t]-\lambda[t]\sum_{q\in \calN} \alpha_q h_q^{\prime}(x_q[t])-y +e_{j_{t+1}}[t] \right |\\
\nonumber
&=\inf_{y\in Y}\left |x_{j_{t+1}^{\prime}}[t]-\lambda[t]\sum_{q\in \calN} \alpha_q h_q^{\prime}(x_{j_{t+1}^{\prime}}[t])-y+ e_{j_{t+1}}[t]+ \lambda[t]\sum_{q\in \calN} \alpha_q \pth{h_q^{\prime}(x_{j_{t+1}^{\prime}}[t])-h_q^{\prime}(x_q[t])}\right |\\
\nonumber
&\le \inf_{y\in Y}\left |x_{j_{t+1}^{\prime}}[t]-\lambda[t]\sum_{q\in \calN} \alpha_q h_q^{\prime}(x_{j_{t+1}^{\prime}}[t])-y\right |+ |e_{j_{t+1}}[t]|+\lambda[t]\sum_{q\in \calN} \alpha_q \left |h_q^{\prime}(x_{j_{t+1}^{\prime}}[t])-h_q^{\prime}(x_q[t])\right |\\
&\le \inf_{y\in Y}\left |x_{j_{t+1}^{\prime}}[t]-\lambda[t]\sum_{q\in \calN} \alpha_q h_q^{\prime}(x_{j_{t+1}^{\prime}}[t])-y\right |+L\lambda[t]+\lambda[t]\sum_{q\in \calN} \alpha_q L\left |x_{j_{t+1}^{\prime}}[t]-x_q[t]\right | \label{error1}\\
&\le \inf_{y\in Y}\left |x_{j_{t+1}^{\prime}}[t]-\lambda[t]p_{t+1}^{\prime}(x_{j_{t+1}^{\prime}}[t])-y\right |+L\lambda[t]+\lambda[t]L (M[t]-m[t]), \label{BS distance y2}
\end{align}
where $p_{t}$ is defined in (\ref{valid gradient inter}).
Inequality (\ref{error1}) follows from Proposition \ref{proj e} and the fact that $h_q(\cdot)$ has $L$--Lipschitz gradient for each local function. Inequality (\ref{BS distance y2}) is true because
$$\left |x_{j_{t+1}^{\prime}}[t]-x_q[t]\right |\le \max_{i, j\in \calN} \pth{x_i[t]-x_j[t]}=\max_{i\in \calN} x_i[t]- \min_{j\in \calN}x_j[t]=M[t]-m[t].$$
Note that $p_{t+1}^{\prime}(x_{j_{t+1}^{\prime}}[t])$ is the gradient of a valid function at point $x_{j_{t+1}^{\prime}}[t]$ for $$\tilde{\beta}=\min\{\frac{1}{2\max_{i\in \calN} (d_i^-+1-\phi_i-f) }, \, \frac{1}{|\calN|}\}, \text{and}~ \tilde{\gamma}=\min_{i\in \calN} (d_i^-+1-\phi_i-f).$$\\

Recall that $\{z[t]\}_{t=0}^{\infty}$ is a sequence of estimates such that
$$z[t]=x_{j_{t}}[t], ~~\text{where}~j_{t}\in \argmax_{j\in \calN} Dist\pth{x_j[t], Y}.$$
Note that for each $t\ge 0$, there exists a non-faulty agent $j_{t}^{\prime}$ such that (\ref{BS distance y2}) holds, and there exists a sequence of agents $\{j_{t}^{\prime}\}_{t=0}^{\infty}$.
Let $\{x[t]\}_{t=0}^{\infty}$ be a sequence of estimates such that $x[t]=x_{j_{t+1}^{\prime}}[t]$.
Let $\{g[t]\}_{t=0}^{\infty}$ be a sequence of gradients such that $g[t]=p_{t}^{\prime}(x_{j_{t+1}^{\prime}}[t])$.\\

The remaining of the proof is identical to the proof of Theorem 2 in \cite{su2015fault}.

\eproof
\end{proof}

\section{Crash Fault Tolerance}

In this section, we first revisit the problem of reaching unconstrained iterative approximate consensus in the presence of crash failures, where only local communication and minimal memory carried across iterations are allowed.

\subsection{Iterative Approximate Crash Consensus}

We present a matrix representation of the states evolution of all the agents. This matrix representation allows us to derive a bound on the convergence rate which is independent of the timing of the occurrence of crash failures. With this matrix representation, we provide an alternative proof of the sufficiency of the topology condition found in \cite{charron2014approximate,tseng2015fault} under which the iterative approximate crash consensus is achievable.

\begin{definition}
\label{reduced crash}
For a given graph $G(\calV, \calE)$, a reduced graph $\calH_c$ under crash faults is a subgraph of $G(\calV, \calE)$ obtained by removing all the edges incident to up to $|\calF|$ nodes in $\calF$.
\end{definition}

Reaching consensus on highly dynamic networks is considered in \cite{charron2014approximate}, which incorporates our problem as a special case.  Since the network topological dynamic in our problem is caused by the crash failure, thus the dynamic has some ``monotone" structure over time.
Although unconstrained algorithms are considered in \cite{tseng2015fault}, where agents can exchange messages with agents via possible multi-hop communication,  it can be shown that the necessary condition in the next lemma is equivalent to the tight condition found in \cite{tseng2015fault}.
\begin{lemma}\cite{charron2014approximate,tseng2015fault}
\label{crash consensus nec}
Iterative approximate crash consensus can be achieved on a given graph $G(\calV, \calE)$ {\em only if} every reduced graph under crash faults of $G(\calV, \calE)$ contains a unique non-trivial weakly-connected component, and the subgraph induced by this weakly-connected component contains a source.
\end{lemma}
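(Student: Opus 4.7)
My plan is to establish this by contraposition via a standard indistinguishability argument. Suppose for contradiction that some iterative algorithm $\calA$ solves approximate crash consensus on $G(\calV,\calE)$, yet there exists a reduced graph $\calH_c$ violating the stated structural condition. Let $\calF\subseteq\calV$, $|\calF|\le f$, be the subset whose incident edges were removed to produce $\calH_c$. I would restrict attention to the family of executions in which every agent in $\calF$ crashes at the very start of iteration $1$ before transmitting any message. In every such execution the effective communication topology seen by the non-faulty agents $\calN=\calV\setminus\calF$ is exactly $\calH_c$, so $\calA$ must achieve $\epsilon$-agreement and validity under any initial assignment on $\calN$ using only communication along edges of $\calH_c$.

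Next I would split into the two possible failure modes. In Case (a), $\calH_c$ contains two distinct non-trivial weakly-connected components $C_1$ and $C_2$. By weak connectivity there are no edges between $C_1$ and $C_2$ in either direction, so the multiset of messages received by any $i\in C_1$ throughout the execution depends only on the initial values of agents in $C_1$. Choose initial states so that every agent in $C_1$ starts at $0$ and every agent in $C_2$ starts at $1$. The local transcript of any $i\in C_1$ is then indistinguishable from its transcript in the alternative execution where every node outside $C_1$ also crashes at time $0$; by the validity requirement of approximate consensus the state of $i$ must remain in the convex hull $\{0\}$. By the symmetric argument within $C_2$, every $j\in C_2$ stays at $1$, contradicting $\epsilon$-agreement for any $\epsilon<1$. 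In Case (b), $\calH_c$ has a unique non-trivial WCC $C$ whose induced subgraph contains no source. Then the condensation DAG of $C$ has at least two distinct root strongly-connected components $R_1,R_2$, neither of which has an incoming edge from outside itself. Initialising $R_1$ to $0$ and all other non-faulty agents to $1$, the same indistinguishability-plus-validity argument applied within each root forces $R_1$ to converge to $0$ while $R_2$ converges to $1$, again contradicting $\epsilon$-agreement.

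The main obstacle is justifying the validity step that pins the output of each isolated component to the convex hull of its initial values. For the averaging-style updates of the family considered in this paper this follows directly from the convex-combination structure of the update rule; more generally it is part of the definition of approximate consensus, since without some such safety requirement a constant-output protocol would trivially satisfy agreement. A secondary concern is ensuring that the adversary is permitted to crash nodes at time $0$ in the model under study and that messages in transit are handled consistently; restricting to the time-$0$ crash pattern as above avoids any subtlety arising from mid-execution failures and is sufficient to produce the required impossibility.
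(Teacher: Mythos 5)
Your overall skeleton matches the paper's: both proofs crash all of $F$ at time $0$ and thereby reduce the question to (non-fault-tolerant) consensus on the induced subgraph $G-F$, whose topology is exactly the reduced graph $\calH_c$. The difference is in how the impossibility on $G-F$ is discharged. The paper observes in one line that conditions (i) and (ii) both imply $G-F$ has no rooted spanning tree and cites the known result of Ren et al.\ that consensus is then unachievable; you instead unfold that citation into an explicit indistinguishability-plus-validity argument over the weakly-connected components and, in Case (b), over the root strongly-connected components of the condensation (correctly noting that absence of a source component forces at least two such roots, given the paper's definition of a source). Your version is more self-contained and makes explicit where the validity condition is used; the paper's is a two-line reduction to a standard result.

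One step as written does not go through: the ``alternative execution where every node outside $C_1$ also crashes at time $0$'' need not be an admissible execution, since the total number of crashes $|F|+|\calV\setminus(F\cup C_1)|$ can exceed $f$, and the algorithm carries no obligations in inadmissible executions. The standard repair is to keep the crash pattern fixed (only $F$ crashes) and vary the \emph{initial values} instead: compare with the execution in which every agent outside $C_1$ also starts at $0$. The transcript of each $i\in C_1$ is unchanged, because $C_1$ is a maximal weakly-connected component and hence receives no messages from outside, and validity in the modified execution forces the output of $i$ to be $0$ since the convex hull of \emph{all} inputs is $\{0\}$; this also sidesteps your worry about whether validity is stated with respect to crashed agents' inputs. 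The same repair applies to the two root components $R_1,R_2$ in your Case (b). With that modification your argument is correct.
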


\begin{proof}
Suppose for the given graph $G(\calV, \calE)$, there exists a reduced subgraph $\calH_c$ that\\
(i) contains at least two non-trivial weakly-connected components, or\\
(ii) contains a unique non-trivial weakly-connected component, {\em but} the subgraph induced by this weakly-connected component does not contain a source.

Let $F\subseteq \calF$ be the set of nodes whose edges are removed. Let $G-F$ be the subgraph induced by $\calV-F$.
Consider the execution where all the agents in $F$ crash at the very beginning of this execution.

Both (i) and (ii) imply that there is no spanning trees in $G-F$. By \cite{ren2007information}, we know that consensus cannot be achieved on $G-F$. Thus, crash consensus cannot be achieved on $G(\calV, \calE)$, proving the lemma.


\eproof
\end{proof}

\begin{assumption}
\label{a2}
Every reduced graph (under crash faults) of $G(\calV, \calE)$ contains a unique non-trivial weakly-connected component, and the subgraph induced by this weakly-connected component contains a source.
%
\end{assumption}

In Lemma \ref{crash consensus nec}, we show that the Assumption \ref{a2} is necessary for iterative approximate crash consensus to be achievable. Next we show that Assumption \ref{a2} is also sufficient.  We prove this by construction.

Let $\calN[t]$ be the set of agents that have not crashed {\em by the beginning} of iteration $t$, and let $\bar{\calN}[t]$ be the set of agents that have not crashed {\em by the end} of iteration $t$. Note that $\bar{\calN}[t]\subseteq \calN[t]$,  $\calN[t]-\bar{\calN}[t]$ is the collection of agents that crash {\em during} iteration $t$, and that $\bar{\calN}[t]=\calN[t+1]$.
In particular, in iteration $t$, if an agent crashes after performing the update step in (\ref{crash e_Z11}), we say this agent crashes in iteration $t+1$. Recall that $x_i[0]$ be the initial state of agent $i\in \calV$.

\paragraph{}
\vspace*{8pt}\hrule

{\bf Algorithm 3 (crash consensus)}
\vspace*{4pt}\hrule

~

Steps to be performed by agent $i\in \calN[t]$ in the $t$-th iteration:
\begin{enumerate}

\item {\em Transmit step:} Transmit current state $x_i[t-1]$ on all outgoing edges.
\item {\em Receive step:} Receive values on incoming edges. These values form
multiset 
$\calR_i(t)$ of size at most $|N_i^{-}|$. 
Let $\ell_i(t)=|\calR_i(t)|$.

\item {\em Update step:}
%
Update its state as follows.
\begin{eqnarray}
x_i[t] ~ = ~\frac{1}{\ell_i(t)+1}\pth{\sum_{j\in \calR_i(t)\cup \{i\}} x_j[t-1]}. 
\label{crash e_Z11}
\end{eqnarray}
\end{enumerate}

~
\hrule

~

~

Since we are interested in unconstrained consensus, there is no projection involved in the update step.
Recall that, in iteration $t$, if an agent crashes after performing the update step in (\ref{crash e_Z11}), we say this agent crashes in iteration $t+1$.
With this convention, we know that only agents in $\bar{\calN}[t]$ (agents that have not crashed by the end of iteration $t$) will update their states according to (\ref{crash e_Z11}). Consequently, for each $i\notin \bar{\calN}[t]$, we assume that $x_i[t]=x_i[t-1]$.
%
%
Let ${\bf P}[t]\in \reals ^{n \times n}$ be a $n$ by $n$ matrix such that
for each $i\in \bar{\calN}[t]$,
\begin{align}
\label{update crash1}
{\bf P}_{ij}[t]=
\begin{cases}
~~\frac{1}{\ell_i(t)+1}, ~~~\text{if}~ j\in \calR_i(t)\cup \{i\}\\
~~0, ~~~\text{otherwise},
\end{cases}
\end{align}
and for each $i\notin \bar{\calN}[t]$,
\begin{align}
\label{update crash2}
{\bf P}_{ij}[t]=
\begin{cases}
~~1, ~~~\text{if}~ j=i\\
~~0, ~~~\text{otherwise}.
\end{cases}
\end{align}
Note that the update matrix ${\bf P}[t]$ is row-stochastic for each $t\ge 1$. At each iteration $t$, an agent can only receive messages from agents that have not crashed by the beginning of iteration $t$. Thus
${\bf P}_{ij}[t]=0$ for each $j\notin \calN[t]$, and
$$1=\sum_{j=1}^n {\bf P}_{ij}[t] = \sum_{j\in \calN[t]} {\bf P}_{ij}[t]+\sum_{j\notin \calN[t]} {\bf P}_{ij}[t]=\sum_{j\in \calN[t]} {\bf P}_{ij}[t]+0=\sum_{j\in \calN[t]} {\bf P}_{ij}[t].$$
Let ${\bf x}[t-1]\in \reals^{n}$ be a real vector of the local estimates at the beginning of iteration $t$ with ${\bf x}_j[t-1]=x_j[t-1]$ being the local estimate of agent $j\in \calV$ (can be faulty). 
\begin{align}
\label{matrix representation crash01}
{\bf x}[t]={\bf P}[t]{\bf x}[t-1].
\end{align}
Note that in (\ref{matrix representation crash01}), the the time index of the matrix is the same as the iteration time index. In contrast, in (\ref{update}) and (\ref{matrix representation}), the matrix time index is smaller than the iteration time index by one. The time index in the update matrix definition (\ref{update crash1}) and (\ref{update crash2}) is chosen for ease of notation.

Let $\calH_c $ be a subgraph of $G(\calV, \calE)$, with ${\bf H}$ as the adjacency matrix. In every iteration $t\ge 1$, and for every ${\bf P}[t]$, there exists a reduced graph $\calH_c[t]$ with adjacency matrix ${\bf H}[t]$ such that
\begin{align}
\label{iterative CS matrix lb}
{\bf P}[t]\ge \zeta\, {\bf H}[t],
\end{align}
where $\zeta=\frac{1}{d_{\max}\,+\,1}$ and $d_{\max}=\max_{j\in \calV}d_j^-$. Note that both $G(\calV, \calE)$ and $\calH_c[t]$ have the same vertex set $\calV$.
%
%
Equation (\ref{matrix representation crash01}) can be further expanded out as

\begin{align}
\label{MR evo crash 0}
\nonumber
{\bf x}[t]&=\pth{{\bf  P}[t]{\bf  P}[t-1]\ldots {\bf  P}[1]}{\bf x}[0]\\
&={\bf  \Psi}(t,1){\bf x}[0],
\end{align}
where ${\bf \Psi}(t, r)={\bf P}[t]{\bf P}[t-1]\ldots {\bf P}[r]$ is a backward product, and by convention, ${\bf \Psi}(t, t)={\bf P}[t]$ and ${\bf \Psi}(t, t+1)={\bf I}$. Since $\ones$ is a right eigenvector of ${\bf \Psi}(t, r)$ with eigenvalue 1, the product ${\bf \Psi}(t, r)$ is also a row-stochastic matrix. In addition, the product ${\bf \Psi}(t, r)$ has the following property.
\begin{proposition}
\label{p3}
Given $t\ge 1$, for $t^{\prime}\ge t$, $i\in \calN[t]$ and $j\notin \calN[t]$, it holds that
\begin{align*}
{\bf \Psi}_{ij}(t^{\prime}, t)=0.
\end{align*}
\end{proposition}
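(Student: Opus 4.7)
My plan is to prove the stronger statement that for every $j\notin\calN[t]$ and every $s\ge t$, the $j$-th column of ${\bf P}[s]$ equals the standard basis vector $e_j$. The claim then follows by taking products.

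The key structural observation is monotonicity of the crashed set: once an agent crashes it stays crashed, so $\calN[s]\subseteq \calN[t]$ for all $s\ge t$. Consequently, $j\notin \calN[t]$ implies $j\notin \calN[s]\supseteq \bar\calN[s]$ for every $s\ge t$. Given this, I would read off the $j$-th column of ${\bf P}[s]$ directly from the definitions (\ref{update crash1})--(\ref{update crash2}). Since $j\notin \bar\calN[s]$, the rule (\ref{update crash2}) gives ${\bf P}_{jj}[s]=1$. For any $i\ne j$ with $i\in\bar\calN[s]$, rule (\ref{update crash1}) shows that ${\bf P}_{ij}[s]>0$ would require $j\in \calR_i(s)$, i.e., agent $j$ sent a value to $i$ at iteration $s$; but $j\notin\calN[s]$ means $j$ is not alive at the start of iteration $s$, so this is impossible and ${\bf P}_{ij}[s]=0$. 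For any $i\ne j$ with $i\notin\bar\calN[s]$, rule (\ref{update crash2}) directly gives ${\bf P}_{ij}[s]=0$. Thus ${\bf P}[s]\,e_j=e_j$ for all $s\ge t$.

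With this in hand, I would finish by a one-line induction on $t'\ge t$. Since ${\bf \Psi}(t,t)={\bf P}[t]$ and ${\bf \Psi}(t',t)={\bf P}[t']\,{\bf \Psi}(t'-1,t)$, successive application of ${\bf P}[s]\,e_j=e_j$ yields
\begin{equation*}
{\bf \Psi}(t',t)\,e_j \;=\; {\bf P}[t']\,{\bf P}[t'-1]\cdots {\bf P}[t]\,e_j \;=\; e_j.
\end{equation*}
Hence the $j$-th column of ${\bf \Psi}(t',t)$ is $e_j$, so ${\bf \Psi}_{ij}(t',t)=0$ for every $i\ne j$. Since $i\in\calN[t]$ while $j\notin\calN[t]$ forces $i\ne j$, the proposition follows.

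I do not anticipate any real obstacle. The only thing that needs care is verifying the monotonicity $\bar\calN[s]\subseteq\calN[s]\subseteq\calN[t]$ for $s\ge t$ and the fact that $\calR_i(s)$ contains only agents alive at the start of iteration $s$; both are immediate from the definitions preceding Algorithm~3. The proof is essentially a bookkeeping argument saying that crashed agents act as absorbing coordinates for the backward product.
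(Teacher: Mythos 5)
Your proposal is correct, and it packages the argument a bit differently from the paper. The paper proves the proposition by inducting directly on the entries ${\bf \Psi}_{ij}(t',t)$ for $i\in\calN[t]$: writing ${\bf \Psi}_{ij}(t'+1,t)=\sum_k {\bf P}_{ik}[t'+1]{\bf \Psi}_{kj}(t',t)$, it kills the terms with $k\in\calN[t]$ by the induction hypothesis and the terms with $k\notin\calN[t]$ by observing that ${\bf P}_{ik}[t'+1]=0$ for crashed $k\neq i$. You instead strengthen the invariant: for every $s\ge t$ the entire $j$-th column of ${\bf P}[s]$ is the basis vector $e_j$ (using ${\bf P}_{jj}[s]=1$ from (\ref{update crash2}) and ${\bf P}_{ij}[s]=0$ for $i\ne j$ from (\ref{update crash1})--(\ref{update crash2}) together with the fact that a crashed agent sends nothing, so $j\notin\calR_i(s)$), after which the backward product trivially fixes $e_j$. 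Both arguments rest on the same structural facts (monotonicity of the crashed set and absorbing behavior of crashed coordinates), but your stronger invariant buys two things: it yields slightly more (${\bf \Psi}_{jj}(t',t)=1$ as well), and it sidesteps the bookkeeping in the paper's inductive step, where one must separately handle indices $k$ that crash between iterations $t$ and $t'+1$ and verify $i\neq k$ before invoking ${\bf P}_{ik}[t'+1]=0$. The one point you flagged as needing care --- that $\calR_i(s)$ contains only agents alive at the start of iteration $s$ --- is indeed exactly the property the paper also relies on, so there is no gap.
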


\begin{proof}
We prove this proposition by inducting on $t^{\prime}$. \\

When $t^{\prime}=t$, ${\bf \Psi}(t^{\prime}, t)={\bf P}[t]$. Since $j\notin \calN[t]$, no agents (rather than agent $j$ itself) can receive messages from
agent $j$, i.e., $j\notin \calR_k(t)\cup \{k\}$ for any $k\not=j$.
Since $i\in \calN[t]$ and $j\notin \calN[t]$, it holds that $i\not=j$.
By (\ref{update crash1}), we know ${\bf P}_{ij}[t]=0$.\\

Recall that ${\bf \Psi}(t^{\prime}+1, t)={\bf P}[t^{\prime}+1]{\bf \Psi}(t^{\prime}, t)$. We get
\begin{align*}
{\bf \Psi}_{ij}\pth{t^{\prime}+1, t}&=\sum_{k=1}^n {\bf P}_{ik}[t^{\prime}+1]{\bf \Psi}_{kj}(t^{\prime}, t)\\
&=\sum_{k\in \calN[t]} {\bf P}_{ik}[t^{\prime}+1]{\bf \Psi}_{kj}(t^{\prime}, t)+\sum_{k\notin \calN[t]} {\bf P}_{ik}[t^{\prime}+1]{\bf \Psi}_{kj}(t^{\prime}, t)\\
&=\sum_{k\in \calN[t]} {\bf P}_{ik}[t^{\prime}+1]\,0+\sum_{k\notin \calN[t]} {\bf P}_{ik}[t^{\prime}+1]{\bf \Psi}_{kj}(t^{\prime}, t)~~~\text{by induction hypothesis}\\
&=\sum_{k\notin \calN[t]} {\bf P}_{ik}[t^{\prime}+1]{\bf \Psi}_{kj}(t^{\prime}, t)\\
&\le\sum_{k\notin \calN[t^{\prime}+1]} {\bf P}_{ik}[t^{\prime}+1]{\bf \Psi}_{kj}(t^{\prime}, t)~~~\text{since } \calN[t^{\prime}+1]\subseteq \calN[t]\\ 
&=\sum_{k\notin \calN[t^{\prime}+1]} 0{\bf \Psi}_{kj}(t^{\prime}, t)~~~\text{since }{\bf P}_{ik}[t^{\prime}+1]=0, \forall\, k\notin \calN[t^{\prime}+1], i\not=k\\
&=0.
\end{align*}

In addition, ${\bf \Psi}_{ij}(t^{\prime}+1, t)\ge 0$. Thus,
$${\bf \Psi}_{ij}(t^{\prime}+1, t)= 0,$$
proving the induction.

\eproof
\end{proof}
As an immediate consequence of Proposition \ref{p3}, for any $t$, $t^{\prime}\ge t$, and $i\in \calN[t]$, the following holds.
\begin{align}
\label{sum}
\sum_{j\in \calN[t]} {\bf \Psi}(t^{\prime}+1, t)=1.
\end{align}

We next show that under Assumption \ref{a2}, the submatrix of ${\bf \Psi}(t,r)$ with rows and columns corresponding to all the non-faulty agents in $\calN$ will converge to a rank-one matrix of dimension $|\calN|\times |\calN|$.
\vskip 2\baselineskip

We first adapt the weak-ergodicity results obtained by Hajnal \cite{hajnal1958weak} for matrix products.

For each $t\ge 1$, $t^{\prime}\ge t$ and $r\ge 1$, we define $\delta_{r}({\bf \Psi}(t^{\prime}, t))$ and $\eta_{r}({\bf \Psi}(t^{\prime}, t))$ as follows.
\begin{align}
\delta_{r}({\bf \Psi}(t^{\prime}, t))&=\max_{\beta\in \calN[r]} \max_{\alpha, \alpha^{\prime} \in \calN[r]} \left | {\bf \Psi}_{\alpha \beta}(t^{\prime}, t)-{\bf \Psi}_{\alpha^{\prime} \beta}(t^{\prime}, t)\right |,\label{ergodic 1}\\
\eta_{r}({\bf \Psi}(t^{\prime}, t)) &= \min_{\alpha, \alpha^{\prime} \in \calN[r]} \sum_{\beta\in \calN[r]} \min \{{\bf \Psi}_{\alpha \beta}(t^{\prime}, t), {\bf \Psi}_{\alpha^{\prime} \beta}(t^{\prime}, t)\} \label{ergodic 2}.
\end{align}
Since $\calN[t]\subseteq \calN[r]$ for any $r\le t$, by the above definition, it is easy to see that
\begin{align}
\label{monotone 1}
\delta_{t}({\bf \Psi}(t^{\prime}, t))\le \delta_{r}({\bf \Psi}(t^{\prime}, t)),
\end{align}
and
\begin{align}
\label{monotone 2}
\eta_{t}({\bf \Psi}(t^{\prime}, t))\ge \eta_{r}({\bf \Psi}(t^{\prime}, t)).
\end{align}
Note that the definition of $\delta_r\pth{\bf \Psi}\pth{t^{\prime}, t}$ is symmetric in $\alpha$ and $\alpha^{\prime}$. Thus,
\begin{align}
\label{ergodic 11}
\nonumber
\delta_{r}({\bf \Psi}(t^{\prime}, t))&=\max_{\beta\in \calN[r]} \max_{\alpha, \alpha^{\prime} \in \calN[r]} \left | {\bf \Psi}_{\alpha \beta}(t^{\prime}, t)-{\bf \Psi}_{\alpha^{\prime} \beta}(t^{\prime}, t)\right |\\
&=\max_{\beta\in \calN[r]} \max_{\alpha, \alpha^{\prime} \in \calN[r]} \pth{ {\bf \Psi}_{\alpha \beta}(t^{\prime}, t)-{\bf \Psi}_{\alpha^{\prime} \beta}(t^{\prime}, t)}.
\end{align}

\begin{lemma}
\label{crash erdoc c1}
For each $t\ge 0$ and $t^{\prime}\ge t$, we have
\begin{align*}
\delta_{t}({\bf \Psi}(t^{\prime}, t))\le 1-\eta_{t}({\bf \Psi}(t^{\prime}, t)).
\end{align*}
\end{lemma}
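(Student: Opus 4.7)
My plan is to treat this as the standard Hajnal-type ergodicity coefficient inequality applied not to the full $n\times n$ matrix ${\bf \Psi}(t',t)$ but to its submatrix indexed by the ``alive-at-time-$t$'' set $\calN[t]$. The linchpin is the observation that, although ${\bf \Psi}(t',t)$ has $n$ columns, Proposition~\ref{p3} forces ${\bf \Psi}_{\alpha\gamma}(t',t)=0$ whenever $\alpha\in\calN[t]$ and $\gamma\notin\calN[t]$. Combined with the fact that ${\bf \Psi}(t',t)$ is row-stochastic (being a product of row-stochastic matrices), this yields
\begin{align*}
\sum_{\gamma\in\calN[t]}{\bf \Psi}_{\alpha\gamma}(t',t)\,=\,1\quad\text{for every }\alpha\in\calN[t],
\end{align*}
so the restriction of ${\bf \Psi}(t',t)$ to rows and columns in $\calN[t]$ is itself row-stochastic. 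This is the step I expect to need the most care, because the identities $\delta_t$ and $\eta_t$ are defined only over $\calN[t]$ and one must justify that the ``missing mass'' outside $\calN[t]$ is zero rather than merely small.

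Once the submatrix is row-stochastic, the inequality follows by the usual manipulation. I will fix an arbitrary pair $\alpha,\alpha'\in\calN[t]$ and use the elementary identity $\Psi_{\alpha\gamma}-\min\{\Psi_{\alpha\gamma},\Psi_{\alpha'\gamma}\}=\max\{0,\Psi_{\alpha\gamma}-\Psi_{\alpha'\gamma}\}$. Summing over $\gamma\in\calN[t]$ and using row-stochasticity of the restricted row $\alpha$ gives
\begin{align*}
1-\sum_{\gamma\in\calN[t]}\min\{\Psi_{\alpha\gamma}(t',t),\Psi_{\alpha'\gamma}(t',t)\}\,=\,\sum_{\gamma\in\calN[t]}\max\{0,\Psi_{\alpha\gamma}(t',t)-\Psi_{\alpha'\gamma}(t',t)\}.
\end{align*}
Every summand on the right is nonnegative, so for any $\beta\in\calN[t]$ the right-hand side is at least $\max\{0,\Psi_{\alpha\beta}(t',t)-\Psi_{\alpha'\beta}(t',t)\}\ge\Psi_{\alpha\beta}(t',t)-\Psi_{\alpha'\beta}(t',t)$.

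To finish, I will maximize the left inequality over $\beta\in\calN[t]$ and then over pairs $(\alpha,\alpha')\in\calN[t]\times\calN[t]$, invoking the symmetric rewriting of $\delta_t$ given in (\ref{ergodic 11}) that removes the absolute value. The maximum of $1-\sum_{\gamma}\min\{\Psi_{\alpha\gamma},\Psi_{\alpha'\gamma}\}$ over pairs equals $1-\min_{\alpha,\alpha'}\sum_{\gamma}\min\{\Psi_{\alpha\gamma},\Psi_{\alpha'\gamma}\}=1-\eta_t({\bf \Psi}(t',t))$, yielding $\delta_t({\bf \Psi}(t',t))\le 1-\eta_t({\bf \Psi}(t',t))$ as claimed. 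No routine calculations beyond those sketched above should be required.
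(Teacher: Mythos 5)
Your proposal is correct and follows essentially the same route as the paper: the key fact that the restriction of ${\bf \Psi}(t',t)$ to rows and columns in $\calN[t]$ is row-stochastic is exactly the paper's equation (\ref{sum}) (derived from Proposition \ref{p3}), and your identity $\sum_{\gamma\in\calN[t]}\max\{0,\Psi_{\alpha\gamma}-\Psi_{\alpha'\gamma}\}$ is the same quantity the paper writes as the sum over $\beta$ with $\Psi_{\alpha\beta}\ge\Psi_{\alpha'\beta}$ of the differences. The remaining steps (lower-bounding by a single term and maximizing over $\beta,\alpha,\alpha'$ via (\ref{ergodic 11})) match the paper's argument exactly.
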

\begin{proof}
\begin{align*}
1-\eta_{t}({\bf \Psi}(t^{\prime}, t))&=1-\min_{\alpha, \alpha^{\prime} \in \calN[t]} \sum_{\beta\in \calN[t]} \min \{{\bf \Psi}_{\alpha \beta}(t^{\prime}, t), {\bf \Psi}_{\alpha^{\prime} \beta}(t^{\prime}, t)\}~~~\text{by (\ref{ergodic 2})}\\
&=\max_{\alpha,\alpha^{\prime}\in \calN[t]}\pth{1-\sum_{\beta\in \calN[t]} \min \{{\bf \Psi}_{\alpha \beta}(t^{\prime}, t), {\bf \Psi}_{\alpha^{\prime} \beta}(t^{\prime}, t)\}}\\
&=\max_{\alpha,\alpha^{\prime}\in \calN[t]}\pth{\sum_{\beta\in \calN[t]} {\bf \Psi}_{\alpha \beta}(t^{\prime},t)-\sum_{\beta\in \calN[t]} \min \{{\bf \Psi}_{\alpha \beta}(t^{\prime}, t), {\bf \Psi}_{\alpha^{\prime} \beta}(t^{\prime}, t)\}}~~~\text{due to (\ref{sum})}\\
&=\max_{\alpha,\alpha^{\prime}\in \calN[t]}\sum_{\beta: ~\beta\in \calN[t], {\bf \Psi}_{\alpha \beta}(t^{\prime}, t) \ge {\bf \Psi}_{\alpha^{\prime} \beta}(t^{\prime}, t) } \pth{{\bf \Psi}_{\alpha \beta}(t^{\prime}, t)- {\bf \Psi}_{\alpha^{\prime} \beta}(t^{\prime}, t)}\\
&\ge \max_{\alpha,\alpha^{\prime}\in \calN[t]}\max_{\beta\in \calN[t]}\pth{ {\bf \Psi}_{\alpha \beta}(t^{\prime}, t)- {\bf \Psi}_{\alpha^{\prime} \beta}(t^{\prime}, t)}\\
&=\delta_t({\bf \Psi}(t^{\prime},t)) ~~~\text{by (\ref{ergodic 11})}
\end{align*}

\eproof
\end{proof}

\begin{lemma}
\label{crash erdoc c2}
 For $t_2> t_1\ge t_0\ge 1$,  define ${\bf P}={\bf \Psi}(t_2, t_1+1)$, ${\bf G}={\bf \Psi}(t_1, t_0)$, and ${\bf F}={\bf \Psi}(t_2, t_0)={\bf P}{\bf G}$.
Then the following is true
\begin{align*}
\delta_{t_1+1}({\bf F}) \le \pth{1-\eta_{t_1+1}({\bf P})}\delta_{t_1+1}({\bf G}).
\end{align*}
\end{lemma}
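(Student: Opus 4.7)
The plan is to adapt the classical Hajnal ergodicity argument for products of row-stochastic matrices, carefully tracking which index sets are in play because only rows/columns indexed by non-crashed agents are stochastic in the usual sense here.

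First, I would fix arbitrary $\alpha, \alpha' \in \calN[t_1+1]$ and $\beta \in \calN[t_1+1]$ and expand $F_{\alpha\beta} - F_{\alpha'\beta} = \sum_k (P_{\alpha k} - P_{\alpha' k}) G_{k\beta}$. The key observation, supplied by Proposition \ref{p3}, is that for $\alpha \in \calN[t_1+1]$ the row $P_{\alpha \cdot}$ vanishes outside $\calN[t_1+1]$, so the effective sum ranges only over $k \in \calN[t_1+1]$. Combined with (\ref{sum}), each of the rows $P_{\alpha\cdot}$ and $P_{\alpha'\cdot}$ is a probability distribution supported on $\calN[t_1+1]$.

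Next I would apply the standard positive/negative decomposition: set $s_k = \min(P_{\alpha k}, P_{\alpha' k})$, $a_k = P_{\alpha k} - s_k \ge 0$, $b_k = P_{\alpha' k} - s_k \ge 0$, where $a_k$ and $b_k$ have disjoint support. Since both rows sum to $1$, $\sum_{k \in \calN[t_1+1]} a_k = \sum_{k \in \calN[t_1+1]} b_k = 1 - \sum_{k \in \calN[t_1+1]} s_k$. Denote this common value by $S$. Writing
\begin{align*}
F_{\alpha\beta} - F_{\alpha'\beta} = \sum_{k \in \calN[t_1+1]} a_k\, G_{k\beta} - \sum_{k \in \calN[t_1+1]} b_k\, G_{k\beta},
\end{align*}
I would bound the first sum above by $S \cdot \max_{k \in \calN[t_1+1]} G_{k\beta}$ and the second below by $S \cdot \min_{k \in \calN[t_1+1]} G_{k\beta}$, obtaining
\begin{align*}
F_{\alpha\beta} - F_{\alpha'\beta} \le S \bigl( \max_{k \in \calN[t_1+1]} G_{k\beta} - \min_{k \in \calN[t_1+1]} G_{k\beta} \bigr) \le S \cdot \delta_{t_1+1}({\bf G}),
\end{align*}
where the last inequality is exactly the definition (\ref{ergodic 1}) of $\delta_{t_1+1}$ applied to ${\bf G}$.

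Finally, I would close the estimate by noting that $S = 1 - \sum_{k \in \calN[t_1+1]} \min(P_{\alpha k}, P_{\alpha' k}) \le 1 - \eta_{t_1+1}({\bf P})$ directly from (\ref{ergodic 2}). Since $\alpha, \alpha', \beta \in \calN[t_1+1]$ were arbitrary and the right-hand side of the bound is symmetric in $\alpha, \alpha'$, taking the maximum over the three indices and invoking (\ref{ergodic 11}) yields $\delta_{t_1+1}({\bf F}) \le (1 - \eta_{t_1+1}({\bf P}))\,\delta_{t_1+1}({\bf G})$, as claimed. The one place requiring care—and the main obstacle relative to the classical proof—is the restriction to $\calN[t_1+1]$: since rows indexed outside $\calN[t_1+1]$ need not be stochastic in a useful way, one must verify via Proposition \ref{p3} and (\ref{sum}) that the split and its mass bookkeeping are entirely internal to $\calN[t_1+1]$; after that, the argument is essentially Hajnal's.
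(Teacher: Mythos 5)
Your proposal is correct and follows essentially the same route as the paper's proof: both reduce the sum to indices in $\calN[t_1+1]$ using the vanishing of $P_{\alpha k}$ for crashed $k$ and the row-sum identity (\ref{sum}), and then apply Hajnal's positive/negative splitting (your $a_k,b_k$ decomposition is the same as the paper's split over $k$ with $P_{\alpha k}\ge P_{\alpha' k}$ versus the rest), bounding the two parts by the max and min of $G_{k\beta}$ and identifying the common mass with $1-\sum_k\min\{P_{\alpha k},P_{\alpha' k}\}\le 1-\eta_{t_1+1}({\bf P})$. No substantive differences.
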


\begin{proof}
Since ${\bf P}={\bf \Psi}(t_2, t_1+1)$, ${\bf G}={\bf \Psi}(t_1, t_0)$, and ${\bf F}={\bf \Psi}(t_2, t_0)$, it holds that
$${\bf F}={\bf \Psi}(t_2, t_0)={\bf \Psi}(t_2, t_1+1){\bf \Psi}(t_1, t_0)={\bf P}{\bf G}.$$
For any $\alpha, \beta\in \calV$, we have ${\bf F}_{\alpha \beta}=\sum_{k=1}^n{\bf P}_{\alpha k}{\bf G}_{k \beta}$. We get
\begin{align}
\nonumber
\delta_{t_1+1}({\bf F})&= \max_{\beta \in \calN[t_1+1]} \max_{\alpha, \alpha^{\prime}\in \calN[t_1+1]} \pth{{\bf F}_{\alpha \beta}- {\bf F}_{\alpha^{\prime} \beta}}~~~\text{by (\ref{ergodic 11})}\\
\nonumber
&=\max_{\beta \in \calN[t_1+1]} \max_{\alpha, \alpha^{\prime}\in \calN[t_1+1]} \pth{\sum_{k=1}^n{\bf P}_{\alpha k}{\bf G}_{k \beta}- \sum_{k=1}^n{\bf P}_{\alpha^{\prime} k}{\bf G}_{k \beta}}\\
\nonumber
&=\max_{\beta \in \calN[t_1+1]} \max_{\alpha, \alpha^{\prime}\in \calN[t_1+1]} \pth{\sum_{k=1}^n\pth{{\bf P}_{\alpha k}- {\bf P}_{\alpha^{\prime} k}}{\bf G}_{k \beta}}\\
\nonumber
&=\max_{\beta \in \calN[t_1+1]} \max_{\alpha, \alpha^{\prime}\in \calN[t_1+1]} \pth{\sum_{k\in \calN[t_1+1]}\pth{{\bf P}_{\alpha k}- {\bf P}_{\alpha^{\prime} k}}{\bf G}_{k \beta}+ \sum_{k\notin \calN[t_1+1]}\pth{{\bf P}_{\alpha k}- {\bf P}_{\alpha^{\prime} k}}{\bf G}_{k \beta}}\\
&=\max_{\beta \in \calN[t_1+1]} \max_{\alpha, \alpha^{\prime}\in \calN[t_1+1]} \pth{\sum_{k\in \calN[t_1+1]}\pth{{\bf P}_{\alpha k}- {\bf P}_{\alpha^{\prime} k}}{\bf G}_{k \beta}+ \sum_{k\notin \calN[t_1+1]}\pth{{\bf P}_{\alpha k}- {\bf P}_{\alpha^{\prime} k}}{\bf G}_{k \beta}} \label{partition1}.
\end{align}
%
%
%
Recall that $\calN[t_1+1]$ is the collection of agents that have not crashed by the beginning of iteration $t_1+1$. If $k\notin \calN[t_1+1]$, then $k$ crashed in the first $t_1$ iterations. From (\ref{update crash2}), we know that if $k\notin \calN[t_1+1]$, then ${\bf P}_{ik}[t]=0$ for all $i\not=k$ and all $t>t_1$.
As ${\bf P}={\bf \Psi}(t_2, t_1+1)={\bf P}[t_2]{\bf P}[t_2-1]\cdots {\bf P}[t_1+1]$, we know that ${\bf P}_{i k}={\bf \Psi}_{i k}(t_2, t_1+1)=0$ for any $i\not=k$ and $k\notin \calN[t_1+1]$.
Intuitively speaking, for any $t>t_1$, since agent $k$ has already crashed, it cannot influence any other agents -- no other agents can receive any message from agent $k$ after iteration $t_1$.
 For $\alpha, \alpha^{\prime}\in \calN[t_1+1]$ and $k\notin \calN[t_1+1]$, then ${\bf P}_{\alpha k}=0={\bf P}_{\alpha^{\prime} k}$, i.e., for any $\alpha, \alpha^{\prime}\in \calN[t_1+1]$, it holds that
$$ \sum_{k\notin \calN[t_1+1]}\pth{{\bf P}_{\alpha k}- {\bf P}_{\alpha^{\prime} k}}{\bf G}_{k \beta}~=~\sum_{k\notin \calN[t_1+1]}0\, {\bf G}_{k \beta}~=~0.$$
Thus, equality (\ref{partition1}) can be simplified as
\begin{align*}
\delta_{t_1+1}({\bf F})~
&=\max_{\beta \in \calN[t_1+1]} \max_{\alpha, \alpha^{\prime}\in \calN[t_1+1]} \pth{\sum_{k\in \calN[t_1+1]}\pth{{\bf P}_{\alpha k}- {\bf P}_{\alpha^{\prime} k}}{\bf G}_{k \beta}}\\
&= \max_{\beta \in \calN[t_1+1]} \max_{\alpha, \alpha^{\prime}\in \calN[t_1+1]} \Bigg{(}\sum_{k:~k\in \calN[t_1+1], \,{\bf P}_{\alpha\,k}\ge {\bf P}_{\alpha^{\prime}\,k}}\pth{{\bf P}_{\alpha k}- {\bf P}_{\alpha^{\prime} k}}{\bf G}_{k \beta}\\
&\quad \quad \quad \quad \quad \quad \quad \quad \quad \quad \quad \quad+\sum_{k:~k\in \calN[t_1+1], \,{\bf P}_{\alpha\,k}< {\bf P}_{\alpha^{\prime}\,k}}\pth{{\bf P}_{\alpha k}- {\bf P}_{\alpha^{\prime} k}}{\bf G}_{k \beta} \Bigg{)}\\
&\le \max_{\beta \in \calN[t_1+1]} \max_{\alpha, \alpha^{\prime}\in \calN[t_1+1]} \Bigg{(}\sum_{k:~k\in \calN[t_1+1], \,{\bf P}_{\alpha\,k}\ge {\bf P}_{\alpha^{\prime}\,k}}\pth{{\bf P}_{\alpha k}- {\bf P}_{\alpha^{\prime} k}}\max_{k\in \calN[t_1+1]}{\bf G}_{k \beta}\\
&\quad \quad \quad \quad \quad \quad \quad \quad \quad \quad \quad \quad+\sum_{k:~k\in \calN[t_1+1], \,{\bf P}_{\alpha\,k}< {\bf P}_{\alpha^{\prime}\,k}}\pth{{\bf P}_{\alpha k}- {\bf P}_{\alpha^{\prime} k}}\min_{k\in \calN[t_1+1]}{\bf G}_{k \beta} \Bigg{)}\\
&= \max_{\alpha, \alpha^{\prime}\in \calN[t_1+1]} \pth{\sum_{k:~k\in \calN[t_1+1], \,{\bf P}_{\alpha\,k}\ge {\bf P}_{\alpha^{\prime}\,k}}\pth{{\bf P}_{\alpha k}- {\bf P}_{\alpha^{\prime} k}}}\max_{\beta \in \calN[t_1+1]} \pth{\max_{k\in \calN[t_1+1]}{\bf G}_{k \beta}-\min_{k\in \calN[t_1+1]}{\bf G}_{k \beta}}\\
\nonumber
&=\max_{\alpha, \alpha^{\prime}\in \calN[t_1+1]} \pth{\sum_{k:~k\in \calN[t_1+1], \,{\bf P}_{\alpha\,k}\ge {\bf P}_{\alpha^{\prime}\,k}}\pth{{\bf P}_{\alpha k}- {\bf P}_{\alpha^{\prime} k}}}\delta_{t_1+1}\pth{\bf G}\\
&=\max_{\alpha, \alpha^{\prime}\in \calN[t_1+1]} \sum_{k\in \calN[t_1+1]}\pth{{\bf P}_{\alpha\,k}-\min \{{\bf P}_{\alpha\,k}, {\bf P}_{\alpha^{\prime}\,k}\}}\delta_{t_1+1}({\bf G})\\
&=\max_{\alpha, \alpha^{\prime}\in \calN[t_1+1]} \pth{1-\sum_{k\in \calN[t_1+1]}\min \{{\bf P}_{\alpha\,k}, {\bf P}_{\alpha^{\prime}\,k}\}}\delta_{t_1+1}({\bf G})\\
&=\pth{1-\min_{\alpha, \alpha^{\prime}\in \calN[t_1+1]}\sum_{k\in \calN[t_1+1]}\min \{{\bf P}_{\alpha\,k}, {\bf P}_{\alpha^{\prime}\,k}\} }\delta_{t_1+1}({\bf G})\\
&=\pth{1-\eta_{t_1+1}({\bf P})}\delta_{t_1+1}({\bf G}),
\end{align*}
proving the lemma.

\eproof
\end{proof}

Let
\begin{align}
\label{block}
{\bf Q}[k]={\bf P}[kn] {\bf P}[kn-1]\ldots {\bf P}[(k-1)n+1].
\end{align}
Then
\begin{align}
\label{upper bound}
\nonumber
&\delta_{(k-1)n+1}\pth{{\bf \Psi}(kn,1)}=\delta_{(k-1)n+1}\pth{{\bf \Psi}(kn,(k-1)n+1){\bf \Psi}((k-1)n, 1)}\\
\nonumber
&\le \pth{1-\eta_{(k-1)n+1}\pth{{\bf \Psi}(kn,(k-1)n+1)}}\, \delta_{(k-1)n+1}\pth{{\bf \Psi}((k-1)n, 1)}~~~\text{by Lemma \ref{crash erdoc c2}}\\
\nonumber
&=\pth{1-\eta_{(k-1)n+1}\pth{{\bf Q}[k]}}\, \delta_{(k-1)n+1}\pth{{\bf \Psi}((k-1)n, 1)}~~~\text{by (\ref{block})}\\
\nonumber
&\le \pth{1-\eta_{(k-1)n+1}\pth{{\bf Q}[k]}}\, \delta_{(k-2)n+1}\pth{{\bf \Psi}((k-1)n, 1)}~~~\text{by (\ref{monotone 1})}\\
\nonumber
&\le \pth{1-\eta_{(k-1)n+1}\pth{{\bf Q}[k]}}\pth{1-\eta_{(k-2)n+1}\pth{{\bf Q}[k-1]}}\cdots \pth{1-\eta_{n+1}\pth{{\bf Q}[2]}}\delta_{1}\pth{{\bf \Psi}(n, 1)}~~\text{by Lemma \ref{crash erdoc c2}}\\
\nonumber
&\le \pth{1-\eta_{(k-1)n+1}\pth{{\bf Q}[k]}}\pth{1-\eta_{(k-2)n+1}\pth{{\bf Q}[k-1]}}\cdots \pth{1-\eta_{n+1}\pth{{\bf Q}[2]}} \pth{1-\eta_{1}\pth{{\bf Q}[1]}} ~~~\text{by Lemma \ref{crash erdoc c1}}\\
&= \prod_{r=1}^k \pth{1-\eta_{(r-1)n+1}\pth{{\bf Q}[r]}}.
\end{align}
Recall that for $t\ge 1$ and every ${\bf P}[t]$, there exists a reduced graph $\calH_c[t]$ with adjacency matrix ${\bf H}[t]$ such that
\begin{align*}
{\bf P}[t]\ge \zeta\, {\bf H}[t],
\end{align*}
where $\zeta=\frac{1}{d_{\max}\,+\,1}$.
Let $\widetilde{\calH}_c[k]=\cap_{r=(k-1)n+1}^{kn}\calH_c[r]$, whose vertex set is $\calV$, and edge set is the intersection of the edge sets of reduced graphs $\calH_c[r]$ for $r=(k-1)n+1, \cdots, kn$.
Intuitively speaking, $\widetilde{\calH}_c[k]$ is the maximal common subgraph of $\calH_c[r]$ for $r=(k-1)n+1, \ldots, kn$.
Note that $\widetilde{\calH}_c[k]$ is also a reduced graph. Let $\widetilde{{\bf H}}[k]$ be the adjacency matrix of $\widetilde{\calH}_c[k]$.
Since $\widetilde{\calH}_c[k]\subseteq \calH_c[r]$ for $r=(k-1)n+1, \ldots, kn$, it holds that
\begin{align}
\label{lb1}
{\bf H}[r]\ge \widetilde{{\bf H}}[k].
\end{align}
Then, we get
\begin{align}
\label{lb2}
\nonumber
{\bf Q}[k]&={\bf P}[kn] {\bf P}[kn-1]\ldots {\bf P}[(k-1)n+1]\\
\nonumber
&\ge \zeta^{n} \prod_{r=(k-1)n+1}^{kn} {\bf H}[r]~~~\text{by}~(\ref{iterative CS matrix lb})\\
&\ge \zeta^{n} \pth{\widetilde{\bf H}[k]}^{n}~~~\text{by}~(\ref{lb1}).
\end{align}
Since $G(\calV, \calE)$ satisfies Assumption \ref{a2} and $\widetilde{\calH}_c[k]$ is a reduced graph under crash fault of $G(\calV, \calE)$, there exists at least one node in $\calN[nk+1]$ that can reach every other nodes in $\calN[nk+1]$. Thus, in at least one column of of $\pth{\widetilde{\bf H}[k]}^{n}$ corresponding to agents in $\calN[nk+1]$, the $i$--th entry is lower bounded by 1, for each $i\in \calN[nk+1]$. Thus,
\begin{align}
\label{coeff}
\eta_{nk+1}\pth{{\bf Q}[k]}&=\min_{\alpha, \alpha^{\prime} \in \calN[nk+1]} \sum_{\beta\in \calN[nk+1]} \min \{{\bf Q}_{\alpha \beta}[k], {\bf Q}_{\alpha^{\prime} \beta}[k]\} \\
&\ge \zeta^n.
\end{align}
If $\calN[(k-1)n+1]=\calN[kn+1]$, i.e., no crash occurs from the start of iteration $(k-1)n+1$ to the end of iteration $kn$, then
\begin{align*}
\eta_{(k-1)n+1}\pth{{\bf Q}[k]}&=\min_{\alpha, \alpha^{\prime} \in \calN[(k-1)n+1]} \sum_{\beta\in \calN[(k-1)n+1]} \min \{{\bf Q}_{\alpha \beta}[k], {\bf Q}_{\alpha^{\prime} \beta}[k]\}\\
&=\min_{\alpha, \alpha^{\prime} \in \calN[kn+1]} \sum_{\beta\in \calN[kn+1]} \min \{{\bf Q}_{\alpha \beta}[k], {\bf Q}_{\alpha^{\prime} \beta}[k]\}\\
&=\eta_{kn+1}\pth{{\bf Q}[k]}\ge \zeta^n ~~~\text{by (\ref{coeff})}
\end{align*}
Let $k\ge f$. Since there are at most $f$ crash failures, it holds for at least $k-f$ indices among $r=1, \cdots, k$ satisfy the following
\begin{align*}
\eta_{(r-1)n+1}\pth{{\bf Q}[r]}\ge \zeta^n.
\end{align*}
In addition, from (\ref{ergodic 2}) we know that
$$\eta_{(r-1)n+1}\pth{{\bf Q}[r]}\ge 0$$
 for any $r=1, \cdots, k$.
Thus, (\ref{upper bound}) can be further bounded from above as
\begin{align}
\label{upper bound 2}
\delta_{(k-1)n+1}\pth{{\bf \Psi}(kn,1)}\le \prod_{r=1}^k \pth{1-\eta_{(r-1)n+1}\pth{{\bf Q}[r]}}\le (1-\zeta^n)^{k-f}.
\end{align}
Taking limit on both sides of (\ref{upper bound 2}) over $k$, we get
\begin{align*}
\lim_{k\diverge}\,\delta_{(k-1)n+1}\pth{{\bf \Psi}(kn,1)}\le \lim_{k\diverge}\,\pth{1-\zeta^n}^{k-f}~=~0.
\end{align*}

Let $t=kn+r$ for any $0\le r<n$. Define $k(t)=\lfloor \frac{t}{n}\rfloor$.
Thus,
\begin{align*}
\lim_{t \diverge}\,\delta_{t}\pth{{\bf \Psi}(t,1)}&=\lim_{t \diverge}\,\delta_{t}\pth{{\bf \Psi}(t,kn+1){\bf \Psi}(kn,1)}\\
&\le \lim_{t \diverge}\,\pth{1-\eta_{kn+1}({\bf \Psi}(t,kn+1))}\delta_{kn+1}\pth{{\bf \Psi}(kn,1)}~~~\text{by Lemma \ref{crash erdoc c2}}\\
&\le \lim_{k(t)\diverge}\,\delta_{kn+1}\pth{{\bf \Psi}(kn,1)}\\
&\le \lim_{k(t)\diverge}\, \delta_{(k-1)n+1}\pth{{\bf \Psi}(kn,1)}~~~\text{by (\ref{monotone 1})}\\
&\le  \lim_{k(t) \diverge}\,\delta_{(k-1)n}\pth{{\bf \Psi}(kn,1)}~=~0.
\end{align*}
That is
$$\lim_{t\diverge} \max_{\alpha,\alpha^{\prime}\in \calN[t]} \left | {\bf \Psi}_{\alpha \beta}(t,1)-{\bf \Psi}_{\alpha^{\prime} \beta}(t,1)\right |~=~0.$$
Thus, consensus is achieved, proving that Assumption \ref{a2} is also sufficient.

So far, we have proved the following theorem.
\begin{theorem}
\label{crash consensus}
Iterative approximate crash consensus can be achieved on a given graph $G(\calV, \calE)$ {\em if and only if} every reduced graph (under crash faults) of $G(\calV, \calE)$ contains a source component, i.e., $G(\calV, \calE)$ satisfies Assumption \ref{a2}.
\end{theorem}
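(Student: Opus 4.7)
The plan is to prove both directions separately. Necessity is entirely handled by Lemma \ref{crash consensus nec}: if some reduced graph fails the structural condition, one schedules the crashes at iteration $0$ of the nodes whose edges were removed, and appeals to the standard ``no spanning tree $\Rightarrow$ no consensus'' argument. The bulk of the work therefore goes into sufficiency, which I would prove by construction using Algorithm 3.

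For sufficiency I would work entirely from the matrix representation ${\bf x}[t]={\bf \Psi}(t,1){\bf x}[0]$ and show that the rows of ${\bf \Psi}(t,1)$ indexed by the surviving set $\calN[t]$ collapse to a common stochastic vector. The enabling structural fact is Proposition \ref{p3}: a crashed agent contributes a zero column in every subsequent product. This is what makes the ergodicity coefficients $\delta_r$ and $\eta_r$ well defined on the shrinking index set $\calN[r]$ and gives both the monotonicities (\ref{monotone 1})--(\ref{monotone 2}) and the Hajnal-type coupling $\delta_t\le 1-\eta_t$ of Lemma \ref{crash erdoc c1}.

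The workhorse is the contraction $\delta_{t_1+1}({\bf P}{\bf G})\le (1-\eta_{t_1+1}({\bf P}))\,\delta_{t_1+1}({\bf G})$ of Lemma \ref{crash erdoc c2}, which I would iterate over blocks ${\bf Q}[k]={\bf P}[kn]\cdots {\bf P}[(k-1)n+1]$ of length $n$. Using ${\bf P}[t]\ge \zeta\,{\bf H}[t]$ and the maximal common reduced graph $\widetilde{\calH}_c[k]=\bigcap_{r=(k-1)n+1}^{kn}\calH_c[r]$, one obtains ${\bf Q}[k]\ge \zeta^n\,\widetilde{{\bf H}}[k]^n$. By Assumption \ref{a2}, $\widetilde{\calH}_c[k]$ still contains a source that reaches every surviving node within at most $n-1$ hops, so $\widetilde{{\bf H}}[k]^n$ has a column indexed in $\calN[kn+1]$ whose entries on rows in $\calN[kn+1]$ are all at least $1$, yielding $\eta_{kn+1}({\bf Q}[k])\ge \zeta^n$.

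The main obstacle is reconciling the index $(k-1)n+1$ required by the chained application of Lemma \ref{crash erdoc c2} with the cleaner ``end-of-block'' index $kn+1$ where the lower bound on $\eta$ is available; the two agree precisely when no crash occurs during block $k$. Since at most $f$ crashes occur in total, at least $k-f$ of the first $k$ blocks are crash-free and contribute a contraction factor of at most $1-\zeta^n$, while the remaining blocks contribute at most $1$ by definition of $\eta$. Chaining therefore yields $\delta_{(k-1)n+1}({\bf \Psi}(kn,1))\le (1-\zeta^n)^{k-f}\to 0$ as $k\to\infty$, and a routine extension from $t=kn$ to arbitrary $t$ via one more contraction together with (\ref{monotone 1}) gives $\lim_{t\to\infty}\max_{\alpha,\alpha'\in \calN[t]}|{\bf \Psi}_{\alpha\beta}(t,1)-{\bf \Psi}_{\alpha'\beta}(t,1)|=0$, which is exactly asymptotic consensus among the surviving agents.
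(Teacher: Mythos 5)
Your proposal is correct and follows essentially the same route as the paper's own proof: necessity via Lemma \ref{crash consensus nec}, and sufficiency via the matrix representation, Proposition \ref{p3}, the Hajnal coefficients with Lemmas \ref{crash erdoc c1} and \ref{crash erdoc c2}, the block products ${\bf Q}[k]\ge\zeta^n\widetilde{\bf H}[k]^n$, and the observation that at least $k-f$ of the first $k$ blocks are crash-free and hence contribute a factor $1-\zeta^n$. You even isolate the same index-reconciliation subtlety between $(k-1)n+1$ and $kn+1$ that the paper resolves through crash-free blocks, so no gaps remain.
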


\subsubsection{Properties of the Limiting Weights}
We first present a detailed characterization of the consensus value in terms of limiting weights of individual agents.

Let $[{\bf \Psi}(t,1)]_{\calN}$ be the $|\calN|$ by $|\calN|$ submatrix of ${\bf \Psi}(t,1)$ whose rows and columns correspond to agents in $\calN$. From the fact that $\lim_{t \diverge}\,\delta_{t}\pth{{\bf \Psi}(t,1)}=0$, we know that the submatrix $[{\bf \Psi}(t,1)]_{\calN}$ will converge to a rank one matrix, i.e.,
\begin{align*}
\lim_{t\diverge} [{\bf \Psi}(t,1)]_{\calN}=
\begin{bmatrix}
    \pi_{i_1}(1)       & \pi_{i_2}(1) & \pi_{i_3}(1) & \dots & \pi_{i_{|\calN|}}(1) \\
    \pi_{i_1}(1)       & \pi_{i_2}(1) & \pi_{i_3}(1) & \dots & \pi_{i_{|\calN|}}(1) \\
    \hdotsfor{5} \\
    \pi_{i_1}(1)       & \pi_{i_2}(1) & \pi_{i_3}(1) & \dots & \pi_{i_{|\calN|}}(1)
\end{bmatrix},
\end{align*}
where $i_j\in \calN$ and $[\pi_{i_1}(1)\, \pi_{i_2}(1) \, \pi_{i_3}(1) \, \cdots \, \pi_{i_{|\calN|}}(1) ]$ is a sub-vector of a row stochastic $\pi(1)$ such that $\sum_{j \in\calN} \pi_{j}(1)=1$,  and  $\pi_{j}(1)=0$ for $j\notin \calN$.
Indeed, the following holds for any $r$,
\begin{align*}
\lim_{t\diverge} [{\bf \Psi}(t,r)]_{\calN}=
\begin{bmatrix}
    \pi_{i_1}(r)       & \pi_{i_2}(r) & \pi_{i_3}(r) & \dots & \pi_{i_{|\calN|}}(r) \\
    \pi_{i_1}(r)       & \pi_{i_2}(r) & \pi_{i_3}(r) & \dots & \pi_{i_{|\calN|}}(r) \\
    \hdotsfor{5} \\
    \pi_{i_1}(r)       & \pi_{i_2}(r) & \pi_{i_3}(r) & \dots & \pi_{i_{|\calN|}}(r)
\end{bmatrix}.
\end{align*}
Note that the limiting row stochastic vector may depends on $r$. Let $\gamma$ be the minimal size of the source component of each reduced graph under crash faults.

\begin{lemma}
\label{lc1}
There are at least $\gamma$ columns in $[{\bf \Psi}(r+n-1,\, r)]_{\calN}$ that are lower bounded by $\zeta^n \ones$ component-wise for all $r$, where $\ones\in \reals^{n-\phi}$ is an all one vector of dimension $n-\phi$.
\end{lemma}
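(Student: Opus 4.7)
The plan is to mirror the argument that produced $(\ref{lb2})$ for the block matrices ${\bf Q}[k]$, but now applied to the window of $n$ consecutive iterations from $r$ to $r+n-1$. First I would expand
\[
{\bf \Psi}(r+n-1,r)={\bf P}[r+n-1]\,{\bf P}[r+n-2]\cdots {\bf P}[r]
\]
and apply $(\ref{iterative CS matrix lb})$ termwise, obtaining
\[
{\bf \Psi}(r+n-1,r)\ \ge\ \zeta^{n}\,{\bf H}[r+n-1]\,{\bf H}[r+n-2]\cdots {\bf H}[r],
\]
where each $\calH_c[s]$ is a reduced graph of $G(\calV,\calE)$.

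Next I would define the intersection graph
\[
\widetilde{\calH}[r]\ =\ \bigcap_{s=r}^{r+n-1}\calH_c[s]
\]
and argue, exactly as in the paragraph preceding $(\ref{lb2})$, that $\widetilde{\calH}[r]$ is itself a reduced graph in the sense of Definition $\ref{reduced crash}$. This is where crash-fault monotonicity is crucial: all edges removed in forming the various $\calH_c[s]$ are incident to agents that have crashed by iteration $s$, a subset of $\calF$; so the union of removed edges in the intersection is still incident to at most $|\calF|$ nodes. With $\widetilde{{\bf H}}[r]$ denoting its adjacency matrix, each ${\bf H}[s]\ge \widetilde{{\bf H}}[r]$ entrywise, yielding
\[
{\bf \Psi}(r+n-1,r)\ \ge\ \zeta^{n}\bigl(\widetilde{{\bf H}}[r]\bigr)^{n}.
\]

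Now I would invoke Assumption $\ref{a2}$ applied to $\widetilde{\calH}[r]$: it contains a unique non-trivial weakly connected component whose induced subgraph has a source component $S$, and by the standing hypothesis $|S|\ge \gamma$. A crashed agent is isolated in any reduced graph, hence has no outgoing path to any other agent, and therefore cannot belong to $S$; thus $S\subseteq \calN$. Finally I would use the implicit self-loops coming from the $\{i\}$ in the update $(\ref{crash e_Z11})$: the diagonal entries of ${\bf P}[s]$, and hence of ${\bf H}[s]$ and $\widetilde{{\bf H}}[r]$, are positive for every non-crashed agent. Since the graph has at most $n$ vertices, every $j\in S$ has a directed path to every $i\in \calN$ in $\widetilde{\calH}[r]$ of length at most $n-1$, which can be padded by a self-loop at $i$ to a walk of length exactly $n$. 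Therefore $\bigl(\widetilde{{\bf H}}[r]\bigr)^{n}_{ij}\ge 1$ for all $i\in \calN$ and all $j\in S$, so each of the $|S|\ge\gamma$ columns of $[{\bf \Psi}(r+n-1,r)]_{\calN}$ indexed by $S$ is componentwise $\ge \zeta^{n}\ones$.

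The only delicate step is verifying that $\widetilde{\calH}[r]$ qualifies as a reduced graph, and this is precisely where crash-fault monotonicity matters: under Byzantine faults the analogous argument requires a pigeonhole over $\tau_b(n-\phi)$ steps, but here the monotone accumulation of crashes makes a single window of length $n$ sufficient. The self-loop observation is routine once one notices that $i$ itself is always included in the averaging set in $(\ref{crash e_Z11})$.
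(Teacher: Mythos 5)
Your overall route is the paper's: expand ${\bf \Psi}(r+n-1,r)$ as a product of the ${\bf P}[s]$, lower bound each factor by $\zeta\,{\bf H}[s]$ using (\ref{iterative CS matrix lb}), pass to a common sub-reduced-graph whose adjacency matrix bounds every ${\bf H}[s]$ from below, raise it to the $n$-th power, and count source columns. The paper takes the common subgraph to be the fixed reduced graph $\widetilde{\calH}$ in which the edges incident to \emph{all} of $\calF$ are removed (so $\widetilde{\calH}\subseteq\cap_s\calH_c[s]$ automatically), whereas you take the intersection itself and must verify it is a reduced graph; both work, and your explicit self-loop padding of paths to length exactly $n$ is a useful detail the paper leaves implicit.

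The one step that does not hold as stated is ``$S\subseteq\calN$.'' You justify it by saying a crashed agent is isolated in any reduced graph, but $\calF$ is the set of agents that crash at \emph{some} point in the execution, and an agent of $\calF$ that has not yet crashed by iteration $r+n-1$ participates normally in every ${\bf P}[s]$ over the window; the reduced graphs $\calH_c[s]$ realized by (\ref{iterative CS matrix lb}) therefore need not isolate it, and it can lie in the source component of your intersection graph $\widetilde{\calH}[r]$. In that case ``the $|S|\ge\gamma$ columns of $[{\bf \Psi}(r+n-1,r)]_{\calN}$ indexed by $S$'' is not well-formed, since columns indexed by $S\setminus\calN$ do not appear in the submatrix at all, and the count of covered columns inside $\calN$ could a priori fall below $\gamma$. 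The repair is exactly the paper's choice: argue via $G-\calF$ (all faulty agents isolated), whose source component is contained in $\calN$, has size at least $\gamma$ by hypothesis, and reaches all of $\calN$; since $G-\calF$ is a subgraph of your $\widetilde{\calH}[r]$, reachability only improves in the intersection graph, so those same $\gamma$ columns of $[(\widetilde{{\bf H}}[r])^{n}]_{\calN}$ are bounded below by $\ones$ once the self-loops are used as you describe.
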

Lemma \ref{lc1} follows immediately from (\ref{lb1}) and (\ref{lb2}). Thus its proof is omitted.


\begin{proof}
For each $t=r, \ldots, r+n-1$, let $\calH_c[t]$ be a reduced graph with adjacency matrix ${\bf H}[t]$ such that
$${\bf P}[t]\ge \zeta  {\bf H}[t], $$
where $\zeta=\frac{1}{d_{\max}+1}$.

Let $\widetilde{\calH}$ be the reduced graph in which the edges incident to all the faulty nodes (nodes in $\calF$) are removed.
Let $\widetilde{\bf H}$ be the adjacency matrix of $\widetilde{\calH}$.
It is easy to see that $\widetilde{\calH}\subseteq \cap_{t=r}^{r+n-1} \calH_c[t]$.

We get
\begin{align*}
{\bf \Psi}(r+n-1,\, r)&={\bf P}[r+n-1]{\bf P}[r+n-2]\cdots{\bf P}[r]\\
&\ge \zeta^n \prod_{t=r}^{r+n-1}{\bf H}[t]\\
&\ge \zeta^n \pth{\widetilde{\bf H}}^n.
\end{align*}
Since $G(\calV, \calE)$ satisfies Assumption \ref{a2} and there are at least $\gamma$ nodes in the source component of $G-\calF$, there exist at least $\gamma$ nodes in $\calN$ that can reach every other node in $G-\calF$. Thus, at least $\gamma$ columns in $[\pth{\widetilde{\bf H}}^{n}]_{\calN}$ is lower bounded by 1, where $[\pth{\widetilde{\bf H}}^{n}]_{\calN}$ is the submatrix of $\pth{\widetilde{\bf H}}^{n}$ whose rows and columns correspond to agents in $\calN$. Since $|\calN|=n-\phi$, the lemma is proved.

\eproof
\end{proof}

\begin{lemma}
\label{lc2}
For any fixed $r$, at least $\gamma$ entries in $\pi(r)$ are lower bounded by $\zeta^n$, i.e., there exists $\calI_r\subseteq \calN$ such that $|\calI_r|\ge \gamma$ and
\begin{align*}
\pi_i(r)\ge \zeta^n,
\end{align*}
for each $i\in \calI_r$.
\end{lemma}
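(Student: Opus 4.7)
The plan is to transfer the finite-time column bound of Lemma \ref{lc1} to the limiting stochastic vector $\pi(r)$ by using the semigroup identity for the backward products ${\bf \Psi}(t,r)$ together with the convergence $[{\bf \Psi}(t,r)]_{\calN} \to \ones \pi(r)^\prime$ established in the paragraph preceding Lemma \ref{lc1}.

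First, I would fix $r\ge 1$ and invoke Lemma \ref{lc1} to obtain a subset $\calI_r \subseteq \calN$ with $|\calI_r| \ge \gamma$ such that
\begin{align*}
[{\bf \Psi}(r+n-1, r)]_{\alpha\beta} ~\ge~ \zeta^n, \qquad \text{for every } \alpha \in \calN \text{ and every } \beta \in \calI_r.
\end{align*}
Next, for $t \ge r+n$, I would apply the backward product identity ${\bf \Psi}(t, r) = {\bf \Psi}(t, r+n)\, {\bf \Psi}(r+n-1, r)$ and expand, for any $i\in \calN$ and $\beta \in \calI_r$,
\begin{align*}
[{\bf \Psi}(t, r)]_{i\beta} ~=~ \sum_{k} [{\bf \Psi}(t, r+n)]_{ik}\,[{\bf \Psi}(r+n-1, r)]_{k\beta} ~\ge~ \zeta^n \sum_{k\in \calN} [{\bf \Psi}(t, r+n)]_{ik},
\end{align*}
where the last inequality uses nonnegativity to discard indices outside $\calN$ and then applies the column bound from Lemma \ref{lc1} on $\calN$.

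Finally, I would let $t\diverge$. The left-hand side tends to $\pi_\beta(r)$ by the rank-one limit of $[{\bf \Psi}(t,r)]_{\calN}$, and the finite sum on the right-hand side tends to $\zeta^n \sum_{k\in \calN} \pi_k(r+n) = \zeta^n$, because the limiting vector $\pi(r+n)$ is supported on $\calN$ and is stochastic (as stated immediately before Lemma \ref{lc1}). This yields $\pi_\beta(r) \ge \zeta^n$ for every $\beta \in \calI_r$, completing the proof.

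The argument is essentially routine given Lemma \ref{lc1} and the established limit; the one point requiring care is the bookkeeping between $\calN$ and the (possibly larger) set $\calN[r+n]$ of agents still alive at iteration $r+n$. Restricting the inner sum to $\calN$ (rather than $\calN[r+n]$) is legitimate since all summands are nonnegative, and it ensures that the column bound from Lemma \ref{lc1}, which holds only on $\calN$-indexed columns, is available. Passing the limit inside the finite sum is trivial since the sum has at most $|\calN|$ terms.
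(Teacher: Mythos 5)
Your proof is correct and follows essentially the same route as the paper's: both decompose ${\bf \Psi}(t,r)={\bf \Psi}(t,r+n){\bf \Psi}(r+n-1,r)$, apply the column bound from Lemma \ref{lc1} on the $\calN$-indexed columns, and pass to the limit using that $\pi(r+n)$ is stochastic and supported on $\calN$. The only cosmetic difference is that you take limits of the inequality directly, while the paper first derives the exact identity $\pi_j(r)=\sum_{k\in\calN}\pi_k(r+n){\bf \Psi}_{kj}(r+n-1,r)$ and then bounds it; the substance is identical.
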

The proof of Lemma \ref{lc2} is identical to the proof of Lemma \ref{lblimiting}.


\begin{proof}
Given ${\bf \Psi}(r+n-1,r)$, let $\calI_r\subseteq \calN$ be the collection of columns that are lower bounded by $\zeta^n \ones$ in $[{\Psi}(r+n-1,r)]_{\calN}$.
By Lemma \ref{lc1}, we know that $|\calI_r|\ge \gamma$.

Since $\delta_{t\diverge} \pth{{\bf \Psi}(t,r)}=0$, and ${\bf \Psi}(t,r)$ is a row-stochastic matrix. Let ${\bf \Psi}_{i\cdot}(t,r)$ be the $i$--th row of ${\bf \Psi}(t,r)$.
For each $i\in \calN$, it holds that
$$\lim_{t\diverge} {\bf \Psi}_{i\cdot}(t,r)~=~\pi(r),$$
where $\sum_{j\in \calN} \pi_j(r)=1$ and $\pi_j(r)=0$ for each $j\notin \calN$.
Similarly, we have
for each $i\in \calN$, it holds that
$$\lim_{t\diverge} {\bf \Psi}_{i\cdot}(t,r+n)~=~\pi(r+n),$$
where $\sum_{j\in \calN} \pi_j(r+n)=1$ and $\pi_j(r+n)=0$ for each $j\notin \calN$.

In addition, let $i\in \calN$, we have
\begin{align*}
\pi_j(r)&=\lim_{t\diverge} {\bf \Psi}_{ij}(t, r)\\
&=\lim_{t\diverge} \pth{{\bf \Psi}(t, r+n){\bf \Psi}(r+n-1,r)}_{ij}\\
&=\lim_{t\diverge} \sum_{k=1}^n {\bf \Psi}_{kj}(t, r+n){\bf \Psi}_{ik}(r+n-1,r)\\
&=\lim_{t\diverge} \pth{\sum_{k\in \calN} {\bf \Psi}_{ik}(t, r+n){\bf \Psi}_{kj}(r+n-1,r)+\sum_{k\notin \calN} {\bf \Psi}_{ik}(t, r+n){\bf \Psi}_{kj}(r+n-1,r)}\\
&=\sum_{k\in \calN} \pth{\lim_{t\diverge}{\bf \Psi}_{ik}(t, r+n)}{\bf \Psi}_{kj}(r+n-1,r)+\sum_{k\notin \calN} \pth{\lim_{t\diverge}{\bf \Psi}_{ik}(t, r+n)}{\bf \Psi}_{kj}(r+n-1,r)\\
&=\sum_{k\in \calN} \pi_i(r+n){\bf \Psi}_{kj}(r+n-1,r)+\sum_{k\notin \calN} \pi_k(r+n){\bf \Psi}_{kj}(r+n-1,r)\\
&=\sum_{k\in \calN} \pi_k(r+n){\bf \Psi}_{kj}(r+n-1,r).
\end{align*}

For each $j\in \calI_r$, we have ${\bf \Psi}_{kj}(r+n-1,r)\ge \zeta^n$.
Then
$$\pi_j(r)=\sum_{k\in \calN} \pi_k(r+n){\bf \Psi}_{kj}(r+n-1,r)\ge \sum_{k\in \calN} \pi_k(r+n)\zeta^n =\zeta^n.$$
The last equality holds since $\sum_{k\in \calN} \pi_k(r+n)=1$.

In addition, since $|\calI_r|\ge \gamma$, then $\pi_j(r)\ge \zeta^n$ for at least $\gamma$ agents in $\calN$.

\eproof
\end{proof}

\subsubsection{Undirected Graphs}
When $G(\calV, \calE)$ is undirected, we will show that average consensus among all the non-faulty agents can be achieved, i.e.,  $\pi_i=\frac{1}{|\calN|}$ for each $i\in \calN$ is achievable.
We will modify the update step of Algorithm 3. With the new update step, the state evolution of the non-faulty agents can be represented by a sequence of doubly-stochastic matrices.

\paragraph{}
\vspace*{8pt}\hrule

{\bf Algorithm 4 (crash consensus)}
\vspace*{4pt}\hrule

~

Steps to be performed by agent $i\in \calN[t]$ in the $t$-th iteration:
\begin{enumerate}

\item {\em Transmit step:} Transmit current state $x_i[t-1]$ on all outgoing edges.
\item {\em Receive step:} Receive values on incoming edges. These values form
multiset 
$\calR_i(t)$ of size at most $|N_i^{-}|$.\footnote{Some agents in $N_i^-$ may have crashed.}

\item {\em Update step:}
%
Let $a_j[t-1]=\frac{1}{\max\left\{d_i^-+1,\, d_j^-+1\right\}}$ for each $j\in \calR_i(t)$, and let $a_i[t-1]=1-\sum_{j\in \calR_i(t)} a_j[t-1]$.
Update its state as follows.
\begin{eqnarray}
x_i[t] ~ = ~\pth{\sum_{j\in \calR_i(t)}a_j[t-1]x_j[t-1]}+ a_i[t-1]x_i[t-1].
\label{crash e}
\end{eqnarray}
\end{enumerate}

~
\hrule

~

~
Note that to execute Algorithm 4, each agent needs to know the degrees of its neighbors.

Let $\widetilde{\bf P}[t]\in \reals^{n \times n}$ by a $n$ by $n$ matrix such that for each $i\in \bar{\calN}[t]$,
\begin{align}
\label{update crash3}
\widetilde{\bf P}_{ij}[t]=
\begin{cases}
~~\frac{1}{\max\{d_i^-+1,\, d_j^-+1\}}, ~~~\text{if}~ j\in \calR_i(t),\\
~~1-\sum_{k\in \calR_i(t)} \frac{1}{\max\{d_i^-+1,\, d_k^-+1\}} ~~~\text{if}~ j=i,\\
~~0, ~~~\text{otherwise},
\end{cases}
\end{align}
and for each $i\notin \bar{\calN}[t]$,
\begin{align}
\label{update crash4}
\widetilde{\bf P}_{ij}[t]=
\begin{cases}
~~\frac{1}{\max\{d_i^-+1,\, d_j^-+1\}}~~~\text{if}~ i\in \calR_j(t)\\
~~1-\sum_{k: j\in \calR_k(t)}\frac{1}{\max\{d_i^-+1,\, d_k^-+1\}}, ~~~\text{if}~ j=i\\
~~0, ~~~\text{otherwise}.
\end{cases}
\end{align}
It is easy to see that $\widetilde{\bf P}[t]$ is doubly-stochastic for all $t\ge 0$.

Let $\widetilde{\bf x}[t]\in \reals^{n}$ be a real vector of dimension $n$, with $\tx_j[t]$ defined as
$$ \tx_j[0]=x_j[0],$$
and
$$\tx_j[t]~=~\sum_{k=1}^n  \widetilde{\bf P}_{jk}[t] \tx_k[t-1],$$
for each $t\ge 1$, and for all $i\in \calV$. The evolution of vector $\widetilde{\bf x}[t]$ can be represented as follows.
\begin{align}
\label{eff evo1}
\widetilde{\bf x}[t]=\widetilde{\bf P}[t]\,\widetilde{\bf x}[t-1]=\widetilde{\bf P}[t]\widetilde{\bf P}[t-1]\cdots\widetilde{\bf P}[1]\,\widetilde{\bf x}[0].
\end{align}

\begin{proposition}
\label{eff evo}
For each $t\ge 0$, it holds that
\begin{align*}
\tx_j[t]=x_j[t],
\end{align*}
for each $j\in \bar{\calN}[t]$.
\end{proposition}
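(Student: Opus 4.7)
The plan is to proceed by straightforward induction on $t$. The base case $t=0$ is immediate, since by construction $\widetilde{\bf x}[0] = {\bf x}[0]$, so $\tx_j[0] = x_j[0]$ for every $j \in \calV$, and in particular for every $j \in \bar{\calN}[0]$.

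For the inductive step, suppose $\tx_k[t-1] = x_k[t-1]$ for all $k \in \bar{\calN}[t-1]$, and fix any $j \in \bar{\calN}[t]$. Since agent $j$ is alive by the end of iteration $t$, it belongs to $\bar{\calN}[t] \subseteq \calN[t]$, and the row of $\widetilde{\bf P}[t]$ indexed by $j$ is given by (\ref{update crash3}). In particular, expanding the definition (\ref{eff evo1}) yields
\begin{align*}
\tx_j[t] \;=\; \sum_{k\in \calR_j(t)} \frac{1}{\max\{d_j^-+1,\, d_k^-+1\}}\, \tx_k[t-1] \;+\; \pth{1 - \sum_{k\in \calR_j(t)} \frac{1}{\max\{d_j^-+1,\, d_k^-+1\}}} \tx_j[t-1],
\end{align*}
which has exactly the form of the Algorithm~4 update (\ref{crash e}) applied to the vector $\widetilde{\bf x}[t-1]$ in place of ${\bf x}[t-1]$.

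It then remains to invoke the inductive hypothesis to swap $\tx_k[t-1]$ for $x_k[t-1]$ on the right-hand side. For any $k \in \calR_j(t)$, agent $k$ must have been alive at the beginning of iteration $t$ in order to transmit on edge $(k,j)$; that is, $k \in \calN[t] = \bar{\calN}[t-1]$. Likewise, $j \in \bar{\calN}[t] \subseteq \calN[t] = \bar{\calN}[t-1]$. The inductive hypothesis therefore applies to every index $k \in \calR_j(t) \cup \{j\}$ appearing on the right-hand side, and the resulting identity is precisely the Algorithm~4 update (\ref{crash e}) for $x_j[t]$. Hence $\tx_j[t] = x_j[t]$, completing the induction.

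I do not anticipate any genuine obstacle here; the proposition is essentially an observation that the ``fictitious'' rows of $\widetilde{\bf P}[t]$ corresponding to crashed agents (defined via (\ref{update crash4}) to enforce double stochasticity) have been inserted without disturbing the rows of alive agents, so the shadow process $\widetilde{\bf x}[t]$ agrees with the true process $\bf{x}[t]$ on the non-crashed coordinates. The only point requiring care is the bookkeeping observation $\calR_j(t) \subseteq \calN[t] = \bar{\calN}[t-1]$, which justifies the application of the inductive hypothesis to every neighbor that contributed to $j$'s update.
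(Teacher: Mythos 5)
Your proof is correct and follows essentially the same route as the paper's: induction on $t$, with the base case immediate from $\bar{\calN}[0]=\calV$, the inductive step expanding the row of $\widetilde{\bf P}[t]$ given by (\ref{update crash3}) for an alive agent $j$, and the key bookkeeping fact $\calR_j(t)\cup\{j\}\subseteq \calN[t]=\bar{\calN}[t-1]$ justifying the substitution of $x_k[t-1]$ for $\tx_k[t-1]$. The only difference is a cosmetic shift of the induction index ($t-1\to t$ rather than $t\to t+1$).
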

\begin{proof}
We prove this proposition by induction.

For $t=0$, as $\bar{\calN}[0]=\calV$, it follows trivially from the definition of $\tx_j[0]$ that $\tx_j[0]=x_j[0]$ for $j\in \calV$.

Suppose the proposition holds for the $t$--th iteration. Now consider the $t+1$--iteration.
\begin{align}
\nonumber
&\tx_j[t+1]=\sum_{k=1}^n \widetilde{\bf P}_{jk}[t+1] \tx_k[t]\\
\nonumber
&=\sum_{k\in \calR_j(t+1)} \frac{1}{\max\{d_j^-+1, \, d_k^-+1\}}\tx_k[t]+\pth{1-\sum_{k\in \calR_j(t+1)} \frac{1}{\max\{d_j^-+1,\, d_k^-+1\}} }\tx_j[t]\\
&=\sum_{k\in \calR_j(t+1)} \frac{1}{\max\{d_j^-+1, \, d_k^-+1\}}x_k[t]+\pth{1-\sum_{k\in \calR_j(t+1)} \frac{1}{\max\{d_j^-+1,\, d_k^-+1\}} }x_j[t]~\label{c1}\\
\nonumber
&=x_j[t+1]~~~\text{by~(\ref{crash e})}
\end{align}
Since $\calR_j(t+1)\cup \{j\}\subseteq \calN[t+1]=\bar{\calN}[t]$, then by the induction hypothesis, we know that
$\tx_i[t]=x_i[t]$ for each $i\in \calR_j(t+1)\cup \{j\}$. Therefore, the proof of the proposition is complete.

\eproof
\end{proof}

From Proposition \ref{eff evo}, we know that the evolution of $\widetilde{\bf x}[t]$ in (\ref{eff evo1}) captures the dynamic of the evolution of $x_j[t]$ for each $i\in \calN[t]$. With the same proof as Algorithm 3, we can show that
$$\lim_{t\diverge}\, \delta_{t}\pth{\widetilde{\bf P}[t] \widetilde{\bf P}[t-1]\cdots \widetilde{\bf P}[1]}~=~0.$$
The submatrix $[\widetilde{\bf P}[t] \widetilde{\bf P}[t-1]\cdots \widetilde{\bf P}[1]]_{\calN}$ will converge to a rank one matrix. That is, the rows in $\widetilde{\bf P}[t] \widetilde{\bf P}[t-1]\cdots \widetilde{\bf P}[1]$ corresponding to all the non-faulty agents in $\calN$ will be identical asymptotically. Let $\tilde{\pi}(1)$ be the identical row stochastic vector. It can be shown that $\sum_{i\in \calN}\pi_i(1)=1$ and $\pi_i(1)=0$ for each $i\notin \calN$. In addition, since $\widetilde{\bf P}[t]$ is a doubly-stochastic matrix, it holds that $\tilde{\pi}_i(1)=\frac{1}{|\calN|}$ for each $i\in \calN$.

\subsection{Crash-Tolerant Algorithms}

  We are interested in the following family of functions.
\begin{definition}
\label{validfCrash}
Let $\widetilde{\calA}(\beta, \gamma)$ be the collection of functions defined as follows:
\begin{align}
\nonumber
\widetilde{\calA}(\beta, \gamma)~\triangleq ~\Bigg{\{}~p(x): p(x)&=\sum_{i\in \calV} \alpha_i h_i(x), ~\forall i\in\calV, ~ \alpha_i\geq 0,\\
&\sum_{i\in \calV}\alpha_i=1,\text{~~and~~}
\pth{\sum_{i\in\calN} {\bf 1}\left\{\alpha_i\ge \beta\right\}} ~ \geq ~ \gamma ~\Bigg{\}}
\label{Cvalid collection}
\end{align}

\end{definition}

In contrast to Definition \ref{validfByzantine}, each function in $\widetilde{\calA}(\beta, \gamma)$ is a convex combination of {\em all} the local functions.
Define the collection of unconstrained optimal solutions as
$$\widetilde{X}(\beta, \gamma)~\triangleq ~ \cup_{p(x)\in \widetilde{\calA}(\beta, \gamma)} \, \arg\min_{x\in \reals} \, p(x).$$

\begin{lemma}
\label{convex C1}
If $\beta\le \frac{1}{|\calN|}$ and $\gamma\le |\calN|$, the set $\widetilde{X}(\beta, \gamma)$ is convex and closed.
\end{lemma}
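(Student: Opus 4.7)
The plan is to follow the two–step template used for Lemma~\ref{convex B1}, adapted from the Byzantine collection $\calA(\beta,\gamma)$ (weights range over $\calN$ only) to the crash collection $\widetilde{\calA}(\beta,\gamma)$ (weights range over all of $\calV$, while the cardinality constraint still only counts non-faulty indices). This change enlarges the admissible simplex of weight vectors but leaves the counting condition untouched, so the same geometric reasoning should carry over with only notational changes.

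For closedness I would first observe that the admissible weight set
\[
\calW=\Bigl\{\alpha\in\reals^{|\calV|}:\alpha_i\ge 0,\;\sum_{i\in\calV}\alpha_i=1,\;\sum_{i\in\calN}\indc{\alpha_i\ge\beta}\ge\gamma\Bigr\}
\]
is the finite union, over subsets $S\subseteq\calN$ with $|S|=\gamma$, of the closed polytopes $\{\alpha:\alpha_i\ge\beta\ \forall i\in S,\ \alpha_j\ge 0\ \forall j,\ \sum_j\alpha_j=1\}$; hence $\calW$ is closed and, being bounded, compact. Given $y_k\in\widetilde X(\beta,\gamma)$ with $y_k\to y^\star$, pick corresponding $\alpha^{(k)}\in\calW$ satisfying $\sum_i\alpha_i^{(k)}h_i'(y_k)=0$. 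By compactness, pass to a subsequence with $\alpha^{(k_j)}\to\alpha^\star\in\calW$; continuity of each $h_i'$ then gives $\sum_i\alpha_i^\star h_i'(y^\star)=0$, so $y^\star$ minimizes $p^\star(x)=\sum_i\alpha_i^\star h_i(x)\in\widetilde\calA(\beta,\gamma)$, placing $y^\star\in\widetilde X(\beta,\gamma)$.

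For convexity I would take $y_1\le y_2$ in $\widetilde X(\beta,\gamma)$ with witness functions $p_1,p_2\in\widetilde\calA(\beta,\gamma)$, weights $\alpha^{(1)},\alpha^{(2)}\in\calW$, and $p_j'(y_j)=0$; set $y_\lambda=\lambda y_1+(1-\lambda)y_2$. Because each $h_i'$ is non-decreasing, $\sum_i\alpha_i^{(1)}h_i'(y_\lambda)\ge 0\ge\sum_i\alpha_i^{(2)}h_i'(y_\lambda)$, so the intermediate value theorem yields $\mu\in[0,1]$ with $\sum_i[\mu\alpha_i^{(1)}+(1-\mu)\alpha_i^{(2)}]h_i'(y_\lambda)=0$; thus $y_\lambda$ is a minimizer of $p_\mu=\mu p_1+(1-\mu)p_2$.

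The hard part is that $p_\mu$ need not lie in $\widetilde\calA(\beta,\gamma)$: a convex combination can dilute individual weights below $\beta$ (for instance if $p_1$ and $p_2$ put their non-faulty mass on disjoint subsets of $\calN$). This is precisely where the hypothesis $\beta\le 1/|\calN|$ (together with $\gamma\le|\calN|$) enters: the uniform vector $u$ with $u_i=1/|\calN|$ for $i\in\calN$ and $u_i=0$ for $i\in\calF$ is itself in $\calW$, and $\bar h(x)=\sum_i u_i h_i(x)$ is a canonical pivot in $\widetilde\calA(\beta,\gamma)$. My plan is a two-parameter IVT: consider weight vectors $\alpha(\mu,t)=t\bigl[\mu\alpha^{(1)}+(1-\mu)\alpha^{(2)}\bigr]+(1-t)u$ and choose $(\mu,t)\in[0,1]^2$ so that (i) $\sum_i\alpha_i(\mu,t)h_i'(y_\lambda)=0$ and (ii) $\alpha(\mu,t)\in\calW$. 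Taking $t$ small pushes each $\alpha_i(\mu,t)$ for $i\in\calN$ close to $1/|\calN|\ge\beta$, guaranteeing (ii); the residual one-dimensional slack in $\mu$ can then be used to enforce (i) by another application of IVT, exactly as in Lemmas~10 and~11 of~\cite{su2015fault}. The chief delicacy is to verify that the $(\mu,t)$–region where both (i) and (ii) hold is nonempty; this is where I expect the bulk of the technical work to sit, and where I would adapt the construction from \cite{su2015fault} essentially verbatim, since enlarging the simplex from $\calN$ to $\calV$ only relaxes the setting considered there.
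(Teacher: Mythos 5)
Your proposal is correct and follows the same route the paper intends: the paper offers no proof of Lemma~\ref{convex C1} beyond deferring to Lemmas~10 and~11 of \cite{su2015fault}, and your compactness argument for closedness (finite union of closed polytopes, subsequential limit of witness weights) together with the intermediate-value argument for convexity is exactly that template, adapted correctly to weights ranging over $\calV$ while the cardinality constraint counts only indices in $\calN$. The one delicacy you flag resolves more simply than your two-parameter search suggests: because the segments from the uniform pivot $u$ to $\alpha^{(1)}$ and from $u$ to $\alpha^{(2)}$ each lie entirely in the admissible weight set (every index in the witness set of $\alpha^{(j)}$ has weight at least $\beta$ at both endpoints, using $\beta\le 1/|\calN|$), it suffices to apply the one-dimensional intermediate value theorem to $\alpha\mapsto\sum_i\alpha_i h_i^{\prime}(y_\lambda)$ along whichever of these two segments brackets zero, which one of them must since $\sum_i\alpha_i^{(1)}h_i^{\prime}(y_\lambda)\ge 0\ge\sum_i\alpha_i^{(2)}h_i^{\prime}(y_\lambda)$ and the value at $u$ has a definite sign.
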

The proof of Lemma \ref{convex C1} is similar to the proof of Lemma 10 and Lemma 11 in \cite{su2015fault}.

However, points in $\widetilde{X}(\beta, \gamma)$ may be infeasible. Thus, we define
$$\widetilde{Y}(\beta, \gamma)~\triangleq ~ \cup_{p(x)\in \widetilde{\calA}(\beta, \gamma)} \, \arg\min_{x\in \calX} \, p(x).$$

\begin{lemma}
\label{constant c}
If $\beta\le \frac{1}{|\calN|}$ and $\gamma\le |\calN|$, the set $\widetilde{Y}(\beta, \gamma)$ is convex and closed.
\end{lemma}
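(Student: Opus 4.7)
The plan is to mirror the case analysis already carried out for Lemma \ref{convex B cons}, with $\widetilde{X}(\beta, \gamma)$ playing the role of $X(\beta, \gamma)$ and $\widetilde{\calA}(\beta, \gamma)$ playing the role of $\calA(\beta, \gamma)$ throughout. The only structural property of the function family that was actually exploited in that earlier proof is that the union of unconstrained optima over the family is convex and closed; here this is furnished by Lemma \ref{convex C1}. All other steps are general facts about convex closed subsets of $\reals$ together with the first-order condition $p^{\prime}(y) = 0$ that characterizes minimizers of a convex differentiable $p(\cdot)$, and these are equally valid for members of $\widetilde{\calA}(\beta, \gamma)$.

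Concretely, I would split into three cases according to the relationship between $\widetilde{X}(\beta, \gamma)$ and $\calX$. First, if $\widetilde{X}(\beta, \gamma) \subseteq \calX$, then every unconstrained optimum is already feasible, so $\widetilde{Y}(\beta, \gamma) = \widetilde{X}(\beta, \gamma)$ and Lemma \ref{convex C1} applies directly. Second, if $\widetilde{X}(\beta, \gamma) \cap \calX = \emptyset$, then by symmetry one may assume $\max \widetilde{X}(\beta, \gamma) < \min \calX$; for every $p \in \widetilde{\calA}(\beta, \gamma)$ the derivative $p^{\prime}$ is strictly positive on $\calX$, forcing $\arg\min_{x \in \calX} p(x) = \{\min \calX\}$, and $\widetilde{Y}(\beta, \gamma)$ is a singleton. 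Third, if $\widetilde{X}(\beta, \gamma) \cap \calX \ne \emptyset$ and $\widetilde{X}(\beta, \gamma) \not\subseteq \calX$, one has the case that requires the real work.

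For this third case I would partition $\widetilde{\calA}(\beta, \gamma)$ into three subfamilies $\calI_1, \calI_2, \calI_3$ consisting respectively of functions whose unconstrained minima lie entirely to the left of $\calX$, entirely to the right of $\calX$, or intersect $\calX$. For $p \in \calI_3$, I would verify that $\arg\min_{x \in \calX} p(x) = \arg\min_{x \in \reals} p(x) \cap \calX$ via the first-order condition on both sides, exactly as in equations (\ref{sub 1})--(\ref{sub 3}) of the Byzantine proof. For $p \in \calI_1 \cup \calI_2$, the constrained minimum reduces to the nearer endpoint of $\calX$, and that endpoint lies in $\widetilde{X}(\beta, \gamma) \cap \calX$ because the intersection is nonempty in this case. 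Assembling the three contributions exactly as in the display leading to (\ref{b21}) yields $\widetilde{Y}(\beta, \gamma) = \widetilde{X}(\beta, \gamma) \cap \calX$, which is convex and closed as the intersection of two convex closed sets.

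The main obstacle is notational rather than conceptual: one must carefully check that the endpoint contributions from $\calI_1$ and $\calI_2$ are genuinely absorbed into $\widetilde{X}(\beta, \gamma) \cap \calX$ during the assembly, which uses only the case hypothesis that the intersection is nonempty together with the order structure of $\reals$. Since the argument is a direct translation from the Byzantine proof, no new idea is required; only Lemma \ref{convex C1} (in place of Lemma \ref{convex B1}) and the convex differentiability of the members of $\widetilde{\calA}(\beta, \gamma)$ are invoked.
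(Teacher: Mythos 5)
Your proposal is correct and follows exactly the route the paper intends: the paper's own proof of Lemma \ref{constant c} is simply the statement that it is the same as the proof of Lemma \ref{convex B cons}, with $\widetilde{X}(\beta,\gamma)$, $\widetilde{\calA}(\beta,\gamma)$, and Lemma \ref{convex C1} substituted for $X(\beta,\gamma)$, $\calA(\beta,\gamma)$, and Lemma \ref{convex B1}. Your three-case analysis and the assembly yielding $\widetilde{Y}(\beta,\gamma)=\widetilde{X}(\beta,\gamma)\cap\calX$ is precisely that argument, spelled out.
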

The proof of Lemma \ref{constant c} is the same as Lemma \ref{convex B cons}.

%

%

\paragraph{}
\vspace*{8pt}\hrule

{\bf Algorithm 5 (under crash faults)}
\vspace*{4pt}\hrule

~

Steps to be performed by agent $i\in \calN[t]$ in the $t$-th iteration:
\begin{enumerate}

\item {\em Transmit step:} Transmit current state $x_i[t-1]$ on all outgoing edges.
\item {\em Receive step:} Receive values on incoming edges. These values form
multiset 
$\calR_i(t)$ of size at most $|N_i^{-}|$.\footnote{Some agents in $N_i^-$ may have crashed.}
Let $\ell_i(t)=|\calR_i(t)|$.

\item {\em Update step:}
%
Compute $h_i^{\prime}\pth{x_i[t-1]}$ -- the gradient of agent $i$'s objective function $h_i(\cdot)$ at $x_i[t-1]$.

Update its state as follows.
\begin{eqnarray}
x_i[t] ~ = P_X\left [~\frac{1}{\ell_i(t)+1}\pth{\sum_{j\in \calR_i(t)\cup \{i\}} x_j[t-1]} \,-\lambda[t-1]~ h_i^{\prime}(x_i[t-1])\right ].
\label{crash e_Z}
\end{eqnarray}
\end{enumerate}

~
\hrule

~

~
\begin{align}
v_i[t-1]&=\frac{1}{\ell_i(t)+1}\pth{\sum_{j\in \calR_i(t)\cup \{i\}} x_j[t-1]} ,\label{proj31}\\
e_i[t-1]&=P_X\left [v_i[t-1]-\lambda[t-1]~ h_i^{\prime}(x_i[t-1])\right ]-\pth{v_i[t-1]-\lambda[t-1]~ h_i^{\prime}(x_i[t-1])} \label{proj32}.
\end{align}

\begin{theorem}
\label{CT}
For a given graph $G(\calV, \calE)$ and $\beta \le  \frac{1}{(d_{\max}+1)^{n}}$, if each reduced graph (under crash faults) $\calH_c$ contains a source component with size at least $\gamma$, where $\gamma\ge 1$, then Algorithm 5 optimizes a function in $\widetilde{\calA}(\beta, \gamma)$.

\end{theorem}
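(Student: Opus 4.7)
The plan is to mirror, almost verbatim, the argument given for Theorem~\ref{BT} in the Byzantine section, with the matrix machinery of Section on iterative approximate crash consensus replacing the one for Byzantine consensus. First I would rewrite Algorithm~5 in matrix form. Using (\ref{proj31})--(\ref{proj32}) together with (\ref{update crash1})--(\ref{update crash2}), the update becomes
\begin{align*}
{\bf x}[t] \;=\; {\bf P}[t]\,{\bf x}[t-1] \;-\; \lambda[t-1]\,{\bf d}[t-1] \;+\; {\bf e}[t-1],
\end{align*}
where ${\bf d}[t-1]$ is the vector of local gradients, ${\bf e}[t-1]$ the vector of projection errors, and ${\bf P}[t]$ is row-stochastic. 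Iterating and using the backward product ${\bf \Psi}(t,r)$ yields
\begin{align*}
{\bf x}[t] \;=\; {\bf \Psi}(t,1){\bf x}[0] \;-\; \sum_{r=1}^{t}\lambda[r-1]\,{\bf \Psi}(t,r+1)\,{\bf d}[r-1] \;+\; \sum_{r=1}^{t}{\bf \Psi}(t,r+1)\,{\bf e}[r-1],
\end{align*}
the direct crash-model analog of (\ref{updates}). By Proposition~\ref{proj e} (whose proof transfers verbatim because the projection step is identical), $|e_i[t]| \le \lambda[t]L$.

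Next I would introduce the auxiliary sequence $y[t]$ by replacing, at each ``hypothetical freeze'' time $\bar t$, the gradients by zero and letting $s \to \infty$. Theorem~\ref{crash consensus} together with the column-structure statement in Lemma~\ref{lc1} imply that $[{\bf \Psi}(t,r)]_{\calN}$ converges to $\ones\,\pi^{\prime}(r)$, with $\pi(r)$ supported on $\calN$ and summing to $1$. This gives the exact counterparts of (\ref{yupdate})--(\ref{ydynamic}): $y[t]=\iprod{\pi(t)}{{\bf x}[t]}$ and
\begin{align*}
y[t+1] \;=\; y[t] \;-\; \lambda[t]\,\iprod{\pi(t+1)}{{\bf d}[t]} \;+\; \iprod{\pi(t+1)}{{\bf e}[t]}.
\end{align*}
Then I would prove the consensus lemma $\lim_{t\to\infty}|x_i[t]-y[t]|=0$ as in Lemma~\ref{consensus}, the only difference being that Theorem~\ref{convergencerate} must be replaced by the geometric decay $\delta_{(k-1)n+1}({\bf \Psi}(kn,1))\le (1-\zeta^n)^{k-f}$ derived in (\ref{upper bound 2}). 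Because $\lambda[t]\to 0$ and $\sum\lambda^2[t]<\infty$, standard manipulations show the three bounding terms analogous to (\ref{RHS1})--(\ref{RHS3}) all vanish.

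To finish, I would identify a valid function. For each $t\ge 0$, set
\begin{align*}
p_{t+1}(x) \;=\; \sum_{j\in \calN}\pi_j(t+1)\,h_j(x).
\end{align*}
By Lemma~\ref{lc2}, at least $\gamma$ of the weights $\pi_j(t+1)$ are bounded below by $\zeta^n=(d_{\max}+1)^{-n}\ge \beta$, so $p_{t+1}\in \widetilde{\calA}(\beta,\gamma)$. Rewriting the $y[t+1]$ recursion by adding and subtracting $\lambda[t]\sum_j\pi_j(t+1)h_j^{\prime}(y[t]) = \lambda[t]\,p_{t+1}^{\prime}(y[t])$ and using the $L$-Lipschitz property of $h_j^{\prime}$, I obtain, exactly as in (\ref{port 111}),
\begin{align*}
Dist\pth{y[t+1],\widetilde{Y}(\beta,\gamma)} \;\le\; \inf_{z\in \widetilde{Y}(\beta,\gamma)}\bigl|y[t]-\lambda[t]p_{t+1}^{\prime}(y[t])-z\bigr| + \lambda[t]L + \lambda[t]L(M[t]-m[t]),
\end{align*}
where $M[t]-m[t]\to 0$ by the consensus lemma. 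Since $\arg\min_{x\in\calX}p_{t+1}(x)\subseteq \widetilde{Y}(\beta,\gamma)$ and $\widetilde{Y}(\beta,\gamma)$ is convex and closed by Lemma~\ref{constant c}, the proof of Theorem~2 in \cite{su2015fault} applies verbatim to yield $Dist(y[t],\widetilde{Y}(\beta,\gamma))\to 0$, hence by the triangle inequality $Dist(x_i[t],\widetilde{Y}(\beta,\gamma))\to 0$ for each non-faulty $i$.

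The main obstacle is, as in the Byzantine case, making precise the claim that the row vectors of ${\bf \Psi}(t,r)$ converge to $\pi(r)$ fast enough that the series $\sum_{r=1}^{t+1}\lambda[r-1](1-\zeta^n)^{\lceil (t-r+1)/n\rceil}$ vanishes in the limit; this is not quite a literal citation since Section on crash consensus proves block-wise decay of $\delta_{(k-1)n+1}$ rather than entrywise distance to the limit. The missing piece is to upgrade (\ref{upper bound 2}) to an entrywise bound of the form $|{\bf \Psi}_{ij}(t,r)-\pi_j(r)|\le (1-\zeta^n)^{\lfloor (t-r)/n\rfloor - f}$ for $i\in \calN[r]$, which follows by passing to the limit inside the block-wise inequality via the telescoping identity ${\bf \Psi}_{ij}(t,r)-\pi_j(r)=\sum_{k\in\calN[r]}({\bf \Psi}_{ik}(t,s)-\pi_k(s)){\bf \Psi}_{kj}(s-1,r)$ used in the proof of Lemma~\ref{lc2}. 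Once this quantitative ergodicity bound is in hand, the rest of the argument is routine.
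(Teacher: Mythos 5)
Your proposal follows essentially the same route as the paper: the paper's own proof simply sets up the crash-model matrix representation ${\bf x}[t]={\bf \Psi}(t,1){\bf x}[0]-\sum_r\lambda[r-1]{\bf \Psi}(t,r+1){\bf d}[r-1]+\sum_r{\bf \Psi}(t,r+1){\bf e}[r-1]$, derives the recursion for the auxiliary sequence $y[\bar t]$, and then declares the remainder identical to the proof of Theorem~\ref{BT}. You have in fact supplied more detail than the paper does, correctly identifying the one step that is not a literal citation (the need for an entrywise ergodicity bound in place of Theorem~\ref{convergencerate}) and sketching a sound way to obtain it from the block-wise decay in (\ref{upper bound 2}).
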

 Recall that $d_{\max}=\max_{j\in \calV} d_j^-$. 
Let ${\bf d}[t]\in \reals^{n}$ be a real vector of with $d_i[t]$ being the gradient of function $h_i(\cdot)$ at $x_i[t]$, and let ${\bf e}[t]\in \reals^{n}$ be the vector of projection errors defined in (\ref{proj32}). In particular, $d_i[t]=0$ if $i\notin \bar{\calN}[t]$. Then we have
\begin{align*}
{\bf x}[t]&={\bf P}[t]{\bf x}[t-1]-\lambda[t-1]{\bf d}[t-1]+{\bf e}[t-1]\\
&={\bf \Psi} (t, 1){\bf x}[0]-\sum_{r=1}^{t}\lambda[r-1]{\bf \Psi} (t, r+1){\bf d}[r-1]+\sum_{r=1}^{t}{\bf \Psi} (t, r+1){\bf e}[r-1].
\end{align*}

Suppose that all agents (both non-faulty agents and faulty agents) cease computing $h_i^{\prime}(x_i[t])$ after some time $\bar{t}$, i.e., after $\bar{t}$ gradient is replaced by 0.

Let $\{\bar{\bf x}[t]\}$ be the sequences of local estimates generated in this case. We have
\begin{align*}
\bar{\bf x}[\bar{t}+s]&={\bf \Psi} (\bar{t}+s, 1){\bf x}[0]-\sum_{r=1}^{\bar{t}}\lambda[r-1]{\bf \Psi} (\bar{t}+s, r+1){\bf d}[r-1]+\sum_{r=1}^{\bar{t}}{\bf \Psi} (\bar{t}+s, r+1){\bf e}[r-1].
\end{align*}

For each $i\in \calV$, we have
\begin{align*}
\bar{\bf x}_i[\bar{t}+s]&=\sum_{k=1}^n {\bf \Psi}_{ik} (\bar{t}+s, 1)x_k[0]
-\sum_{r=1}^{\bar{t}}\lambda[r-1]\sum_{k=1}^n{\bf \Psi}_{ik} (\bar{t}+s, r+1)d_k[r-1]\\
&\quad +\sum_{r=1}^{\bar{t}}{\bf \Psi}_{ik} (\bar{t}+s, r+1)e_k[r-1].
\end{align*}
For each $i\in \calN$, as $s\diverge$, we get
\begin{align*}
\lim_{s\diverge}\, \bar{\bf x}_i[\bar{t}+s]&=\lim_{s\diverge}\,\sum_{k=1}^n {\bf \Psi}_{ik} (1)x_k[0]
-\sum_{r=1}^{\bar{t}}\lambda[r-1]\lim_{s\diverge}\,\sum_{k=1}^n{\bf \Psi}_{ik} (r+1)d_k[r-1]\\
&\quad +\sum_{r=1}^{\bar{t}}\lim_{s\diverge}{\bf \Psi}_{ik} (\bar{t}+s, r+1)e_k[r-1]\\
&=\sum_{k=1}^n \pi_k(1)x_k[0]-\sum_{r=1}^{\bar{t}}\lambda[r-1]\sum_{k=1}^n\pi_k(r+1)d_k[r-1]+\sum_{r=1}^{\bar{t}}\sum_{k=1}^n\pi_k(r+1)e_k[r-1]
\end{align*}
Let $y[\bar{t}]=\lim_{s\diverge}\, \bar{\bf x}_i[\bar{t}+s]$.
If, instead, all agents cease computing $h_i^{\prime}(x_i[t])$ after iteration $\bar{t}+1$,  then the identical value, denoted by $y[(\bar{t}+1)]$, equals
\begin{align}
\nonumber
y[(\bar{t}+1)]&=\sum_{k=1}^n \pi_k(1)x_k[0]-\sum_{r=1}^{\bar{t}+1}\lambda[r-1]\sum_{k=1}^n\pi_k(r+1)d_k[r-1]+\sum_{r=1}^{\bar{t}+1}\sum_{k=1}^n\pi_k(r+1)e_k[r-1]\\
&=y[\bar{t}]-\lambda[\bar{t}]\sum_{k=1}^n \pi_k(\bar{t}+2)d_k[\bar{t}]+\sum_{k=1}^n\pi_k(\bar{t}+2)e_k[\bar{t}].
\label{ydynamic crash}
\end{align}

The remaining proof of Theorem \ref{CT} is identical to the proof of Theorem \ref{BT}.

\begin{remark}
If the underlying graph is undirected, i.e., $G(\calV, \calE)$ is undirected, we can modify the update step in Algorithm 5 as follows: \\
Let $a_j[t-1]=\frac{1}{\max\left\{d_i^-+1,\, d_j^-+1\right\}}$ for each $j\in \calR_i(t)$, and let $a_i[t-1]=1-\sum_{j\in \calR_i(t)} a_j[t-1]$.\\
Update its state as follows.
\begin{eqnarray*}
x_i[t] ~ = ~P_X\left [\sum_{j\in \calR_i(t)}a_j[t-1]x_j[t-1]+ a_ix_i[t-1]-\lambda[t-1] h_i^{\prime}(x_i[t-1])\right ].
\end{eqnarray*}
Then we are able to show the modified algorithm optimizes the function
$$\frac{1}{|\calN|} \sum_{i\in \calN}h_i(x).$$

\end{remark}

\vskip 2\baselineskip

For the case when $G(\calV, \calE)$ is directed, we have the following alternative algorithm.
\paragraph{}
\vspace*{8pt}\hrule
~

{\bf Algorithm 6 (under crash faults)} for agent $j$ for iteration $t\ge 0$:
~
\vspace*{4pt}\hrule

\begin{list}{}{}
\item[{\bf Step 1:}]
Compute $h_j^{\prime}\pth{x_j[t-1]}$ -- the gradient of the local cost function $h_j(\cdot)$ at point $x_j[t-1]$, and send the estimate and gradient pair $(x_j[t-1], h_j^{\prime}\pth{x_j[t-1]})$ to on all outgoing edges. \\

~
\item[{\bf Step 2:}]
Let $\calR_j[t-1]$ denote the multi-set of tuples of the form $\pth{x_i[t-1], \, h_i^{\prime}(x_i[t-1])}$ received on all incoming edges 
as a result of step 1.\\
Note that the size of $\calR_j[t-1]$ at most $|N_i^{-}|$.\footnote{Some agents in $N_i^-$ may have crashed.}
Let $\ell_i(t)=|\calR_j[t-1]|$.



~

\item[{\bf Step 3:}] 
%
%
%
Update its state as follows.
\begin{align*}
x_j[t]=P_X\left [\frac{1}{\ell_j+1} \pth{\sum_{i\in \calR_j^1[t-1]\cup \{j\}}x_i[t-1]-\lambda[t-1] h_i^{\prime}(x_i[t-1])}\right ]
\end{align*}

\end{list}

\hrule

~

~
Recall that $\phi_i=|\calN_i^-\cap \calF|$ for each $i\in \calV$.
\begin{theorem}
\label{talgo CS}
Let $\tilde{\beta}=\min\{\frac{1}{\max_{i\in \calV} (d_i^-+1-\phi_i) }, \, \frac{1}{|\calN|}\}$, and $\tilde{\gamma}=\min_{i\in \calV} (d_i^-+1-\phi_i)$,
Algorithm 6 optimizes a function in $\widetilde{\calA}(\beta, \gamma)$.
\end{theorem}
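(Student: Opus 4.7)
\textbf{Proof plan for Theorem \ref{talgo CS}.} My plan is to adapt the argument used for Algorithm 2 (Theorem \ref{talgo BS}), exploiting the fact that the crash model is considerably cleaner than the Byzantine model: every message actually received is a genuine gradient/estimate pair from a non-faulty agent, so no trimming is needed and the matrix representation is already available from the crash-consensus section.

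First, I would recast the update of Algorithm 6 in matrix form. Since the weight $\frac{1}{\ell_j+1}$ multiplies both the incoming estimates and the incoming gradients, the iteration becomes
\begin{align*}
\mathbf{x}[t] \;=\; \mathbf{P}[t]\mathbf{x}[t-1]\;-\;\lambda[t-1]\,\mathbf{P}[t]\mathbf{d}[t-1]\;+\;\mathbf{e}[t-1],
\end{align*}
where $\mathbf{P}[t]$ is exactly the row-stochastic update matrix constructed in (\ref{update crash1})--(\ref{update crash2}) for iterative crash consensus, $\mathbf{d}[t-1]$ is the vector of gradients $h_i^{\prime}(x_i[t-1])$, and $\mathbf{e}[t-1]$ is the projection-error vector (bounded by $\lambda[t-1]L$ as in Proposition \ref{proj e}). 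The backward product analysis from Section 3.1 applies verbatim, so if $G(\calV,\calE)$ satisfies Assumption \ref{a2} I immediately get asymptotic consensus $M[t]-m[t]\to 0$ among non-faulty agents, together with the existence of limiting weight vectors $\pi(r)$ whose mass is supported on $\calN$.

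Second, I would identify the ``effective gradient'' driving agent $j$ in iteration $t$ as
\begin{align*}
\widetilde{g}_j[t-1]\;=\;\sum_{i\in \calR_j[t-1]\cup\{j\}}\frac{1}{\ell_j+1}\, h_i^{\prime}(x_i[t-1]),
\end{align*}
and then exhibit a function $p_t^{(j)}(x)=\sum_{i\in \calN}\alpha_i h_i(x)$ in $\widetilde{\calA}(\tilde\beta,\tilde\gamma)$ such that $\widetilde g_j[t-1]=\sum_{i\in\calN}\alpha_i h_i^{\prime}(x_i[t-1])$. Because every sender in $\calR_j[t-1]\cup\{j\}$ is non-faulty under the crash model, the weights are already a convex combination over $\calN$ with all non-zero values equal to $\frac{1}{\ell_j+1}$. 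The count of non-zero weights is $\ell_j+1\ge d_j^-+1-\phi_j\ge \tilde\gamma$ (since at most $\phi_j$ incoming neighbours ever crash), and each non-zero weight is at least $\tilde\beta$ by the choice $\tilde\beta=\min\{\frac{1}{\max_i(d_i^-+1-\phi_i)},\frac{1}{|\calN|}\}$. This plays the role of Lemma \ref{BS valid gradient} but without the two-step case analysis that was needed to handle Byzantine trimming.

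Third, I would mimic the potential-function argument in the proof of Theorem \ref{talgo BS}. Defining $z[t]=x_{j_t}[t]$ with $j_t\in\argmax_{j\in\calN}\mathrm{Dist}(x_j[t],\widetilde Y(\tilde\beta,\tilde\gamma))$ and using the row-stochasticity of $\mathbf{P}[t]$ and convexity of $\mathrm{Dist}(\cdot,\widetilde Y)$, I bound
\begin{align*}
\mathrm{Dist}(z[t+1],\widetilde Y)\;\le\;\inf_{y\in \widetilde Y}\bigl|x_{j_{t+1}^{\prime}}[t]-\lambda[t]p_{t+1}^{\prime}(x_{j_{t+1}^{\prime}}[t])-y\bigr|+L\lambda[t]+\lambda[t]L(M[t]-m[t]),
\end{align*}
where the three error terms come respectively from the valid-function representation, the projection error $|e_{j_{t+1}}[t]|\le\lambda[t]L$, and the $L$-Lipschitz bound $|h_i^{\prime}(x_{j_{t+1}^{\prime}}[t])-h_i^{\prime}(x_i[t])|\le L(M[t]-m[t])$. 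Since $\widetilde Y(\tilde\beta,\tilde\gamma)$ is convex and closed by Lemma \ref{constant c}, and since $p_{t+1}$ is a valid function, the remainder of the argument (standard telescoping with $\sum\lambda[t]=\infty$, $\sum\lambda[t]^2<\infty$, and $M[t]-m[t]\to 0$) is verbatim the proof of Theorem 2 in \cite{su2015fault}, and yields $\lim_{t\diverge}\mathrm{Dist}(z[t],\widetilde Y(\tilde\beta,\tilde\gamma))=0$, which combined with consensus gives the same conclusion for every $x_i[t]$, $i\in\calN$.

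The main obstacle is the time-varying nature of $\ell_j(t)$: as crashes occur, the averaging weights change and the valid-function identification in Step 2 produces a \emph{different} element of $\widetilde\calA(\tilde\beta,\tilde\gamma)$ at every iteration. I will have to verify that the $\tilde\beta$-$\tilde\gamma$ bounds hold \emph{uniformly in $t$}, which is why the min/max over $i\in\calV$ enters the definition of $\tilde\beta$ and $\tilde\gamma$: it guarantees that no matter how many incoming neighbours of $j$ are still alive at iteration $t$, the two constraints in Definition \ref{validfCrash} remain satisfied. Once that uniformity is in hand, the rest is a straightforward transcription of the Byzantine argument.
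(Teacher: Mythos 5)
Your overall route --- writing Algorithm 6 in matrix form via the crash-consensus matrices ${\bf P}[t]$, proving a crash analogue of Lemma \ref{BS valid gradient} that identifies $\widetilde g_j[t-1]$ with the gradient of a member of $\widetilde{\calA}(\tilde{\beta},\tilde{\gamma})$, and then running the distance-to-$\widetilde{Y}$ recursion of Theorem \ref{talgo BS} --- is exactly what the paper intends (its entire proof is the sentence ``similar to the proof of Theorem \ref{talgo BS}''), and your first and third steps are sound. The gap is in the second step, at precisely the point you flag as ``the main obstacle'' and then wave away. The effective gradient at agent $j$ assigns weight exactly $\frac{1}{\ell_j(t)+1}$ to each of the $\ell_j(t)+1$ contributors, and $\ell_j(t)$ is \emph{not} bounded above by $d_j^--\phi_j$: a crash-faulty incoming neighbour behaves correctly until the moment it crashes, so at early iterations $\ell_j(t)=d_j^-$ and every weight equals $\frac{1}{d_j^-+1}$. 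This can be strictly smaller than $\tilde{\beta}=\min\{\frac{1}{\max_{i\in\calV}(d_i^-+1-\phi_i)},\frac{1}{|\calN|}\}$; for instance on the complete graph with $\phi\ge 1$ one has $d_j^-+1=n$ while both $\max_i(d_i^-+1-\phi_i)$ and $|\calN|$ equal $n-\phi$. At such iterations the function $p_t^{(j)}$ you construct need not lie in $\widetilde{\calA}(\tilde{\beta},\tilde{\gamma})$, so the inclusion $\argmin_{x\in\calX} p_t^{(j)}(x)\subseteq\widetilde{Y}(\tilde{\beta},\tilde{\gamma})$ that drives the recursion is unavailable. The min/max over $i\in\calV$ in the definitions controls the \emph{count} of non-faulty contributors (your $\tilde{\gamma}$ bound $\ell_j+1\ge d_j^-+1-\phi_j\ge\tilde{\gamma}$ is correct) but not the \emph{size} of the individual weights, so your uniformity assertion is false as stated.

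The repair is short and should be made explicit. Only finitely many crashes occur, so there is an iteration $T_0$ after which $\ell_j(t)+1=d_j^-+1-\phi_j\le\max_{i\in\calV}(d_i^-+1-\phi_i)$ for every surviving $j$, whence each weight is at least $\tilde{\beta}$ from $T_0$ onward; since the telescoping argument using $\sum_t\lambda[t]=\infty$ and $\sum_t\lambda^2[t]<\infty$ is insensitive to any finite prefix, running it from $T_0$ recovers the theorem with the stated $\tilde{\beta}$. Alternatively, weaken $\tilde{\beta}$ to $\min\{\frac{1}{d_{\max}+1},\frac{1}{|\calN|}\}$, after which your uniform-in-$t$ claim becomes true at every iteration. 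Either choice completes the proof; as written, neither is made.
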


The proof of Theorem \ref{talgo CS} is similar to the proof of Theorem \ref{talgo BS}.

\section{Discussion}
We study the problem of constrained distributed optimization in multi-agent networks when some of the computing agents may be faulty. In this problem, the system goal is to have all the non-faulty agents collectively minimize a global objective given by weighted average of local cost functions, each of which is initially known to a non-faulty agent only. We focus on the family of algorithms considered in \cite{su2015fault}, where only local communication and minimal memory carried across iterations are allowed.
We generalize our previous results on fully-connected networks and unconstrained optimization  \cite{su2015fault}  to arbitrary directed networks and constrained optimization.  As a byproduct, we provide a matrix representation for iterative approximate crash consensus. The matrix representation allows us to characterize the convergence rate for crash iterative crash consensus.

In terms of solvable $\gamma$, Algorithm 1 (Algorithm 5) relies on the size of the source components of individual reduced graphs. In contrast, the solvable $\gamma$ by Algorithm 2 (Algorithm 6) only relies on the incoming degree of each agent. We believe that there will be a way to combine the analysis for Algorithm 1 and Algorithm 2 to get a better bound on $\gamma$. 
%
We leave this problem for further exploration.

%
%
%
%
%
%
%



\bibliographystyle{abbrv}
\bibliography{PSDA_DL}

\end{document}